\newtheorem{remark}{Remark}[section]
\newtheorem{proposition}{Proposition}[section]
\newtheorem{lemma}{Lemma}[section]
\newtheorem{thm}{Theorem}[section]
\newcommand{\vectornorm}[1]{\left|\left|#1\right|\right|}
\newcommand{\var}{\text{Var}}
\newcommand{\kl}{\mbox{KL}}
\newcommand{\obs}{_{\text{obs}}}
\newcommand{\lp}{\ensuremath{\left(}}
\newcommand{\rp}{\ensuremath{\right)}}
\DeclareMathOperator*{\argmin}{arg\,min}
\newcommand{\pss}{z}
\newcommand{\prior}{\pi}
\newcommand{\priorb}{\pi^b}
\newcommand{\postb}{p_{\priorb}}
\newcommand{\Wm}{W_2(m)}
\newcommand{\Wmflip}{\widetilde{W}_2(m)}
\newcommand{\norm}[1]{\ensuremath{\mathcal \| #1 \|}}
\begin{document}

\title{Quantifying Observed Prior Impact}
\author{David E. Jones\thanks{Department of Statistics, Taxes A \& M University, College Station, TX, U.S.A.}, Robert N. Trangucci \thanks{Department of Statistics, University of Michigan, Ann Arbor, MI, U.S.A.}, Yang Chen\thanks{Department of Statistics and Michigan Institute for Data Science (MIDAS), University of Michigan, Ann Arbor, MI, U.S.A. Address for correspondence: Yang Chen, 445E West Hall, 1085 South University Avenue, Ann Arbor, MI 48109, USA. E-mail: ychenang@umich.edu} }

\maketitle

\begin{abstract}
We distinguish two questions (i) how much information does the prior contain? and (ii) what is the effect of the prior? Several measures have been proposed for quantifying effective prior sample size, for example \citet{clarke1996} and \citet{Morita:2008}. However, these measures typically ignore the likelihood for the inference currently at hand, and therefore address (i) rather than (ii). Since in practice (ii) is of great concern, \citet{reimherr2014} introduced a new class of effective prior sample size measures based on prior-likelihood discordance. We take this idea further towards its natural Bayesian conclusion by proposing measures of effective prior sample size that not only incorporate the general mathematical form of the likelihood but also the specific data at hand. Thus, our measures do not average across datasets from the working model, but condition on the current observed data. 
Consequently, our measures can be highly variable, but we demonstrate that this is because the impact of a prior can be highly variable. 
Our measures are Bayes estimates of meaningful quantities and well communicate 
the extent to which inference is determined by the prior, or framed differently,  the amount of effort saved due to having prior information. 
We illustrate our ideas through a {number of examples including} a Gaussian conjugate model (continuous observations), a Beta-Binomial model (discrete observations), and a linear regression model (two unknown parameters). Future work on further developments of the   methodology and an application to astronomy are discussed at the end. 
\end{abstract}

\textbf{Keywords:} {Bayesian inference,  effective  prior  sample size, statistical information, Wasserstein distance, Bayes estimate, sensitivity analysis}

\section{Motivation}\label{sec:intro}


Prior knowledge and assumptions are central to many statistical problems, and
in practice it is  important to assess their impact  on the final inference. For example,  \citet{chen2019calibration} propose a Bayesian analysis of a multi-telescope astronomical dataset, and highlight that  scientific prior distributions provide key information about each of the specific instruments and play a substantial role in the final inference. For scientists it is important to understand the role of prior distributions in such scenarios, e.g., do the priors associated with one particular instrument have a much greater impact on the inference than those for other instruments? 

One appealing and interpretable way to assess prior impact is to provide a measure of the {\it effective} prior sample size (EPSS), i.e., the approximate number of observations to which the information in the prior is equivalent.   
Gaussian conjugate models offer a canonical example: with observed data $y_i\overset{iid}{\sim} N(\mu,\sigma^2)$, for $i=1,\dots,n$, and  conjugate prior distribution $\mu \sim N(\mu_0,\sigma^2/r)$, the posterior distribution of $\mu$ is $N(w_n \bar{y}_n + (1-w_n)\mu_0,\sigma^2/(n+r))$, where $w_n=n/(n+r)$. Based on  the posterior variance denominator $n+r$, the effect of the prior appears to be  equivalent to that of $r$ samples, so we say that the EPSS  is $r$. However, this formulation faces  two challenges: (a) it is not immediately clear how to generalize  beyond conjugate models, and more importantly, (b) when $\mu_0$ is arbitrarily different to $\bar{y}_n$, its impact on the posterior mean is arbitrarily large, and is therefore clearly not equivalent to that of $r$ samples. 

Effective prior sample size (EPSS) measures have gained substantial attention in the literature, and a number of strategies have been proposed in response to the two challenges above, e.g.,~\citet{clarke1996}, \citet{Morita:2008}, and~\citet{morita2010evaluating}. Most of the  strategies proposed rely on a comparison between the actual prior $\pi$ and a low-information or baseline prior $\pi^b$, e.g., the improper prior $\pi^b(\mu) \propto 1$ would be a natural choice for the baseline prior in the Gaussian conjugate model above. This comparative information approach is necessary, because there is no universal ``non-informative" prior against which to measure prior impact, and Bayesian inference cannot be conducted without a prior. Early generalizations along these lines sought to match the prior $\pi$ to a hypothetical posterior distribution constructed using the baseline prior $\pi^b$ and some hypothetical previous samples, that is, they  interpreted the  prior $\pi$ as the posterior from a previous analysis. The EPSS is then defined as the number of observations used in the hypothetical posterior distribution, e.g., \citet{clarke1996} and \citet{Morita:2008}. 
These approaches successfully generalize the notion of EPSS, but do not address concern (b) regarding the real impact of the prior when the data mean and prior mean differ substantially. Indeed,  these methods  do not consider the observed data or the real posterior distribution  at all. 

\citet{reimherr2014} instead suggested minimizing the discrepancy between {\it two} posterior distributions, one using the real prior $\pi$ and the other using the baseline prior $\pi^b$. In this case the EPSS is defined as the difference in the number of samples used by the two posteriors.  Similar ideas have also been proposed in slightly different contexts, e.g., see \citet{lin2007information} and \citet{wiesenfarth2019quantification}.  The \citet{reimherr2014} method offers many improvements over early approaches and goes beyond simply capturing the variance of the prior; it also partially quantifies the impact of the prior {\it location}. However, it averages over the data {using the bootstrap}, and therefore does not quantify the impact of the prior for the specific analysis carried out with the observed data at hand, which is of most interest in practice.   

In this paper, we follow a similar approach but propose a new EPSS measure that addresses the above limitations by conditioning on the observed data, and thereby directly quantifies the prior impact for the actual analysis performed. This modification was inspired by the  insight of \citet{efron1978assessing} that observed Fisher information is sometimes more useful than  expected Fisher information. We furthermore provide an explicit definition of EPSS in terms of future observations, and thus identify the estimand of interest. This gives our EPSS measures real-world interpretations and disentangles the tasks of defining and approximating the EPSS. Our method also has additional appealing properties, including a Bayes estimate interpretation and no lower limits on the observed data sample size $n$. The latter is important because prior impact is often most pronounced, and therefore of most interest, when the sample size is small. In contrast, \citet{reimherr2014} require $n$ to be large because their method relies on the bootstrap and an accurate  estimate of the ``true parameter" value, see Section \ref{sec:review} for a review. In summary, 
our approach represents a substantially improved method for quantifying prior impact in practice, and its real-world interpretation could make it a valuable tool for clearly reporting the contribution of priors in Bayesian analyses.

There are important situations and complications that we have not yet addressed, which will be considered in future work. For example,  although our method is general (as will be discussed in Section~\ref{sec:discussion}), in this paper we {focus on} simple conjugate models {to gain intuition}, whereas more general hierarchical models often play an important role in real analyses. Some earlier approaches mentioned above have been extended to more complex hierarchical models, e.g.,  \cite{morita2012prior} extended the ideas of \citet{Morita:2008} to  three-layer models. Our approach as presented here can naturally be applied in such contexts to obtain an overall EPSS, but the most appropriate way to summarize the impact of the prior information related to each individual parameter (as opposed to that related to all the parameters) needs  to be studied further, especially in the multi-level prior context. 

There are alternatives to EPSS  for assessing the impact of a prior, including  sensitivity analysis and direct quantification of  prior information. While also useful, these alternatives have a number of drawbacks. 
Sensitivity analysis  has the advantage of focusing on the actual analysis at hand (like our approach), but is typically somewhat ad hoc, difficult to summarize, and only related to very specific aspects  of the inference. In contrast, prior information measures typically consider the prior in isolation and do not reveal much about the specific analysis performed. It can therefore be challenging for researchers to quantify how much inference is driven by prior information, as opposed to the data at hand. In summary, we focus on EPSS because it is  a very interpretable measure of prior impact, and has the potential to be widely and consistently used, and may thereby provide a much needed assessment of the role of priors in the many studies being performed.   
 
This paper is organized as follows. Section \ref{sec:background} reviews existing approaches of measuring EPSS, discusses the importance of observed data, and defines our EPSS measure. Section \ref{sec:illustration} provides numerical results in the context of a Gaussian conjugate model example. Section \ref{sec:justification_theory} provides intuition and theory supporting our method. Section  \ref{subsec:nonnormal_numerical} provides additional numerical results in the form of  Beta-Binomial and regression model examples. Section \ref{sec:discussion} provides further insights and discussion of open problems. Proofs are given in the Appendix. 


\section{Defining prior information and prior impact}\label{sec:background}

\subsection{Existing methods of measuring effective prior sample size} \label{sec:review} Suppose that $\pi$ is our prior distribution for a  collection of unknown parameters of interest $\theta \in \Theta$. Let $\priorb$ be a baseline prior and  $\boldsymbol{y}=\{y_1,\dots,y_n\}$ be hypothetical previous data with probability density $f(\boldsymbol{y}|\theta)$. Imagine that our real prior $\prior$ is the posterior distribution  $\postb(\cdot|\boldsymbol{y}) \propto f(\boldsymbol{y}|\cdot)\priorb(\cdot)$. Under this formulation,  \citet{clarke1996} considers
\begin{align}
\argmin_{\boldsymbol{y}\in  \mathcal{Y}}  \kl(\prior(\cdot),\postb(\cdot|y)),
\label{eqn:clarke}
\end{align}
where $\kl(g,h)$ denotes the Kullback-Leibler (KL) divergence $\int_\Theta \log(g(\theta)/h(\theta)) g(\theta)d\theta$, and $\mathcal{Y}$ is the support of $f$ (for simplicity we assume $\mathcal{Y}$ to be the same for all $\theta\in\Theta$).  In words, the approach of \citet{clarke1996} is to find the hypothetical dataset $\boldsymbol{y}^*$ that when combined with the baseline prior $\priorb$ produces the posterior distribution $q_{\priorb}(\cdot|\boldsymbol{y})$ with minimum KL-divergence from our true prior $\prior$. The effective prior sample size (EPSS from hereon) can then be quantified as the number of individual observations contained in $\boldsymbol{y}^*$.  Note that the density $f$ is a user specified hypothetical distribution for prior data, and is not necessarily the same as the model for any actual data.

The above approach is distinguished from most other methods (such as those mentioned below) in that it gives a specific dataset $\boldsymbol{y}^*$ which represents the information in the prior. An advantage of this approach is that $\boldsymbol{y}^*$ can potentially capture other aspects of the information contained in $\prior$ in addition to the EPSS. However,  $\boldsymbol{y}^*$ has no concrete relation to the likelihood or data at hand, which we consider to be a drawback, at least when question (ii) in the abstract is of primary interest.

\citet{Morita:2008} adopt a similar approach but measure distance using the difference of the second derivative of the log densities rather than KL divergence. Furthermore, to avoid the peculiarity of reporting a specific dataset $\boldsymbol{y}^{*}$, and to take account of uncertainty regarding the dataset, they take an expectation over $\boldsymbol{y}$, i.e., they compute $E[\frac{\partial^2}{\partial \theta^2}\log \postb(\cdot|Y)]$. This treatment of the hypothetical previous data $\boldsymbol{y}$ may be preferable to that of \citet{clarke1996}, but for the purpose of addressing (ii) the \citet{Morita:2008} method suffers from the same fundamental problem of not taking the likelihood of any actual data into account.

To address this limitation \citet{reimherr2014} introduced  the notion of prior-likelihood discordance and incorporated it in their measures of EPSS. The key change they proposed was to compare {\it two} posterior distributions rather than comparing a prior to a (hypothetical) posterior. To make the comparison, under each prior $\pi$, they consider  the expected mean squared error
when a draw from the posterior is used to estimate the true parameter $\theta_T$, i.e.,
\begin{align*}
U_{\prior,\theta_T}(I) = E_{\theta_T}[\mbox{MSE}(\prior,Y_I)] = \int_{\mathcal{Y}_I} \mbox{MSE}(\pi,\boldsymbol{y}_I) f(\boldsymbol{y}_I|\theta_T)d\boldsymbol{y}_I,
\end{align*}
where
\begin{align*}
\mbox{MSE}(\pi,\boldsymbol{y}_I) = \var_\prior(\theta|\boldsymbol{y}_I) + \left(E_\prior[\theta|\boldsymbol{y}_I]-\theta_T\right)^2.
\end{align*}
\citet{reimherr2014} use $I$ to indicate the information contained in the hypothetical data $\boldsymbol{y}_{I}$ and in their main examples  it represents the sample size (because the samples are assumed to be independent and identically distributed). Let $n$ be the sample size of the real data, denoted $\boldsymbol{y}\obs$. For hypothetical sample size $k\ll n$, \citet{reimherr2014} estimate the EPSS of an informative prior $\prior$ relative to a baseline prior $\priorb$ by the smallest integer $z$ such that
\begin{align*}
\hat{U}_{\prior,\hat{\theta}}(k) \approx \hat{U}_{\prior_b,\hat{\theta}}(k+z),
\end{align*}
where $\hat{\theta}$ is the maximum likelihood estimate of $\theta_T$ based on $\boldsymbol{y}\obs$, and $\hat{U}$ is computed by  averaging over datasets of size $k$ drawn from the empirical distribution (hence the constraint that $k\ll n$). By a slight abuse of terminology we refer to their averaging method as bootstrapping. One of the  novel aspects of this formulation is that $z$ is allowed to be negative. This is helpful when, for example, we are trying to assess if $\prior$ is a low-information prior and therefore might feasibly have a lower EPSS than $\pi^b$.

The approach of \citet{reimherr2014} described above has a number of advantages over earlier methods: (i) it focuses on the impact of the prior on posterior inference; (ii) it incorporates the likelihood, although for reduced data size;  and (iii) it proposes a potentially reasonable method for generating datasets to combine with $\prior$ and $\priorb$ (bootstrapping). There are however still some limitations of their approach. Firstly, their method averages over the data and therefore their measure of EPSS does not tell us what the impact of the prior is for the observed data $\boldsymbol{y}\obs$, which is of most interest in practice. Secondly, their approach relies on bootstrapping the data and estimating $\theta_T$ which both require $n$ to be large, but the impact of a prior is usually greatest and of most interest when $n$ is small. Lastly, their use of MSE is not necessarily the best way of quantifying the difference between two posterior distributions and therefore the prior impact. Indeed, there is in fact no reason to introduce the notion of a true parameter value $\theta_T$ in order to measure prior impact, a point we revisit in Section \ref{sec:gaussian}.  

\subsection{General formulations of prior sample size} \label{sec:general_form}

We now formulate general approaches for measuring EPSS that incorporate the key ideas in the previous work outlined in Section \ref{sec:review}  and in the the broader literature. Firstly, most approaches for measuring EPSS involve minimizing a distance or divergence between a probability density $q_\prior$ constructed using the user's prior and a probability density $q_{\priorb}$ constructed using a baseline prior. In some cases, the EPSS denoted $\pss$ is  directly computed based on this minimization. That is,
\begin{align}
\pss= h\left(\argmin_{(\boldsymbol{x},\boldsymbol{y})\in \mathcal{X}\times \mathcal{Y}}D(q_{\priorb}(\cdot|\boldsymbol{x}),q_\prior(\cdot|\boldsymbol{y}))\right),
\label{eqn:dist}
\end{align}
where $D$ is a distance or divergence between the probability densities  $q_\prior$ and $q_{\priorb}$, $\boldsymbol{x}$ and $\boldsymbol{y}$ denote data, and $h$ is a function of the optimal datasets $(\boldsymbol{x}^*,\boldsymbol{y}^*)$ which minimizes the $D$. This formulation is very general in that there are various choices for $h$ and $D$ and the data ($\boldsymbol{x}$ and $\boldsymbol{y}$) may be hypothetical, real, or re-sampled. In some cases only $\boldsymbol{x}$ is optimized, and $\mathcal{Y}$ is degenerate. For example, we may set $\mathcal{Y}=\{\boldsymbol{y}\obs\}$ (i.e., fix $q_\prior(\cdot|\boldsymbol{y})$ to be the posterior density $p_\prior(\cdot|\boldsymbol{y}\obs)$), or set $\mathcal{Y}=\emptyset$, where $\emptyset$ denotes the empty set (i.e., fix $q_\prior$ to be the prior $\prior$). The method of \citet{clarke1996} is an example of a EPSS measure that is defined according to (\ref{eqn:dist}).

Despite the flexibility of (\ref{eqn:dist}), from a statistical perspective it is not ideal to identify the EPSS $z$ with optimized datasets $\boldsymbol{x}^*$ and $\boldsymbol{y}^*$. If the datasets $\boldsymbol{x}$ are $\boldsymbol{y}$ are unknown then our uncertainty about them should be taken into account. 
Thus, most approaches to measuring EPSS do not directly consider (\ref{eqn:dist})  but rather an average quantity
\begin{align}
E[D(q_\prior(X),q_{\priorb}(Y))] = \int_{\mathcal{X}\times\mathcal{Y}}D(q_{\priorb}(\boldsymbol{x}),q_\prior(\boldsymbol{y})) f(\boldsymbol{x},\boldsymbol{y})d(\boldsymbol{x},\boldsymbol{y}),
\label{eqn:dist_average}
\end{align}
where $f$ denotes the joint density of $(\boldsymbol{x},\boldsymbol{y})$, and depends on the choice of $\mathcal{X}$ and $\mathcal{Y}$.
Optimization is used in selecting $\mathcal{X}\times\mathcal{Y}$ and it is the result of this final step that determines the EPSS. For example, we may have $\mathcal{X}\times\mathcal{Y}=  \mathbb{R}^m \times \emptyset$, where the $m$ minimizing (\ref{eqn:dist_average}) is to be determined, and $\pss=h(m)$. The approaches proposed by \citet{Morita:2008} and \citet{reimherr2014} are examples of methods that minimize (\ref{eqn:dist_average}).

Another work closely connected to (\ref{eqn:dist}) and (\ref{eqn:dist_average}) is that of
 \cite{ley2017}, 
which provides general upper and lower bounds for the 1-Wasserstein distance between the posterior distribution 
under a baseline prior and the posterior distribution under an informative prior. 
However,  \cite{ley2017} do not discuss EPSS directly, and their method  is limited to one-dimensional parameter spaces. 

\subsection{Observed effective prior sample size} In contrast to the measures discussed in Sections \ref{sec:review} and \ref{sec:general_form}, we propose that measures of EPSS should condition on the observed data at hand, {and in Section \ref{sec:definition} define the (posterior) mean observed prior effective sample size (MOPESS).} In our experience, applied statisticians are most interested in the impact of the prior distribution on the specific analysis that is being performed and reported, i.e., given (conditioning on) the observed data. Consequently, our proposed measures of EPSS (detailed below in Section \ref{sec:definition}) are closely related to the notion of {\it observed information}. 



Averaging is necessary when measuring the {\it self-information} of a random variable because the only uncertainty (and therefore potential for information) regards the value of the random variable. However, in the context of statistical inference, we are usually interested in observed mutual information, i.e., what can be learnt about the unknown parameter $\theta$ from the {\it observed} data. Therefore, expressing information as an average with respect to the data is often not the best option. For example, \citet{efron1978assessing} demonstrated that, in practice, observed Fisher information is often of greater relevance than expected Fisher information.

Thus, 
rather than computing (\ref{eqn:dist_average}),
our EPSS measures seek to capture the observed impact of a prior by conditioning on the observed data $\boldsymbol{y}\obs$, which from now on we usually denoted  $\boldsymbol{y}$ or $\{y_1,\dots,y_n\}$. In addition to $\boldsymbol{y}$,  we consider the hypothetical expanded dataset $\boldsymbol{x}=\boldsymbol{y}\cup\{x_{n+1},\dots,x_{n+r}\}$, where $r\in\mathbb{Z}_{\geq0}$ and $x_{n+1},\dots,x_{n+r}$ are future samples. It is still necessary to average over the unknown future samples, but this is done conditioning on the observed data and addresses real uncertainties regarding the future samples, as opposed to artificial uncertainties related to the observed data. 

\subsection{Definition of the observed prior sample size} \label{sec:definition}

Here we define EPSS from the perspective of observed information discussed above.  For a given $m=n+r$, the hypothetical expanded data set is   $\boldsymbol{ x}^{(m)}=\{x_1^{(m)},\dots,x_m^{(m)}\}$ $\equiv \boldsymbol{y}\cup\{x_{n+1}^{(m)},\dots,x_{m}^{(m)}\}$, where  the new superscripts indicate an association with the specific value of $m$, and will become useful in what follows. If $\boldsymbol{x}^{(m)}$ was known for all $m$ then intuitively we would want to choose the $m$ which  minimizes the distance between the two posterior distributions $q_{\pi}(\cdot|\boldsymbol{y})$ and $q_{\pi^b}(\cdot|\boldsymbol{x}^{(m)})$, where $\pi$ is our real prior whose EPSS is to be measured and $\pi^b$ is the baseline prior.  
Some care must be taken regarding the population from which $\boldsymbol{x}^{(m)}$
is assumed to originate. A researcher  who does not want to use $\pi$ would not  collect many additional samples and then attempt to find the $m$ to minimize the distance between their posterior and ours. Instead, they would simply collect a fixed number of additional samples $r=m-n$ (if they decided that more data were needed). 
Therefore the correct question to ask to quantify the EPSS of $\pi$ is as follows: if there are multiple independent researchers each of whom chooses a different value of $r$,  then whose inference 
will most closely agree with our inference? In other words, for the EPSS to correspond to normal scientific procedure, the hypothesized future samples must have some form of conditional independence across values of $m$ (to be specified shortly, see (\ref{eqn:ppd2}) below). 
This means that we cannot assume that $\boldsymbol{x}^{(m+1)}=\boldsymbol{x}^{(m)}\cup\{x_{m+1}\}$, so we write the full collection of expanded datasets as $\boldsymbol{x}_{\text{\tiny L}}$ $=\{\boldsymbol{y},\boldsymbol{y} \cup \{x_{n+1}^{(n+1)}\},\boldsymbol{y} \cup\{x_{n+1}^{(n+2)},x_{n+2}^{(n+2)}\},\dots,\boldsymbol{y} \cup\{x_{n+1}^{(L)},\dots,x_L^{(L)}\}\}$ $= \{\boldsymbol{x}^{(n)},\boldsymbol{x}^{(n+1)},\dots,\boldsymbol{x}^{(L)}\}$, where $L$ is the maximum feasible value of $m$, or in other words $L-n$ is the maximum feasible magnitude of the EPSS associated with $\pi$. In summary, for a given realization of $\boldsymbol{x}_{\text{\tiny L}}$, the EPSS essentially corresponds to the $m$ such that  $\boldsymbol{x}^{(m)}\in \boldsymbol{x}_{\text{\tiny L}}$ minimizes the distance between $q_{\pi}(\cdot|\bf{y})$ and $q_{\pi^b}(\cdot|\boldsymbol{x}^{(m)})$. 
To formally complete this specification 
a number of further details are needed, which we now discuss.  

Firstly, there are various measures of discrepancy  between two probability distributions  which could be used under the general framework given in Section~\ref{sec:general_form}. For example, Kullback-Leibler (KL) divergence was adopted by~\citet{clarke1996} to quantify prior-posterior discrepancy, and \citet{reimherr2014} used mean squared error as a discrepancy measure. There are also a number of other options such as   more general $f$-divergences~\citep{ali1966general,sason2016f}, of which KL divergence is a special case.  In this paper, we adopt the Wasserstein distance, for reasons to be explained shortly, but our methodology is  general and can use any discrepancy measure. For $p\geq 1$, let $\mu$ and $\nu$ be probability measures defined on $\mathcal{M}$ with finite $p^{\rm th}$ moment. The $p${\it-}Wasserstein distance between $\mu$ and $\nu$ is defined as
\begin{equation}
W_p(\mu, \nu) = \left(\inf _{{\gamma \in \Gamma (\mu ,\nu )}}\int _{{\mathcal{M} \times
\mathcal{M}}}d(x,y)^{{p}}\,{\mathrm  {d}}\gamma (x,y)\right)^{{1/p}},
\end{equation}
where $\Gamma(\mu,\nu)$ denotes  measures on $\mathcal{M}\times\mathcal{M}$ with marginals $\mu$ and $\nu$ respectively, and $d$ is a metric on $\mathcal{M}$. In the case of multivariate Gaussian distributions the $2${\it-}Wasserstein distance can be computed in closed form and more generally there are efficient software packages for approximating it given posterior samples, e.g., \citet{wasser}. The Wasserstein distance is widely used in statistics, e.g., in theoretical studies of Bayesian asymptotics (e.g. \citet{nguyen2016borrowing}), scalable Bayesian inference (e.g. \citet{srivastava2015wasp,minsker2017robust}) and variational inference (e.g.~\citet{ambrogioni2018wasserstein}).  
{It  enjoys a variety of desirable properties, making it a good choice for measuring the distance between the posterior distributions considered in our EPSS measures, as we now explain.} Intuitively, the Wasserstein distance captures the amount of ``effort'' needed to transform one probability distribution to another probability distribution, if we imagine the two probability densities as two piles of sands. From the prior influence perspective, this makes Wasserstein distance an appealing measure of  the closeness of two posterior distributions: we seek to quantify the amount of extra ``effort'' (in terms of extra samples) that is needed to transform the baseline prior posterior distribution into the posterior distribution under our prior $\pi$. In the current context a successful transformation reproduces  the posterior distribution under $\pi$ in terms of variance, location, and tail behavior, which are all criteria  well measured  by the Wasserstein metric. For example, in \citet{ley2017}, the Wasserstein distance is the adopted metric for assessing prior influence on Bayesian inference. 
In Appendix~\ref{sec:app_distance}, we demonstrate our method using an alternative discrepancy measure (namely KL divergence) in a Gaussian conjugate model example, and thereby further illustrate both the generality of our approach by giving sensible results of EPSS based on the alternative measure and point out the connections between the two discrepancy measures. 

Secondly, we must allow for the possibility  that our prior $\pi$ is in fact {\it less} informative than the baseline prior $\pi^b$. In our previous discussions, we take for granted that our 
prior $\pi$ contains more ``information'' than the baseline prior and therefore that the baseline prior should be supplemented by extra samples. {However, in practice, the prior $\pi$ could potentially have less impact on the analysis than the baseline prior $\pi^b$. This happens when the prior $\pi$ is more diffuse than the baseline $\pi^b$ or is similarly diffuse but has  greater location agreement with the data  than $\pi^b$.} 
Thus, in addition to combining extra samples with $\pi^b$, it is natural to also consider the alternative of combining extra samples with $\pi$ and finding the minimum of the distance $W_2\left(q_{\pi^b}(\cdot|\boldsymbol{y}), q_{\pi}(\cdot|\tilde{\boldsymbol{x}}^{(m)})\right)$ across values of $m$. Here $\tilde{\boldsymbol{x}}_m \in \tilde{\boldsymbol{x}}_{\text{\tiny L}}$, where $\tilde{\boldsymbol{x}}_{\text{\tiny L}}=\{\tilde{\boldsymbol{x}}^{(n)},\tilde{\boldsymbol{x}}^{(n+1)},\dots,\tilde{\boldsymbol{x}}^{(L)}\}$ is a second realization of $\boldsymbol{x}_{\text{\tiny L}}$, which we  assume is independent, see (\ref{eqn:ppd2}). 
We avoid assuming that $\tilde{\boldsymbol{x}}_{\text{\tiny L}}=\boldsymbol{x}_{\text{\tiny L}}$ for essentially the same reason that we do not assume $\boldsymbol{x}^{(m+1)}=\boldsymbol{x}^{(m)}\cup\{x_{m+1}\}$: for the interpretation of the EPSS to correspond to normal scientific procedure, we cannot assume that each individual  researcher computes both $q_{\pi^b}(\cdot|\boldsymbol{x}^{(m)})$ and  $q_{\pi}(\cdot|\boldsymbol{x}^{(m)})$ and then decides which to use depending on whether $W_2\left(q_{\pi^b}(\cdot|\boldsymbol{x}^{(m)}), q_{\pi}(\cdot|\boldsymbol{y})\right)$ or $W_2\left(q_{\pi^b}(\cdot|\boldsymbol{y}), q_{\pi}(\cdot|\boldsymbol{x}^{(m)})\right)$ is smaller. 
We instead assume there are two researchers in the population for each value of $m$, one who computes  $q_{\pi^b}(\cdot|\boldsymbol{x}^{(m)})$ and one who computes $q_{\pi}(\cdot|\tilde{\boldsymbol{x}}^{(m)})$, and since the reseaerchers will likely have different laboratories it is natural to assume that  $ \boldsymbol{x}^{(m)}$ and $\tilde{\boldsymbol{x}}^{(m)}$ are independent.
The importance of this point is mainly conceptual:
setting $\tilde{\boldsymbol{x}}_{\text{\tiny L}}=\boldsymbol{x}_{\text{\tiny L}}$ did not substantially change our results compared with allowing $\tilde{\boldsymbol{x}}_{\text{\tiny L}}$ and $ \boldsymbol{x}_{\text{\tiny L}}$ to be independent. 
We write 
$\boldsymbol{x}_{\text{\tiny L}}^{\text{all}}=\boldsymbol{x}_{\text{\tiny L}} \cup \tilde{\boldsymbol{x}}_{\text{\tiny L}}= \{\boldsymbol{x}^{(n)},\tilde{\boldsymbol{x}}^{(n)},\boldsymbol{x}^{(n+1)},\tilde{\boldsymbol{x}}^{(n+1)},\dots,\boldsymbol{x}^{(L)},\tilde{\boldsymbol{x}}^{(L)}\}$ to denote all the future samples combined, and for conciseness introduce the notation $\Wm$ and $\Wmflip$ to denote the distances $W_2\left( q_{\pi^b}(\cdot|\boldsymbol{x}^{(m)}),q_{\pi}(\cdot|\boldsymbol{y})\right)$ and $W_2\left(q_{\pi^b}(\cdot|\boldsymbol{y}), q_{\pi}(\cdot|\tilde{\boldsymbol{x}}^{(m)})\right)$, respectively.

Lastly, we define the sign function for our EPSS measure, which captures whether the prior has a greater or smaller influence on the inference than $\pi^b$:
\begin{align*}
    S_{n}(\boldsymbol{x}_{\text{\tiny L}}^{\text{all}}) = \left\{\begin{array}{cc}
       1  & \text{if} \quad \min_{n \leq m \leq L}\left\{\Wm\right\} \leq \min_{n \leq m \leq L}\left\{\Wmflip\right\} \\
        -1 & \text{if} \quad \min_{n \leq m \leq L}\left\{\Wm\right\} > \min_{n \leq m \leq L}\left\{\Wmflip\right\}.
    \end{array}\right.
\end{align*}
The sign function identifies to which prior extra samples must be added in order to reduce the discrepancy between the two posteriors. Our EPSS measures are relative to the baseline prior $\pi^b$, which can therefore be assumed to have EPSS zero. Thus, if the minimum distance is achieved by adding  extra samples to the baseline prior $\pi^b$, that indicates that our prior $\pi$ has a greater impact on the inference than $\pi^b$, and therefore should have positive EPSS; otherwise, it should have negative EPSS.

The underlying quantity of interest, namely
the EPSS value for a specific realization of $\boldsymbol{x}_{\text{\tiny L}}^{\text{all}}$, can now be explicitly defined:
\begin{align}
M_n(\boldsymbol{x}_{\text{\tiny L}}^{\text{all}}) = \left\{\begin{array}{cc}
  {\rm argmin}_{n \leq m \leq L} \left\{\Wm\right\} - n\quad {\text if}\quad S_{n}(\boldsymbol{x}_\text{\tiny L}^{\text{all}})= 1\\
   n - {\rm argmin}_{n \leq m \leq L} \left\{\Wmflip\right\}\quad {\text if}\quad S_{n}(\boldsymbol{x}_\text{\tiny L}^{\text{all}})= -1.
\end{array}
    \right.\label{eqn:meaningfulM}
\end{align}
If $\min_{m\geq n}\left\{\Wm\right\} > \min_{m\geq n}\left\{\Wmflip\right\}$ (and thus $S_n(\boldsymbol{x}_{\text{\tiny L}}^{\text{all}}) = - 1$) then this suggests that $\pi$ is less informative than $\pi^b$, which is why $M_n$ is defined to be negative in this case, as explained above. An alternative strategy for defining negative  EPSS is to allow {\it fewer} than $n$ samples to be combined with $\pi^b$, i.e., to remove some of the observed data when computing the posterior distribution under $\pi^b$. However, we found this to have both conceptual and practical disadvantages compared with the above definition, e.g., if samples are removed from the observed data then the resulting EPSS measure is highly sensitive to the order in which the samples were collected (or the order has to be averaged over which introduces additional challenges and computation). Furthermore, the strategy of removing samples does not fully use the information in the observed dataset and thus causes more variability in the final estimates of the EPSS.

In practice we do not know the values of the future samples contained in $\boldsymbol{x}_{\text{\tiny L}}^{\text{all}}$, and our uncertainty about them 
is naturally captured by the posterior predictive distribution computed under our real prior $\pi$,
\begin{align}
p(\boldsymbol{x}_{\text{\tiny L}}^{\text{all}}|{\boldsymbol{y}}, \pi) &= \int_\Theta p(\boldsymbol{x}_{\text{\tiny L}}|{\boldsymbol{y}}, \theta)p(\tilde{\boldsymbol{x}}_{\text{\tiny L}}|{\boldsymbol{y}}, \theta)q_\pi(\theta| {\boldsymbol{y}})d\theta \label{eqn:ppd1}\\&= \int_\Theta \prod_{m=n}^L p(\boldsymbol{x}^{(m)}|{\boldsymbol{y}}, \theta)\prod_{m=n}^L p(\tilde{\boldsymbol{x}}^{(m)}|{\boldsymbol{y}}, \theta)q_\pi(\theta| {\boldsymbol{y}})d\theta,
\label{eqn:ppd2}
\end{align}
where for the reasons discussed above we have assumed that $\boldsymbol{x}_{\text{\tiny L}}$ is conditionally independent of $\tilde{\boldsymbol{x}}_{\text{\tiny L}}$, and the future samples $\boldsymbol{x}^{(m)}$ (and $\tilde{\boldsymbol{x}}^{(m)}$) are conditionally independent across values of $m$, given $\boldsymbol{y}$ and $\theta$.  
Unconditionally, all the future samples are dependent, which corresponds to the real-world in that all additional samples collected  
would 
be generated using the same true (but unknown) value of $\theta$. 
The posterior predictive distribution (\ref{eqn:ppd1})-(\ref{eqn:ppd2}) in turn induces the
 posterior distribution of $M_n(\boldsymbol{x}_{\text{\tiny L}}^{\text{all}})$, denoted $\mathcal{F}_{\rm EPSS}$, which forms a complete summary of the EPSS of $\pi$.  
To provide a single univariate measure of EPSS we suggest reporting the posterior mean of $M_n$, denoted $\overline{M}_n$, which is the Bayes estimate of $M_n$ under squared error loss. In practice, it may be helpful to additionally report several quantiles of  $\mathcal{F}_{\rm EPSS}$. 
{In the remainder of the paper, we use the acronym OPESS (observed prior effective sample size) to refer to an individual realization of $M_n(\boldsymbol{x}_L^{\rm all})$, i.e.,  corresponding to a specific realization of $\boldsymbol{x}_L^{\rm all}$,   and MOPESS (mean OPESS) to refer to the posterior mean estimate $\overline{M}_n = \sum_{j=1}^SM_n^{(j)}$, where $S$ denotes the number of simulated realizations of $M_n$.}

In the above, the independence of $\boldsymbol{x}^{(m)}$ (and $\tilde{\boldsymbol{x}}^{(m)}$) across different values of $m$ has the  statistical advantage that the posterior distribution of  $M_n$ has relatively low variance (for a given observed dataset $\boldsymbol{y}$). This increases the practical appeal of $M_n$  and also means that the computational cost of drawing $\boldsymbol{x}_{\text{\tiny L}}^{\text{all}}$ is somewhat offset because 
posterior summaries of $M_n$ (e.g., the posterior mean)  can be estimated with relatively few Monte Carlo simulations. Further note that the computation can easily be parallelized.

Algorithm~\ref{algo:OPESS_compute} summarizes our general procedure for computing the posterior mean observed prior effective sample size (MOPESS). The procedure is widely applicable and can be implemented for a large family of models beyond the specific cases considered in this paper. 
Naturally, we use analytical forms of the posterior distributions and the Wasserstein distances when available; otherwise, we choose from several approximation strategies. We acknowledge that when the approximations are inaccurate, the resulting  MOPESS estimates  can be substantially affected, and therefore in practice it is important  to check  the effectiveness of each approximation. For example, in Step 2(b), if we use importance sampling or the Markov chain Monte Carlo (MCMC) algorithms~\citep{marin2007bayesian,liu2008monte,brooks2011handbook} to approximate the posterior distributions, we must  check the effective sample size (and other diagnostics), see e.g. \citet{geweke1991evaluating,gelman1992inference,kass1998markov,mengersen1999mcmc,yang2018parallelizable} for more details, to make sure that the samples well represent the posteriors. 

\begin{algorithm}[t]
\SetAlgoLined
\vspace{0.1cm}
{\bf Step 1:} 
Compute posterior distributions $q_{\pi^b}(\cdot|\boldsymbol{y})$ and $q_{\pi}(\cdot|\boldsymbol{y})$ either analytically 
or numerically (posterior samples). \\ 
\vspace{0.2cm}
 {\bf Step 2:} Repeat the following for $j=1,\dots S$:\\\vspace{0.2cm}
\begin{minipage}{0.05\textwidth}
 \quad
\end{minipage}\begin{minipage}{0.9\textwidth}
   {\bf Part a:} Generate extra samples $\boldsymbol{x}_L^{\rm all}=\boldsymbol{x}_L\cup \boldsymbol{\tilde{x}}_L$ from  posterior predictive distribution  under $\pi$, see (\ref{eqn:ppd1})-(\ref{eqn:ppd2}).\vspace{0.1cm}
   
{\bf Part b:} 
For $m=n+1,\dots,L$, compute the posterior distributions $q_{\pi^b}(\cdot|\boldsymbol{x}^{(m)})$ and $q_{\pi}(\cdot|\tilde{\boldsymbol{x}}^{(m)})$ either analytically or via one of the following: 
\begin{itemize}
    \item Importance sampling: use either $q_{\pi^b}(\cdot|\boldsymbol{x}^{(m^*)})$ or $q_{\pi}(\cdot|\tilde{\boldsymbol{x}}^{(m^*)})$ as importance functions, for $n \leq m^*< m$. 
    \item Appropriate Markov chain Monte Carlo (MCMC) algorithm.
\end{itemize}
{\bf Part c:} Compute the Wasserstein distances $\{W_2(m), \widetilde{W}_2(m), n\leq m\leq L\}$ analytically or via one of the following:
\begin{itemize}
    \item Gaussian approximation to the posteriors and use of  the analytical Wasserstein distance between Gaussian distributions. 
    \item Numerical approximation based on posterior samples obtained in Step 3, e.g., using the \texttt{R} package by~\citet{wasser}. 
\end{itemize}
{\bf Part d:} Calculate the OPESS $M_n^{(j)}$ given by (\ref{eqn:meaningfulM}).
\end{minipage}\begin{minipage}{0.05\textwidth}
 \quad
\end{minipage} \vspace{0.1cm}

 {\bf Step 3:} Report the MOPESS: $\overline{M}_n = \frac{1}{S} \sum_{j=1}^SM_n^{(j)}$. \\\vspace{0.1cm}
  \caption{General procedure for computing MOPESS. \label{algo:OPESS_compute}}

\end{algorithm}

\section{Gaussian illustration}\label{sec:illustration}


\subsection{Setup} \label{sec:gaussian}

Let $\{y_i,1\leq i\leq n\}$ 
be independent observations from a Gaussian distribution with mean $\mu$ and variance $\sigma^2$, i.e., $y_i \stackrel{\rm i.i.d.}{\sim} \mathcal{N}(\mu, \sigma^2)$, for $i=1,\dots,n$. Assume that $\sigma^2$ is known, our prior for $\mu$ is a conjugate prior, denoted $\prior^{c} (\mu)\equiv \mathcal{N}(\mu_0, \lambda_0^2)$, and the baseline prior is $\prior^f(\mu)\propto 1$. Regarding the expanded dataset, as before we have $x^{(m)}_i=y_i$ for $i=1,\dots,n$, and suppose that, if collected, the hypothetical future samples would be drawn from the same distribution as the observations, i.e., that {\it hypothetically} $x^{(m)}_i \stackrel{\rm i.i.d.}{\sim} \mathcal{N}(\mu, \sigma^2)$, for $i=n+1,\dots,m$.  Let $\pi_n^A$ and $\pi_m^A$ denote the posterior distribution obtained by combining the prior $\pi^A$ with the observed and expanded dataset, respectively, for $A=c,f$. Lemma~\ref{lemma:gaussianposterior} specifies these posterior distributions and the 2-Wasserstein distance between $\pi^f_m$ and $\pi^c_n$ and between $\pi^f_n$ and $\pi^c_m$. The proof is straightforward and is omitted. 

\begin{lemma}
Suppose that $m\geq n$. 
For $u=n,m$, we have
\begin{equation}
\prior_u^f = \mathcal{N}\left(\overline{x}_u, \frac{\sigma^2}{u}\right), \quad \prior_u^c = \mathcal{N}\left(\mu_u=(1-w_u) \mu_0 + w_u \overline{x}_u, \frac{\sigma^2}{u + z}\right),
\label{eqn:gaussianposteriors}
\end{equation}
where $w_u = {u}/(u+z)$, $z = \sigma^2/\lambda_0^2$, and
\begin{equation*}
\quad \overline{x}_u = \frac{1}{u} \sum_{i=1}^u x_i^{(u)},
\end{equation*}
Furthermore, the $2$-Wasserstein distance between $\pi_m^f$ and $\pi_n^c$, and between $\pi_n^f$ and $\pi_m^c$, is
\begin{align}
\Wm \equiv W_2\left( \pi_m^f,\pi_n^c \right)
&= D_{m, n} +  \left(\frac{\sigma}{\sqrt{m}} - \frac{\sigma}{\sqrt{n+z}}, \right)^2,\label{eq:W2gaussian1}\\
\Wmflip \equiv W_2\left(\pi_m^c, \pi_n^f \right)
&= D_{n, m} +  \left(\frac{\sigma}{\sqrt{n}} - \frac{\sigma}{\sqrt{m+z}}, \right)^2,\label{eq:W2gaussian2}
\end{align}
respectively, where $D_{u,v} = (\bar{x}_u - \mu_v)^2$, for $u,v=n,m$.
\label{lemma:gaussianposterior}
\end{lemma}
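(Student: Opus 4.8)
The plan is to verify the two closed-form posterior formulas and then assemble the two Wasserstein distances from the known formula for the $2$-Wasserstein distance between univariate Gaussians. The whole argument is elementary because every distribution in sight is Gaussian.

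First I would establish the posterior formulas in (\ref{eqn:gaussianposteriors}). Fix $u\in\{n,m\}$ and let $\bar x_u$ be the sample mean of $x_1^{(u)},\dots,x_u^{(u)}$. For the flat prior $\pi^f(\mu)\propto 1$, the posterior is proportional to the likelihood $\prod_{i=1}^u \exp\{-(x_i^{(u)}-\mu)^2/(2\sigma^2)\}$, which upon completing the square in $\mu$ is recognized as $\mathcal N(\bar x_u,\sigma^2/u)$. For the conjugate prior $\pi^c=\mathcal N(\mu_0,\lambda_0^2)$, the posterior precision is the sum of the prior precision $1/\lambda_0^2$ and the data precision $u/\sigma^2$, i.e.\ $(u+z)/\sigma^2$ with $z=\sigma^2/\lambda_0^2$, giving posterior variance $\sigma^2/(u+z)$; the posterior mean is the precision-weighted average of $\mu_0$ and $\bar x_u$, which simplifies to $(1-w_u)\mu_0+w_u\bar x_u$ with $w_u=u/(u+z)$. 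This is the standard conjugate Gaussian update and I would just state it, completing the square once.

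Next I would recall that for $\mu=\mathcal N(a_1,s_1^2)$ and $\nu=\mathcal N(a_2,s_2^2)$ on $\mathbb R$ with the Euclidean metric, $W_2(\mu,\nu)^2=(a_1-a_2)^2+(s_1-s_2)^2$ (the optimal coupling being the monotone one). Applying this with $\mu=\pi_m^f=\mathcal N(\bar x_m,\sigma^2/m)$ and $\nu=\pi_n^c=\mathcal N(\mu_n,\sigma^2/(n+z))$ gives $W_2(\pi_m^f,\pi_n^c)^2=(\bar x_m-\mu_n)^2+(\sigma/\sqrt m-\sigma/\sqrt{n+z})^2$, which is exactly (\ref{eq:W2gaussian1}) with $D_{m,n}=(\bar x_m-\mu_n)^2$; note that here the standard deviations are $\sigma/\sqrt m$ and $\sigma/\sqrt{n+z}$, and $\mu_n=(1-w_n)\mu_0+w_n\bar x_n$ from the first part. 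The second identity (\ref{eq:W2gaussian2}) follows by the same computation with the roles of $m$ and $n$ swapped between the two priors: $\pi_m^c=\mathcal N(\mu_m,\sigma^2/(m+z))$ and $\pi_n^f=\mathcal N(\bar x_n,\sigma^2/n)$ yield $(\bar x_n-\mu_m)^2+(\sigma/\sqrt n-\sigma/\sqrt{m+z})^2=D_{n,m}+(\sigma/\sqrt n-\sigma/\sqrt{m+z})^2$.

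There is no real obstacle here; the only things to be careful about are bookkeeping — making sure $z$, $w_u$, and the $D_{u,v}$ are defined consistently with the superscript-$(u)$ notation for the expanded datasets, and matching the slightly idiosyncratic way the squared distances are written in (\ref{eq:W2gaussian1})--(\ref{eq:W2gaussian2}) (the displayed expressions write $\Wm$ for what is really $W_2(\cdot,\cdot)^2$, and contain a stray comma inside the parentheses). Since the paper itself says ``the proof is straightforward and is omitted,'' I would at most give the one-line completion-of-the-square for the posteriors and cite the standard Gaussian $W_2$ formula, then substitute.
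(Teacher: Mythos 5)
Your proposal is correct and follows exactly the route the paper intends (the paper omits the proof as straightforward): the standard conjugate Gaussian posterior update plus the closed-form $2$-Wasserstein distance between univariate Gaussians, $W_2^2=(a_1-a_2)^2+(s_1-s_2)^2$. You also rightly note that the displayed expressions \eqref{eq:W2gaussian1}--\eqref{eq:W2gaussian2} are really the squared distances, which is consistent with how $\Wm$ and $\Wmflip$ are used throughout the rest of the paper.
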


If the values of $x_{n+1}^{(m)}\dots,x_m^{(m)}$ were known for all $m \in \{n,\dots, L\}$, then we would know (\ref{eq:W2gaussian1}) and (\ref{eq:W2gaussian2}) and hence $M_n(\boldsymbol{x}_{\text{\tiny L}}^{\text{all}})$. Since in practice the future samples are unknown, our method introduced in Section  \ref{sec:definition} specifies that we should look at the  posterior distribution of $M_n(\boldsymbol{x}_{\text{\tiny L}}^{\text{all}})$, the observed prior effective sample size (OPESS). In the current scenario, realizations of the OPESS can be generated by drawing $\mu^*\sim\pi_n^c$ and then drawing $\boldsymbol{x}_{\text{\tiny L}}^{\text{all}}$ from \begin{align*}
    p(\boldsymbol{x}_{\text{\tiny L}}^{\text{all}}|\boldsymbol{y},\mu^*)=\prod_{m=n+1}^L\prod_{i=n+1}^m \mathcal{N}(x_i^{(m)}|\mu^*,\sigma^2)\prod_{i=n+1}^m \mathcal{N}(\tilde{x}_i^{(m)}|\mu^*,\sigma^2).
\end{align*}


\subsection{Numerical Results}
\label{subsec:gaussian_numerical}


\begin{figure}[tbph]
\includegraphics[width=0.49\textwidth,trim=0mm 132mm 0mm 0mm,clip]{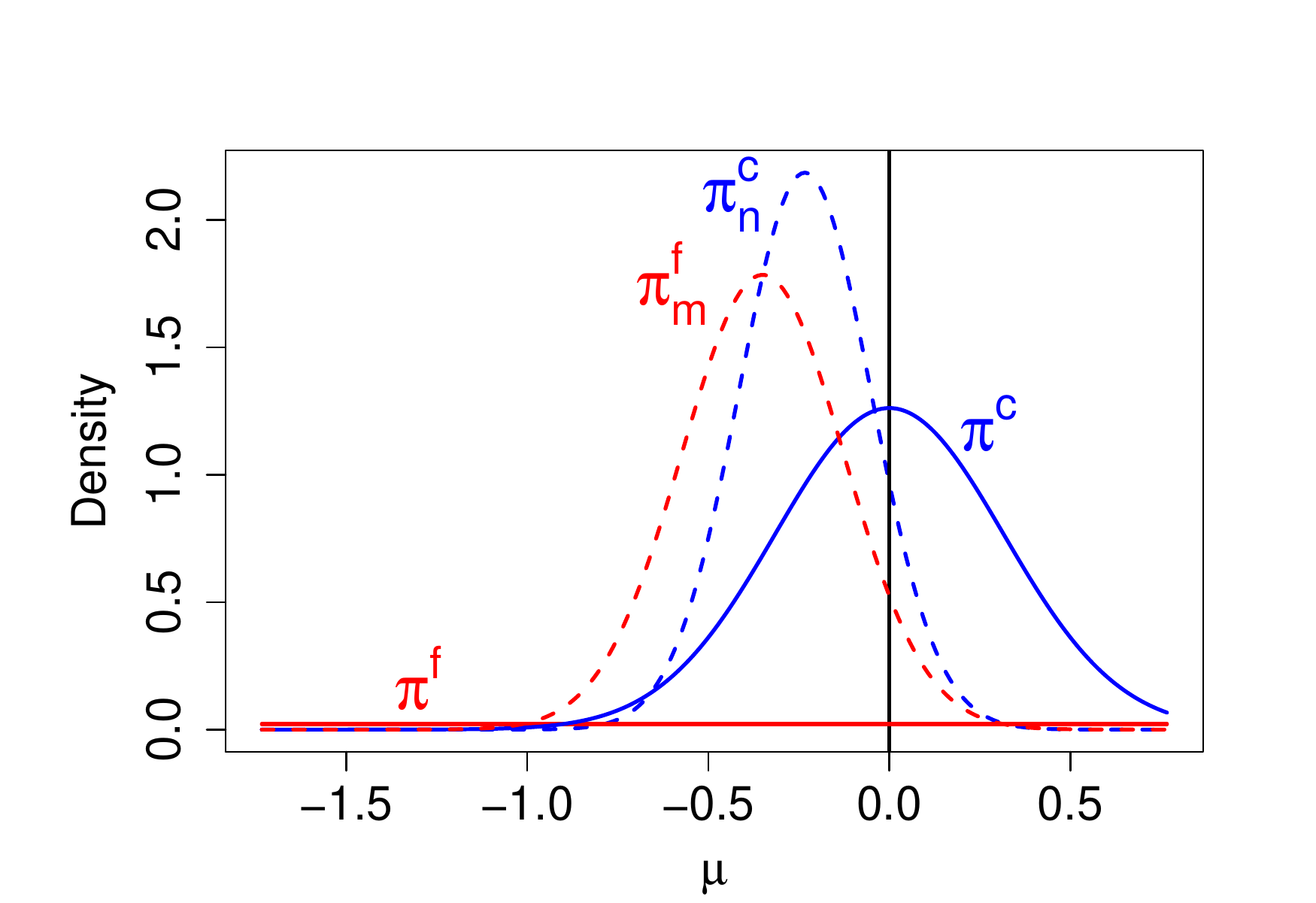}\includegraphics[width=0.49\textwidth,trim=0mm 10mm 0mm 130mm,clip]{presentation_plot_prior0}
\caption{(Left)  posteriors $\pi_n^c$ and $\pi_n^f$ (dashed lines) and priors $\pi^c$ and $\pi^f$ (solid lines) for a single simulated dataset $\boldsymbol{y}$. (Right) distribution of the MOPESS ($\overline{M}_n$) across 300 simulated datasets. The vertical line shows the nominal EPSS of $10$.
\label{fig:prior0}}
\end{figure}

Suppose that $\mu=\mu_0=0$, $\sigma^2=1$, and $\lambda_0^2=0.1$. Under these settings, the nominal sample size of the informative prior $\prior^{c}$ is $10$ because of the following three information based analogies between the prior and data: (i) if $n=10$ then the Fisher information is $n/\sigma^2=1/\lambda_0^2=10$, (ii) if $n=10$ then $\bar{y}_n \sim \prior^{c}$, and (iii) for any $n$, the posterior distribution is $\mathcal{N}(0,\sigma^2/(n+10))$. 

The left panel of Figure \ref{fig:prior0} shows the posteriors $\pi_n^c$ and $\pi_m^f$ (dashed lines) for a single example observed value of $\bar{y}_n$, with $n=20$. The priors $\pi^c$ and $\pi^f$ are also plotted (solid lines). The right panel of Figure \ref{fig:prior0} shows the distribution of the MOPESS, i.e., of $\overline{M}_n$,  across 300 datasets. Interestingly, in the current context $\overline{M}_n$ is quite variable and is always higher than the nominal EPSS of $10$ (vertical line).

The top left panel of Figure \ref{fig:prior0_cases} shows that the variation in $\overline{M}_n$ is due to variation in $\bar{y}_n$ across the 300 datasets, because $\bar{y}_n$ (and $n$) determines the posterior distribution $\pi_n^c$. The bold line in the left panel of  Figure \ref{fig:prior0_cases} is a LOESS (local polynomial regression) fit to the plotted points. For each dataset, the value of $\overline{M}_n$ was computed via 10,000 Monte Carlo samples of $\boldsymbol{x}_{\text{\tiny L}}^{\text{all}}$ and for a given $\bar{y}_n$ the scatter is entirely due to Monte Carlo error, i.e., with enough Monte Carlo samples all the points would lie exactly on a curve similar to the bold line plotted. The top right panel of Figure \ref{fig:prior0_cases} re-plots the LOESS line from the left panel and shows three quantiles of ${M}_n$: the median (black short dash curve), 95\% quantile (green long dash curve), and 5\% quantile (blue dash-dot curve). 
The plot illustrates that even for a fixed value of $\bar{y}_n$ the value of $M_n(\boldsymbol{x}_{\text{\tiny L}}^{\text{all}})$ can be highly variable across Monte Carlo realizations of $\boldsymbol{x}_{\text{\tiny L}}^{\text{all}}$. 

We now turn to the bottom panels of Figure \ref{fig:prior0_cases} which help to explain the phenomena seen in the top panels. The bottom left panel of Figure \ref{fig:prior0_cases} corresponds to the dataset indicated by a ``+" symbol in the top left plot, i.e., the case where $\bar{y}_n$ is furthest from $\mu=\mu_0=0$ across all 300 simulated datasets.  The resulting posteriors $\pi_n^c$ and $\pi_n^f$ are plotted in the bottom left panel and $\pi_n^c$ is seen to be pulled towards zero by $\prior^c$. This scenario corresponds to the greatest Wasserstein distance between the posteriors $\pi_n^c$ and $\pi_n^f$ because  the difference in posterior means is given by $(1-w_n)\bar{y}_n$, where $w_n=(n/\sigma^2)/((1/\lambda_0^2)+(n/\sigma^2))=2/3$ (the posterior variances only depend on $n$).  The top left panel shows that in this case the MOPESS is larger than for the other simulations and in particular is around $14.5$,  which is considerably larger than the nominal EPSS of $10$.

The bottom right panel of Figure \ref{fig:prior0_cases} illustrates the case where $\bar{y}_n$ is closest to $\mu=\mu_0=0$ across the 300 simulated datasets, i.e., the dataset indicated by a cross in the top left panel.  From the bottom right panel we can see that for this dataset both posteriors are centered at zero. Specifically, the posteriors have substantial overlap because the conjugate prior $\pi^c$ is centered at $\mu=\mu_0=0\approx \bar{y}_n$ and so does not cause $\pi_n^c$ to have a substantially different mean to $\pi_n^f$, only a smaller variance. Returning to the top left panel we can see that this case corresponds to a MOPESS value of around $10.5$, which is one of the smallest across  our simulations. In conclusion, we can see that the MOPESS is larger the further $\bar{y}_n$ is from zero and that this is because the prior has more impact on the posterior in these cases. Thus, at least in this simple example,  our MOPESS measure of EPSS seems to have an intuitive interpretation that well captures the way the prior impact changes with the observed data.

\begin{figure}[t]
\includegraphics[width=0.49\textwidth,trim=0mm 10mm 0mm 0mm,clip]{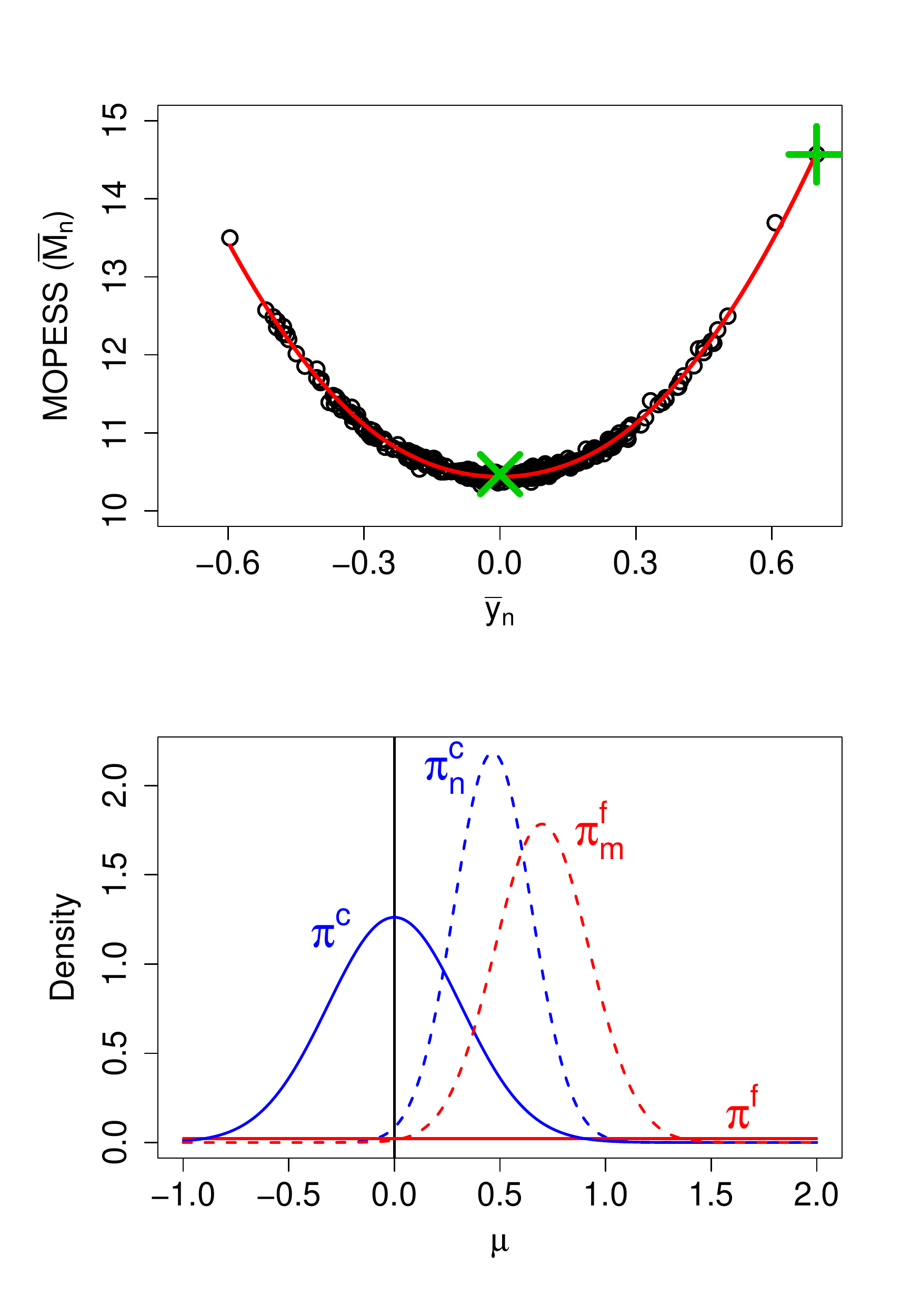}
\includegraphics[width=0.49\textwidth,trim=0mm 10mm 0mm 0mm,clip]{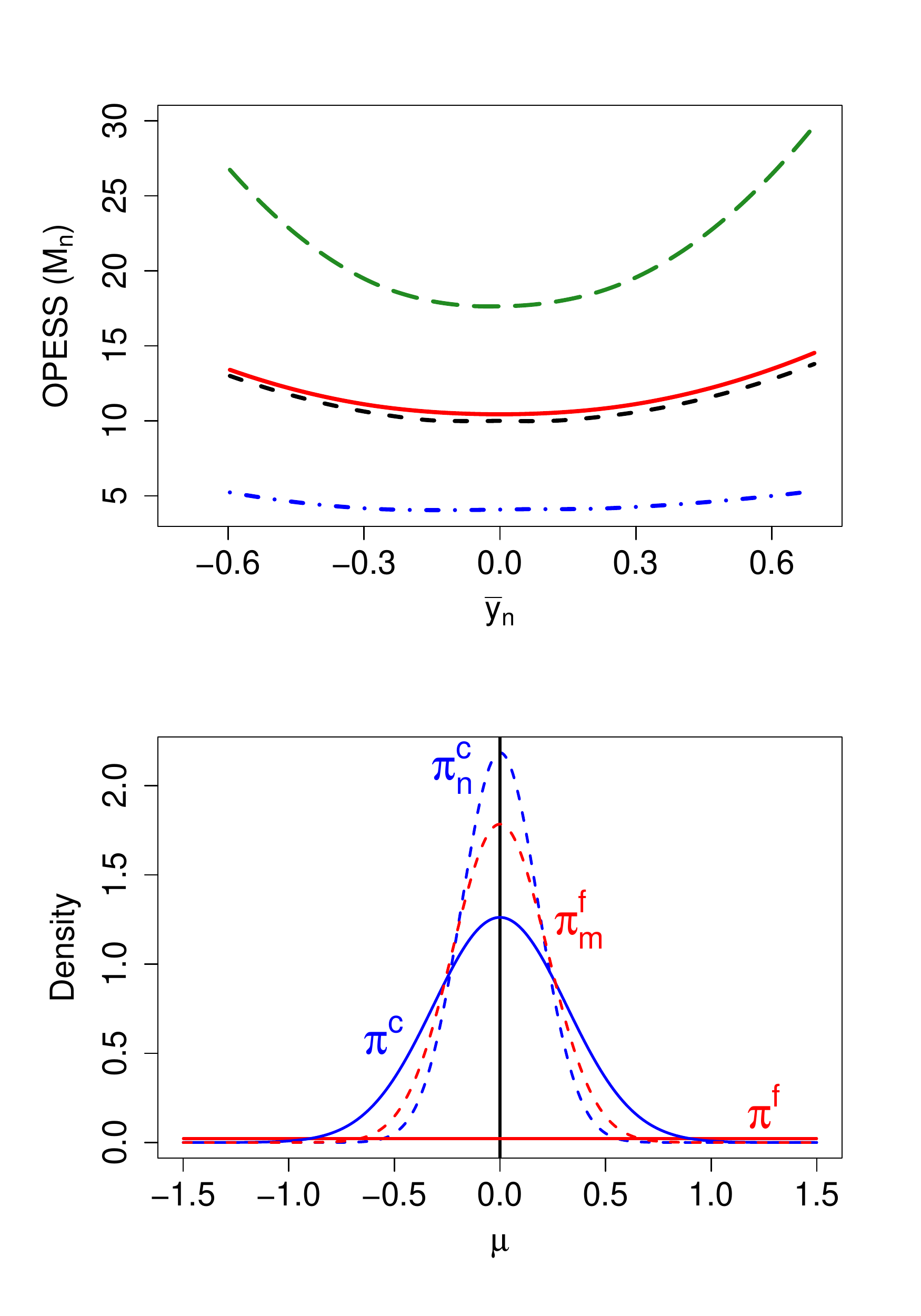}
\caption{(Left top)  MOPESS ($\overline{M}_n$) as a function of the observed data mean $\bar{y}_n$. Each point corresponds to one of the 300 simulated datasets, the bold curve shows a LOESS (local polynomial regression) fit to the points. The green ``+" symbol and cross  indicate the datasets for which $|\bar{y}_n-\mu_0|$ is greatest and smallest, respectively. 
(Right top)  quantiles of the posterior distribution of $M_n$ as a function of $\bar{y}_n$ including the 95\% quantile (long dash curve), median (short dash curve),  and 5\% quantile (dash-dot curve). The solid line is the same as in the left panel. (Left bottom) posteriors $\pi_n^c$ and $\pi_n^f$ (dashed lines) and priors $\pi^c$ and $\pi^f$ (solid lines) for the dataset indicated by a ``+" symbol in the top left plot. (Bottom right) the same as in the bottom left panel except for the the dataset indicated by a cross in the top left plot.
\label{fig:prior0_cases}}
\end{figure}

Turning our attention from the {\it shape} of the curve in the top left panel of Figure \ref{fig:prior0_cases} to the specific values, we note that $\overline{M}_n$ is always greater than the nominal EPSS of $10$. In Section \ref{sec:sampling_dist} we illustrate that there is a good explanation for this  {\it location} discrepancy: classical information measures consider the prior in isolation and only correspond to the prior impact if there is no data. As soon as some data are collected there is, on average, always some level of disagreement between the prior and the data and therefore  $\overline{M}_n$ is usually greater than the nominal EPSS, at least in the current Gaussian conjugate model example. However, in specific circumstances it is possible for the MOPESS $\overline{M}_n$ to be less than the nominal EPSS $z$, namely, if the location discrepancy between the initial posteriors $\pi^c_n$ and $\pi^f_n$  is small relative to $z$, a scenario which is discussed further in Section \ref{subsec:beta-binomial} and Appendix \ref{app:small_epss}. 
Furthermore, in the right panel of Figure \ref{fig:prior0_cases}, the 5\% quantile  of $M_n(\boldsymbol{x}_{\text{\tiny L}}^{\text{all}})$ (dash-dot curve)  shows that there often exist {\it some} realizations of $\boldsymbol{x}_{\text{\tiny L}}^{\text{all}}$ such that the value of $M_n$ is less than the nominal EPSS. Indeed, $\pi_m^f$ may by chance be close to $\pi_n^c$ after $m-n < 10$ additional samples. More generally, the uncertainty represented by the quantiles in the top right panel of Figure \ref{fig:prior0_cases} corresponds to real-world uncertainty about future samples and in particular  how many will be needed to obtain comparable inference to that provided by $\pi_n^c$. 

The bottom right panel of Figure \ref{fig:prior0_cases} discussed above corresponds to the case of a {\it super-informative} prior mentioned by \citet{reimherr2014}. In their paper, {\it super-information} refers to the case where the mean of the prior $\prior^c$ is closer (in terms of squared difference) to the true mean of the data $\mu$ than expected based on the variance of the prior. For example, in the bottom right panel of Figure \ref{fig:prior0_cases}, the prior mean is closer than would be expected if the prior had been constructed by computing the posterior distribution based on $10$ earlier observations. The measures proposed by \citet{reimherr2014} give an EPSS larger than the nominal EPSS of $10$ in this super-informative prior context because the prior is unexpectedly accurate (in some other cases their measure is lower than the nominal value). However, as we have seen, the MOPESS is  relatively low in the case where $\mu_0\approx \bar{y}_n$  (although still larger  than the nominal EPSS of $10$). Thus, we re-interpret the super-information phenomenon as a low-impact phenomenon. 
Indeed, if the majority of the prior mass is unexpectedly close to the data mean then the prior has less impact than expected.
A conceptual difference between  super-information and low-impact is that in  the case of the latter the true value of $\mu$ is irrelevant because, once we condition on the observed data $\boldsymbol{y}$, the true parameter value $\mu$ does not tell us anything about the {\it impact} of the prior on  inference. 


This section has illustrated  the limitation of only reporting the nominal EPSS of $10$, namely,  the actual impact of the prior 
depends on the observed data. The promise of our approach is that it fully takes the observed data into account, and in this respect it is unique among the measures of EPSS that have been proposed to the best of our knowledge.

\section{Method justification and theory}\label{sec:justification_theory}

\subsection{Illustration regarding choice of sampling distribution for extra observations}
\label{sec:sampling_dist}

Let $r=m-n$ and denote the additional samples collected by $s_1^{(m)},\dots,s_r^{(m)}$, i.e., $\{x_1^{(m)},\dots,x_m^{(m)}\} = \{y_1,\dots,y_n,s_1^{(m)},\dots,s_r^{(m)}\}$. We write $\bar{s}_r$ to denote $\frac{1}{r}\sum_{i=1}^r s_i^{(n+r)}$. Returning to the Gaussian conjugate model introduced earlier, we have
\begin{align}
\Wm
&= \left(\mu_n -\frac{n}{m}\bar{y}_n-\frac{r}{m}\overline{s}_r\right)^2 +  \left(\frac{\sigma}{\sqrt{m}} - \frac{\sigma}{\sqrt{n+z}} \right)^2,\label{eq:illustrationw2a}\\
\Wmflip
&= \left(\bar{y}_n-\frac{n+z}{m+z}\mu_n - \frac{r}{m+z}\bar{s}_r \right)^2 +  \left(\frac{\sigma}{\sqrt{n}} - \frac{\sigma}{\sqrt{m+z}} \right)^2.\label{eq:illustrationw2b}
\end{align}
Recall that in our approach described in Section \ref{sec:definition} the future samples are drawn from the posterior predictive distribution (\ref{eqn:ppd1})-(\ref{eqn:ppd2}) under $\pi^c$, meaning that
\begin{align}
\bar{s}_r|\bar{y}_n \sim N\left(\mu_n,\left(\frac{1}{r}+\frac{1}{n+z}\right)\sigma^2\right).\label{eqn:sr}
\end{align}
 In contrast to our approach, \citet{Morita:2008} sample  from the distribution of hypothetical previous data and \citet{reimherr2014} bootstrap the observed data. Our proposed sampling method is therefore not the only option and in order to provide justification for our choice it is instructive to consider the behavior of $M_n$ under several sampling methods. 
 To investigate this Proposition \ref{THM:SAM_DIST} below considers the case where $\bar{s}_r$ is exactly equal to the mean of its distribution, denoted $\gamma$. If the behavior of $M_n$ for this value of $\bar{s}_r$ does not make sense then there is little hope that the corresponding sampling method is useful, and if it does make sense then the investigation may offer  valuable insights. The proof of Proposition \ref{THM:SAM_DIST} is given in Appendix \ref{app:proof_prop}.
 
 \begin{proposition}\label{THM:SAM_DIST}
Suppose that $\bar{s}_r=E[\bar{s}_r|\bar{y}_n]=\gamma$. Under this scenario we have the  following results:
\begin{enumerate}
    \item  (Posterior predictive sampling) If $\gamma=\mu_n$, then $M_n\geq z$.
    \item   (Bootstrap sampling) If $\gamma=\bar{y}_n$, then there exists $\epsilon_s$ and $\epsilon_l$, such that $M_n=z$ whenever $|\bar{y}_n -\mu_n|<\epsilon_s$, and  $M_n<0$ whenever $|\bar{y}_n -\mu_n|>\epsilon_l$.
    \item (Prior sampling) If $\gamma = \mu_0$, then $M_n=z$.
\end{enumerate}

\end{proposition}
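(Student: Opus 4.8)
The plan is to handle Parts 1--3 by a single recipe. In each part I substitute the hypothesized value $\bar{s}_r=\gamma$ into the explicit expressions (\ref{eq:illustrationw2a})--(\ref{eq:illustrationw2b}) for $\Wm$ and $\Wmflip$ and simplify each to a transparent function of $m$ alone, using the identities $\mu_n=(z\mu_0+n\bar{y}_n)/(n+z)$, $m-r=n$, and $n+z+r=m+z$ (recall $r=m-n$). From these simplified forms I read off $\argmin_m\Wm$ and $\argmin_m\Wmflip$ (the minima being over $n\le m\le L$), compare $\min_m\Wm$ with $\min_m\Wmflip$ to determine the sign $S_n$, and then substitute into the definition (\ref{eqn:meaningfulM}) of $M_n$. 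Throughout I let $m$ range over $[n,L]$ (or assume $z$ is an integer, so that restricting to the integer grid causes no difficulty) and assume $L$ is large enough that $n+z\le L$, consistent with $L-n$ being the maximum feasible magnitude of the EPSS.

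Parts 1 and 3 are routine. For posterior-predictive sampling ($\gamma=\mu_n$), one gets $\bar{x}_m=(n\bar{y}_n+r\mu_n)/m$, so the location term of $\Wm$ reduces to $n^2(\mu_n-\bar{y}_n)^2/m^2$, strictly decreasing in $m$; since the variance term $(\sigma/\sqrt{m}-\sigma/\sqrt{n+z})^2$ is also strictly decreasing on $[n,n+z]$, the whole of $\Wm$ is, whence $\argmin_m\Wm\ge n+z$. A parallel computation shows the location term of $\Wmflip$ collapses to the \emph{constant} $(\mu_n-\bar{y}_n)^2$ (this is where $n+z+r=m+z$ is used), with variance term minimized at $m=n$, so $\min_m\Wm\le(\mu_n-\bar{y}_n)^2<\min_m\Wmflip$, giving $S_n=1$ and $M_n=\argmin_m\Wm-n\ge z$. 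For prior sampling ($\gamma=\mu_0$) the key simplification is the cancellation $\mu_n-\bar{x}_m=n(n+z-m)(\mu_0-\bar{y}_n)/\{m(n+z)\}$, so both the location term and the variance term of $\Wm$ are nonnegative, unimodal in $m$, and vanish at the common point $m=n+z$; hence $\min_m\Wm=0$, attained uniquely at $m=n+z$, while $\Wmflip\ge(\sigma/\sqrt{n}-\sigma/\sqrt{m+z})^2>0$ for all $m$, so $S_n=1$ and $M_n=(n+z)-n=z$.

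Part 2 (bootstrap) is the delicate case and the one I expect to be the main obstacle. With $\gamma=\bar{y}_n$ one has $\bar{x}_m=\bar{y}_n$ for every $m$, so the location term of $\Wm$ is the \emph{constant} $(\mu_n-\bar{y}_n)^2$ and $\argmin_m\Wm=n+z$ is driven purely by the variance term, giving $\min_m\Wm=(\mu_n-\bar{y}_n)^2$. In $\Wmflip$, by contrast, the location term simplifies to $(\mu_n-\bar{y}_n)^2(n+z)^2/(m+z)^2$, strictly decreasing in $m$, which is traded off against the strictly increasing variance term $(\sigma/\sqrt{n}-\sigma/\sqrt{m+z})^2$; it is this tradeoff --- controlling both the minimal value and the location of the minimizer of this sum, rather than merely the monotonicity of one term on a subinterval as in Parts 1 and 3 --- that is the crux. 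The thresholds then emerge from comparing $\min_m\Wm=(\mu_n-\bar{y}_n)^2$ with $\min_m\Wmflip$. On one hand, $\min_m\Wmflip$ is at least the value of its variance term at $m=n$, namely $\delta:=(\sigma/\sqrt{n}-\sigma/\sqrt{n+z})^2>0$, so with $\epsilon_s:=\sqrt{\delta}$, the condition $|\bar{y}_n-\mu_n|<\epsilon_s$ forces $\min_m\Wm<\delta\le\min_m\Wmflip$, hence $S_n=1$ and $M_n=\argmin_m\Wm-n=z$. On the other hand, evaluating $\Wmflip$ at $m=L$ (and using $L>n$) bounds $\min_m\Wmflip$ above by $\rho(\mu_n-\bar{y}_n)^2+\sigma^2/n$ with $\rho:=(n+z)^2/(L+z)^2<1$; taking $\epsilon_l:=\sigma/\sqrt{n(1-\rho)}$, the condition $|\bar{y}_n-\mu_n|>\epsilon_l$ gives $(1-\rho)(\mu_n-\bar{y}_n)^2>\sigma^2/n$, which makes $\Wmflip(L)$ strictly smaller than both $\min_m\Wm$ (so $S_n=-1$) and $\Wmflip(n)=(\mu_n-\bar{y}_n)^2+\delta$ (so $\argmin_m\Wmflip>n$), whence $M_n=n-\argmin_m\Wmflip<0$. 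A final pass over the degenerate subcases ($\mu_0=\bar{y}_n$, and $z$ small relative to $\delta$) confirms that none of the strict inequalities used above collapse.
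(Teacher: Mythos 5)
Your proof is correct and follows essentially the same route as the paper's: substitute $\bar{s}_r=\gamma$ into the explicit distance formulas (\ref{eq:illustrationw2a})--(\ref{eq:illustrationw2b}), simplify, and compare the minimizers and minima of $\Wm$ and $\Wmflip$ case by case. Your handling of the $\epsilon_l$ threshold in Part 2 is in fact slightly more careful than the paper's (which implicitly lets $m\to\infty$), since your choice $\epsilon_l=\sigma/\sqrt{n(1-\rho)}$ with $\rho=(n+z)^2/(L+z)^2$ accounts for the finite upper bound $L$ and you explicitly verify that $\argmin_m \Wmflip>n$, which is needed to conclude $M_n<0$.
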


Result (a) of Proposition \ref{THM:SAM_DIST} corresponds to our proposed method of sampling the future samples from the posterior predictive distribution (\ref{eqn:ppd1})-(\ref{eqn:ppd2}). 
The result is consistent with  the top right panel of Figure \ref{fig:prior0_cases} in Section \ref{subsec:gaussian_numerical}, which shows that the median (dashed curve) value of $M_n$ is always equal to or greater than $z$. 
To gain further intuition consider the distance $W_2(m)$ under the condition of  Proposition \ref{THM:SAM_DIST}: 
\begin{align}
    \Wm = \left(\frac{n}{m}\right)^2\left(\mu_n - \bar{y}_n\right)^2 + \left(\frac{\sigma}{\sqrt{m}}-\frac{\sigma}{\sqrt{n+z}}\right)^2.\label{eqn:w2postsampling}
\end{align}
Inspecting (\ref{eqn:w2postsampling}) reveals that the second term captures the nominal EPSS: setting $m=n+z$ makes the standard deviation of the baseline posterior $\pi^f_m$ match that of the conjugate posterior  $\pi^c_n$, so the second term of (\ref{eqn:w2postsampling}) equals to zero. However, the first term of (\ref{eqn:w2postsampling}) reveals that there is an intuitive reason for the value of $M_n$ to  often be larger than the nominal EPSS: disagreements between the prior and the data as captured by $\left(\mu_n-\bar{y}_n\right)^2 = \left(z/(z+n)\right)^2\left(\mu_0-\bar{y}_n\right)^2$ mean that the two posteriors will not be centered in the same location, and the $(n/m)^2$ term in  (\ref{eqn:w2postsampling}) suggests that greater agreement  is expected to be obtained by adding further samples to the baseline prior, i.e., increasing $m$. 
Thus, our definition of $M_n$ correctly identifies that simply reporting the classical information content of the prior as determined  by its standard deviation is not sufficient: we must also take into account the impact of the prior location {\it relative to the data}. Of course, the results in Proposition \ref{THM:SAM_DIST} must also take account of $\widetilde{W}_2(m)$, see Appendix \ref{app:proof_prop} for details.  

Bayesian methodology stipulates that the extra samples must be drawn from the posterior predictive distribution, as above, but results (b) and (c)  of Proposition \ref{THM:SAM_DIST} provide   further intuition for the correctness of this approach (or rather the incorrectness of competing approaches). Result (b) supposes that $E[\bar{s}_r]=\bar{y}_n$ which is the case when sampling the future observations from the empirical distribution (bootstrap) or from the baseline posterior predictive distribution, i.e., the posterior predictive distribution under $\pi^f$ and conditioning on only the observed data $\boldsymbol{y}$. The first part of result (b) where $M_n=z$ for $\bar{y}_n \approx \mu_n$  may be considered somewhat reasonable, and is similar to what is seen in Figure \ref{fig:prior0_cases}. However,  in the current scenario, the second part of result (b) where $M_n<0$  for large $|\bar{y}_n - \mu_n|$ does not make sense both because $z>0$ and because intuitively   the prior impact is large when $\bar{y}_n$ is far from $\mu_n$. 
\citet{reimherr2014} avoided this problem by defining the EPSS so that  negative values convey a disagreement between the prior and the data, but there are limitations of their approach as discussed in Section \ref{sec:review}. Furthermore, there is always {\it some} disagreement between the data and the prior so we find it conceptually more appealing to always have positive prior impact (unless our prior is less informative than the baseline). 

Result (c)  of Proposition \ref{THM:SAM_DIST} corresponds to the case where the additional samples are drawn from the 
conjugate prior distribution $\pi^c$: just as some might argue that the future data sampling method should not be ``contaminated" by the prior, others may argue that it should not be ``contaminated" by the data! Under the scenario of the proposition, this sampling scheme yields $M_n=z$, which is at least never negative.  However, 
simply recovering the nominal EPSS regardless of whether $\mu_0=\bar{y}_n$ or $\mu_0=100\bar{y}_n$ does not convey differences in the impact of the prior, which is the purpose of having a measure of prior impact. 

In conclusion,  drawing the extra samples from the posterior predictive distribution (\ref{eqn:ppd1})-(\ref{eqn:ppd2}) seems to give the most intuitive result (i.e., (a)  of Proposition \ref{THM:SAM_DIST}), and we now briefly return to that case to gain further insights. Suppose that the observed mean of the extra samples is $\bar{s}_r= \mu_n+\alpha$, i.e., no longer exactly the theoretical mean  $\mu_n$ as in result (a) of Proposition \ref{THM:SAM_DIST}. Further investigation along the same lines reveals that for small $|\alpha|$, the result $M_n \geq z$ still holds, but for large $|\alpha|$ we obtain $M_n <0$. The proof is similar to that for Proposition \ref{THM:SAM_DIST} and is omitted. In the current example, it is undesirable for $M_n$ to be negative (as explained above), but the situation is different to that in result (b)  of Proposition \ref{THM:SAM_DIST}, because here the probability of $|\alpha|$ being large, and $M_n$ being negative, is small. This small probability represents the chance that we are unlucky and the extra samples do not well represent their true distribution, and  consequently that the conjugate prior misleadingly appears to be less informative than the baseline prior.  Lastly, we note that result (a) in Proposition \ref{THM:SAM_DIST} does not contradict the existence of situations where the MOPESS is less than the nominal EPSS $z$, such as those discussed in  Section \ref{subsec:beta-binomial} and Appendix \ref{app:small_epss}. Indeed, those cases arise due to the  variability in $\bar{s}_r$ (and additional conditions), i.e., not when assuming that $\bar{s}_r$ is  exactly equal to its theoretical mean. 

\subsection{Theoretical posterior  distribution of the OPESS}\label{sec:distance}


To study the variation in the OPESS for a given observed dataset, we now derive the theoretical distribution of the OPESS conditional on $\boldsymbol{y}$ for the preceding Gaussian conjugate posterior example. 
More generally, the distribution of the OPESS will typically be hard to derive, but it can be empirically approximated, see Algorithm~\ref{algo:OPESS_compute} Step 2. 

Lemma \ref{LEM:DIST_DIST} below gives the distribution of the distances $W_2(m)$ and $\widetilde{W}_2(m)$ conditional on $\bar{y}_n$ and $\mu$ (drawn from $\pi_n^c$ in Algorithm~\ref{algo:OPESS_compute}). The proof is given in Appendix \ref{app:proof_con_dist_dist}. We condition on both $\bar{y}_n$ and $\mu$ because then the two distances are independent which facilitates derivation of the OPESS distribution. The distance distributions conditional on only  $\bar{y}_n$ are given in Appendix \ref{app:dist_dist}. Lemma \ref{LEM:DIST_DIST} states that both distances follow shifted non-central $\chi^2$ distributions, whose non-centrality parameters depend on $\mu$ through $\lambda_m$ and $\delta_m$ (given in the lemma statement). Intuitively, the forms of $\lambda_m$ and $\delta_m$ show that the impact of the specific value of $\mu$ results from two sources: (i) its distance from $\bar{y}_n$, and (ii) its distance from $\mu_0$. Furthermore, the impact is seen to be different for $W_2(m)$ and $\widetilde{W}_2(m)$. For example, when $m=n+z$ (corresponding to the nominal sample size) then $r=z$ and $c_m^2=0$ meaning that the conditional distribution of $W_2(m)$ only depends on the discrepancy between $\mu$ and the prior mean, $|\mu-\mu_0|$, whereas this is not true for $\widetilde{W}_2(m)$.

\begin{lemma}{Conditional Distribution of Distances.}\label{LEM:DIST_DIST}
Using the same notations as in Lemma~\ref{lemma:gaussianposterior}, assume that $x_i \stackrel{\rm i.i.d.}{\sim}\mathcal{N}(\mu,\sigma^2)$, for $i=n+1,\ldots, m$. 
Then we have
\begin{align*}
    \left[W_2\left(\pi_m^f,\pi_n^c\right)\bigg|\bar{y}_n,\mu\right]& \sim \tau_m \chi_1^2 \left(\frac{\lambda_m}{\tau_m}\right) + c_m^2,
\end{align*}
where
\begin{align*}
c_m^2 &= \left(\frac{\sigma}{\sqrt{n+z}} - \frac{\sigma}{\sqrt{m}}\right)^2,\quad \tau_m = \frac{r}{m^2}\sigma^2,\\ \lambda_m&=\left(\left(\frac{z}{n+z}-\frac{r}{m}\right)(\bar{y}_n-\mu) + (1-w_n)(\mu-\mu_0)\right)^2;
\end{align*}
and
\begin{equation*}
    \left[W_2\left( \pi_m^c,\pi_n^f\right)\bigg|\bar{y}_n,\mu\right] \sim \kappa_m \chi_1^2\left(\frac{\delta_m}{\kappa_m}\right)+ \tilde{c}_m^2,
\end{equation*}
where
\begin{align*}
\tilde{c}_m^2 &= \left(\frac{\sigma}{\sqrt{m+z}} - \frac{\sigma}{\sqrt{n}}\right)^2,\quad \kappa_m = w_m^2\tau_m,\\ \delta_m&=\left(\frac{r+z}{m+z} (\bar{y}_n - \mu) + (1-w_m)(\mu-\mu_0)\right)^2.
\end{align*}
Furthermore, conditional on $\bar{y}_n$ and $\mu$, $W_2\left(\pi_m^f,\pi_n^c\right)$ and $W_2\left( \pi_m^c,\pi_n^f\right)$ are independent.
\label{lemma:con_distr_W2_gaussian}
\end{lemma}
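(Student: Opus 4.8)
The plan is to write each (squared) Wasserstein distance as a deterministic shift plus the square of a quantity that is affine in the sample mean of the relevant block of future observations, and then to recognise the resulting conditional law as a scaled noncentral $\chi^2$.

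First I would use Lemma~\ref{lemma:gaussianposterior} — equivalently the expanded expressions~(\ref{eq:illustrationw2a})--(\ref{eq:illustrationw2b}) with $r = m-n$ — to write $\Wm = A_m^2 + c_m^2$ and $\Wmflip = B_m^2 + \tilde c_m^2$, where $c_m^2 = (\sigma/\sqrt{n+z} - \sigma/\sqrt{m})^2$ and $\tilde c_m^2 = (\sigma/\sqrt{m+z} - \sigma/\sqrt{n})^2$ are nonrandom, $A_m = \mu_n - \tfrac nm\bar y_n - \tfrac rm\bar s_r$, and $B_m = \bar y_n - \tfrac{n+z}{m+z}\mu_n - \tfrac{r}{m+z}\tilde{\bar s}_r$; here $\bar s_r$ and $\tilde{\bar s}_r$ are the means of the new components of $\boldsymbol{x}^{(m)}$ and $\tilde{\boldsymbol{x}}^{(m)}$ respectively, which by the definitions in Section~\ref{sec:definition} enter $\Wm$ and $\Wmflip$ separately. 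Conditioning on $\bar y_n$ and $\mu$, the sampling assumption $x_i,\tilde x_i\sim\mathcal N(\mu,\sigma^2)$ together with the product structure of~(\ref{eqn:ppd2}) gives $\bar s_r\mid\bar y_n,\mu\sim\mathcal N(\mu,\sigma^2/r)$ and $\tilde{\bar s}_r\mid\bar y_n,\mu\sim\mathcal N(\mu,\sigma^2/r)$, independently. Hence $A_m$ and $B_m$ are each affine in a single Gaussian variable, so $A_m\mid\bar y_n,\mu\sim\mathcal N(\mu_A,\tau_m)$ and $B_m\mid\bar y_n,\mu\sim\mathcal N(\mu_B,\kappa_m)$ with $\tau_m = (r/m)^2(\sigma^2/r) = r\sigma^2/m^2$ and $\kappa_m = (r/(m+z))^2(\sigma^2/r) = w_m^2\tau_m$, matching the lemma's scale parameters.

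The second step is purely algebraic: identify the conditional means and rearrange them. Substituting $\mu_n = (1-w_n)\mu_0 + w_n\bar y_n$ into $\mu_A = E[A_m\mid\bar y_n,\mu] = \mu_n - \tfrac nm\bar y_n - \tfrac rm\mu$ and grouping the result around $(\bar y_n - \mu)$ and $(\mu - \mu_0)$, using $n/m + r/m = 1$ and $w_u = u/(u+z)$, yields $\mu_A = -\bigl[(\tfrac{z}{n+z} - \tfrac rm)(\bar y_n - \mu) + (1-w_n)(\mu - \mu_0)\bigr]$; writing $\mu_m = (1-w_m)\mu_0 + w_m\bar x_m$ with $\bar x_m = \tfrac nm\bar y_n + \tfrac rm\tilde{\bar s}_r$ and simplifying similarly gives $\mu_B = E[B_m\mid\bar y_n,\mu] = \tfrac{r+z}{m+z}(\bar y_n - \mu) + (1-w_m)(\mu - \mu_0)$. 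Squaring these two expressions reproduces exactly the stated $\lambda_m = \mu_A^2$ and $\delta_m = \mu_B^2$ (the overall sign of $\mu_A$ being immaterial). This collecting of terms is the only place the argument requires genuine care — it is elementary, but the sign of $\mu_A$ and the bookkeeping among $w_n$, $w_m$, $r/m$, $r/(m+z)$ and $z/(n+z)$ are easy to get wrong; everything else is mechanical.

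Finally I would invoke the elementary fact that $X\sim\mathcal N(\theta,v)$ implies $X^2\sim v\,\chi_1^2(\theta^2/v)$. Applying it with $(X,v,\theta) = (A_m,\tau_m,\mu_A)$ gives $A_m^2\mid\bar y_n,\mu\sim\tau_m\chi_1^2(\lambda_m/\tau_m)$, and adding the constant $c_m^2$ yields the claimed law of $\Wm$; the same step with $(X,v,\theta) = (B_m,\kappa_m,\mu_B)$ and shift $\tilde c_m^2$ gives the law of $\Wmflip$. Conditional independence is then immediate: given $\bar y_n$ and $\mu$, $\Wm$ is a measurable function of $\bar s_r$ alone and $\Wmflip$ a measurable function of $\tilde{\bar s}_r$ alone, and $\bar s_r\perp\tilde{\bar s}_r$ conditionally on $(\bar y_n,\mu)$ by the factorisation in~(\ref{eqn:ppd2}).
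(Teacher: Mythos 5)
Your proposal is correct and follows essentially the same route as the paper's proof in Appendix B: write each squared distance as a nonrandom shift plus the square of a quantity affine in the relevant block mean $\bar s_r$ (respectively $\tilde{\bar s}_r$), use $\bar s_r\mid\bar y_n,\mu\sim\mathcal N(\mu,\sigma^2/r)$, and invoke the scaled noncentral $\chi^2_1$ representation of a squared Gaussian. If anything you are slightly more explicit than the paper, which omits the analogous $\Wmflip$ computation and the conditional-independence argument via the two independent copies in~(\ref{eqn:ppd2}); your bookkeeping of $\mu_A$, $\mu_B$, $\tau_m$, and $\kappa_m$ checks out.
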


\begin{figure}[tbph]
\includegraphics[width=0.49\textwidth,trim=0mm 10mm 0mm 10mm,clip]{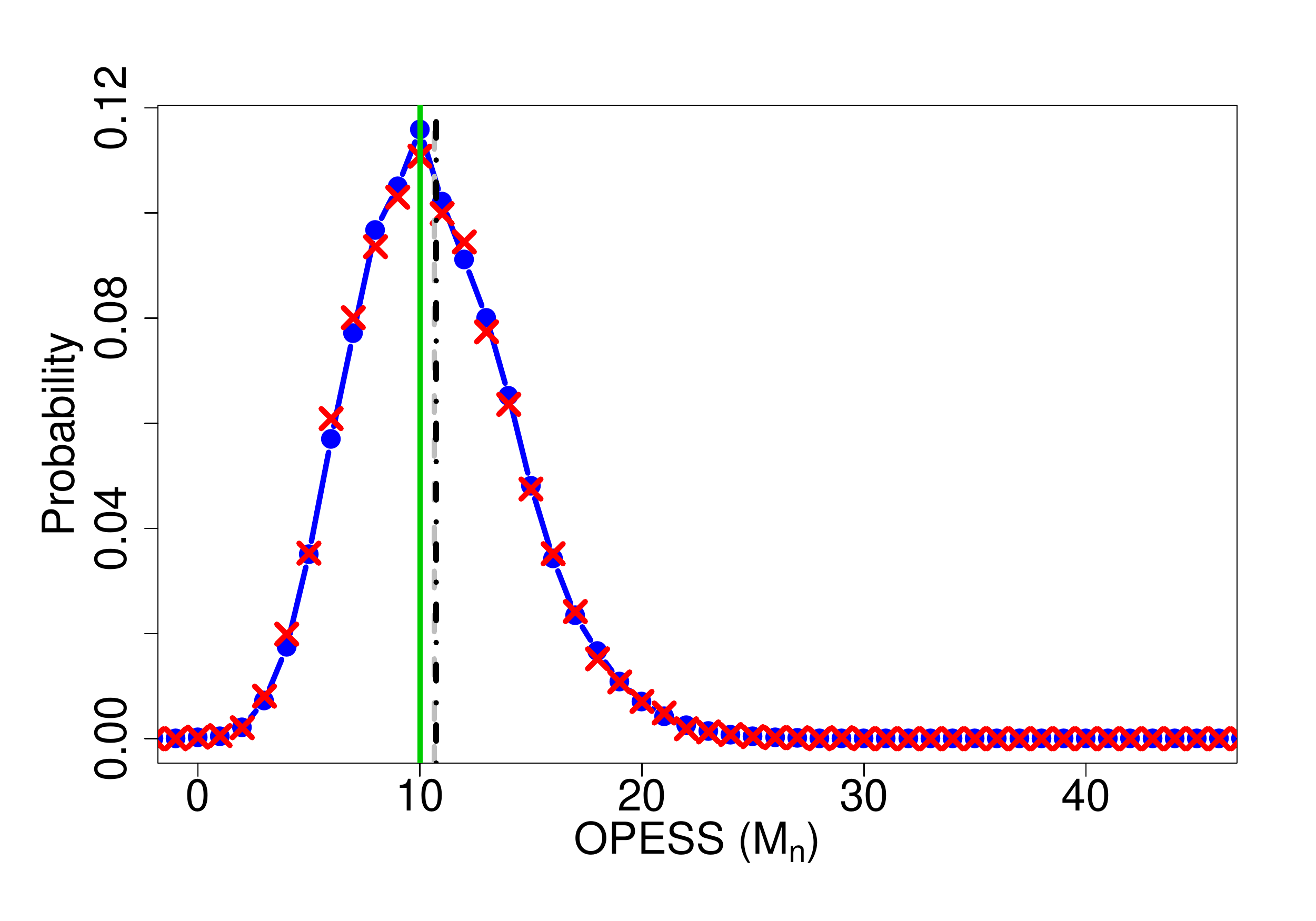}
\includegraphics[width=0.49\textwidth,trim=0mm 10mm 0mm 10mm,clip]{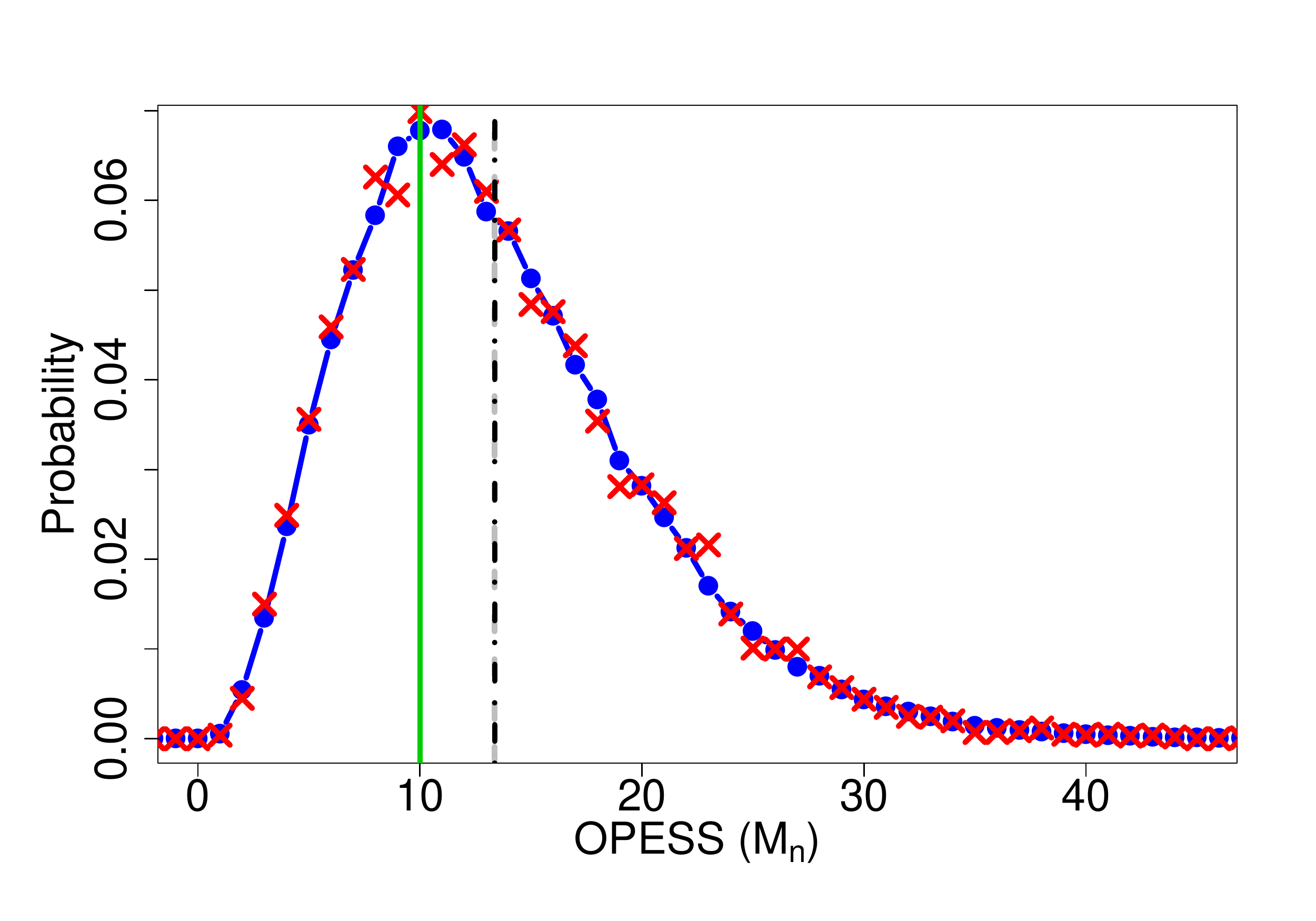}
\caption{Conditional OPESS distribution given $\bar{y}_n=\mu=0$ (left panel) and $\bar{y}_n=\mu=0.45 = 2\sigma/\sqrt{n}$ (right panel). The line-dot line shows the theoretical distribution based on  Theorem \ref{THM:EPSS_DIST} and the red crosses show the empirical distribution obtained by running Algorithm 1 Step 2 but with the specified value of $\mu$ repeatedly used for the future sample draws of Step 2(a). The solid vertical line shows the nominal EPSS (10), the dash-dot vertical line shows the mean conditional OPESS based on the theoretical distribution plotted, and the dashed vertical line shows the mean conditional  OPESS based on the empirical distribution plotted, i.e., based on Algorithm 1. In both plots the dash-dot and   dashed vertical lines closely coincide.
\label{fig:pss_dist}}
\end{figure}

Theorem \ref{THM:EPSS_DIST} below gives the posterior distribution of the OPESS conditional on $\bar{y}_n$. The proof is given in Appendix \ref{app:proof_pss_dist}. In the theorem statement, $v$ denotes a possible value of $M_n$ (e.g., in the notation $P(M_n=n|\bar{y}_n)$), and $t$ is a dummy variable for the distance corresponding to $M_n=v$, i.e., the distance $W_2(n+v)$, if $v\geq0$, and the distance $\widetilde{W}_2(n+|v|)$, otherwise. The result gives a separate expression for the case $M_n=0$ because when $m=n$ the distance between the posteriors (i.e., $W_2(n)$) is not random, meaning that an  integral over the distance dummy variable  $t$  is not required. In the case $v \in \mathbb{Z}/\{0\}$, the integrands specified are tractable because the products are truncated at   $M(t)$ and $\widetilde{M}(t)$ (defined in Appendix \ref{app:proof_pss_dist}), which  are finite for all values of $t\leq \sigma^2/(n+z)$. This truncation is possible because, for any $\epsilon>0$ and large enough $M_0$,  $W_2(m)$ and $\widetilde{W}_2(m)$ are bounded below by $\sigma^2/(n+z)-\epsilon$ and $\sigma^2/n-\epsilon$, respectively, and are less than $\sigma^2/(n+z)+\epsilon$ and $\sigma^2/n+\epsilon$, respectively, with non-negligible probability, for all $m\geq M_0$.  This means that, for any given value of $t$, either some $W_2(m)$ or $\widetilde{W}_2(m)$ is less than $t$ with probability 1 (i.e., the integrand is zero, hence the indicator functions in Theorem \ref{THM:EPSS_DIST}), or  $W_2(m)$ and $\widetilde{W}_2(m)$ are greater than $t$ with probability 1, for all $m\geq M_0$ (meaning the corresponding product terms are $1$ and can be ignored). See Appendix \ref{app:proof_pss_dist} for full details.

\begin{thm}{OPESS distribution.} \label{THM:EPSS_DIST}
Let $F_{m,\mu}$ and $\widetilde{F}_{m,\mu}$ denote the cumulative distribution function of $\chi_1^2 \left(\frac{\lambda_m}{\tau_m}\right)$ and $\chi_1^2\left(\frac{\delta_m}{\kappa_m}\right)$, respectively. Furthermore, let $h_{m,\mu}$ and $\tilde{h}_{m,\mu}$ denote the conditional probability density function of $W_2(m)=W_2\left(\pi_m^f,\pi_n^c\right)$ and $\widetilde{W}_2(m)=W_2\left( \pi_m^c,\pi_n^f\right)$, respectively, as given in Lemma \ref{lemma:con_distr_W2_gaussian}. Lastly, let $g(t,\mu,v,M,\widetilde{M})$ denote the function that gives $P(\underset{m}{\min}(W_2(m),\widetilde{W}_2(m))>t|\bar{y}_n,\mu)$ multiplied by the appropriate density for $t$, i.e.,
\begin{align*}
\left\{
  \begin{array}{cc}
\displaystyle \prod_{\substack{m=n+1\\m\neq v+n}}^{M(t)} (1 - F_{m,\mu}(t_m))\prod_{m=n+1}^{\widetilde{M}(t)} (1 - \widetilde{F}_{m,\mu}(\tilde{t}_m)) h_{v+n,\mu}(t), & \text{if } v\in \mathbb{Z}_{> 0}; \\&\\
\displaystyle \prod_{m=n+1}^{M(t)} (1 - F_{m,\mu}(t_m))\prod_{\substack{m=n+1\\m\neq |v|+n}}^{\widetilde{M}(t)} (1 - \widetilde{F}_{m,\mu}(\tilde{t}_m)) \tilde{h}_{|v|+n,\mu}(t), & \text{if } v\in \mathbb{Z}_{<0}; \\&\\
\displaystyle \prod_{m=n+1}^{M(t)} (1 - F_{m,\mu}(t_m))\prod_{m=n+1}^{\widetilde{M}(t)} (1 - \widetilde{F}_{m,\mu}(\tilde{t}_m)), & \text{if } v=0,\end{array}\right.
\end{align*}
where $t_m = (t - c_m^2)/\tau_m$ and $\tilde{t}_m = (t - \tilde{c}_m^2)/\kappa_m$, and $M$ and $\widetilde{M}$ are known functions (see Appendix \ref{app:proof_pss_dist}).
Then $P(M_n=v|\bar{y}_n)$ is given by
\begin{align*}\left\{
    \begin{array}{cc}\displaystyle\int_\mathbb{R} \int_{T_v} 1_{\{W_2(n),\widetilde{W}_2(n),\sigma^2/(n+z)\geq t\}} g(t,\mu,v,M,\widetilde{M}) dt\, \pi(\mu|\bar{y}_n)d\mu, & \text{if }v \in \mathbb{Z}/\{0\},\\&\\
    \displaystyle 1_{\{W_2(n) \leq \sigma^2/(n+z)\}}\int_\mathbb{R}g(W_2(n),\mu,0,M,\widetilde{M})\pi(\mu|\bar{y}_n)d\mu,  & \text{if } v=0, \\
    \end{array} \right.
\end{align*}
where $T_v$ is $\mathbb{R}_{\geq c_m^2}$ if $v>0$ and $\mathbb{R}_{\geq \tilde{c}_m^2}$ otherwise, and 
$M(t)$ and $\widetilde{M}(t)$ are finite integers for all values of $t\leq \sigma^2/(n+z)$.
\end{thm}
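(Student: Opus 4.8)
The plan is to condition on the sampled parameter value $\mu$, invoke the conditional independence of Lemma~\ref{lemma:con_distr_W2_gaussian}, rewrite $\{M_n=v\}$ as an order-statistic event for the two independent families of distances, and then integrate out the value of the ``winning'' distance. In the Gaussian model, Algorithm~\ref{algo:OPESS_compute} draws $\boldsymbol{x}_{\text{\tiny L}}^{\text{all}}$ by first sampling $\mu\sim\pi_n^c=\pi(\cdot\mid\bar y_n)$ and then drawing the future observations i.i.d.\ $\mathcal N(\mu,\sigma^2)$, independently across $m$ and across the two copies $\boldsymbol x^{(m)},\tilde{\boldsymbol x}^{(m)}$; see (\ref{eqn:ppd1})--(\ref{eqn:ppd2}). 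Hence $P(M_n=v\mid\bar y_n)=\int_{\mathbb R}P(M_n=v\mid\bar y_n,\mu)\,\pi(\mu\mid\bar y_n)\,d\mu$, which is the outer integral in the statement, so it suffices to evaluate $P(M_n=v\mid\bar y_n,\mu)$. By Lemma~\ref{lemma:con_distr_W2_gaussian}, conditional on $(\bar y_n,\mu)$ the variables $\{W_2(m)\}_{m=n+1}^{L}$ and $\{\widetilde W_2(m)\}_{m=n+1}^{L}$ are mutually independent with the stated shifted non-central $\chi_1^2$ laws, while $W_2(n)=\widetilde W_2(n)$ is a constant fixed by $\bar y_n$.

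Next I would translate the target event using the definitions of $S_n$ and $M_n$ in~(\ref{eqn:meaningfulM}): for $v>0$, $\{M_n=v\}$ is the event that $W_2(v+n)$ is the smallest of all $W_2(m)$, $n\le m\le L$, and is $\le$ every $\widetilde W_2(m)$; for $v<0$ it is the mirror event with $\widetilde W_2(|v|+n)$ in the winning role; and $\{M_n=0\}$ is the event that the common value $W_2(n)=\widetilde W_2(n)$ is $\le$ every other $W_2(m)$ and every $\widetilde W_2(m)$. Here one must check that the only alternative route to $M_n=0$, namely $S_n=-1$ with $\mathrm{argmin}_m\widetilde W_2(m)=n$, cannot occur, since it would force $\min_mW_2(m)\le W_2(n)=\widetilde W_2(n)=\min_m\widetilde W_2(m)$, contradicting $S_n=-1$; and that, since the $W_2(m),\widetilde W_2(m)$ with $m>n$ are absolutely continuous, ties have probability zero, so the argmin is a.s.\ unique and the relevant inequalities may be taken strict.

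For $v>0$ (the cases $v<0$ and $v=0$ being analogous, by symmetry and by the determinism of $W_2(n)$ respectively) I would then condition additionally on $W_2(v+n)=t$ and factorize, using $P(W_2(m)>t\mid\bar y_n,\mu)=1-F_{m,\mu}((t-c_m^2)/\tau_m)=1-F_{m,\mu}(t_m)$ with the convention $F_{m,\mu}\equiv0$ on $(-\infty,0)$, the analogous identity for $\widetilde W_2(m)$, and the deterministic factor $\mathbf 1_{\{W_2(n),\widetilde W_2(n)\ge t\}}$ contributed by $m=n$; integrating against $h_{v+n,\mu}(t)$ over $T_v=[c_{v+n}^2,\infty)$ gives $\int_{T_v}\mathbf 1_{\{W_2(n),\widetilde W_2(n)\ge t\}}\,g(t,\mu,v,M,\widetilde M)\,dt$, while the $v=0$ case yields $g$ evaluated at $t=W_2(n)$ times $\mathbf 1_{\{W_2(n)\le\sigma^2/(n+z)\}}$. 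It then remains to replace the products over all $m>n$ by the finite products appearing in $g$: since $c_m^2=(\sigma/\sqrt{n+z}-\sigma/\sqrt m)^2$ is eventually increasing to $\sigma^2/(n+z)$ and $\tilde c_m^2=(\sigma/\sqrt{m+z}-\sigma/\sqrt n)^2$ increases to $\sigma^2/n>\sigma^2/(n+z)$, for any $t\le\sigma^2/(n+z)$ there are finite $M(t),\widetilde M(t)$ (obtained by solving $c_m^2>t$ and $\tilde c_m^2>t$; see Appendix~\ref{app:proof_pss_dist}) beyond which $W_2(m)\ge c_m^2>t$ and $\widetilde W_2(m)\ge\tilde c_m^2>t$ almost surely, so those factors equal $1$ and drop out; whereas for $t>\sigma^2/(n+z)$ one has $t_m\to+\infty$, $F_{m,\mu}(t_m)\to1$ and the infinite product collapses, which is exactly what the indicator $\mathbf 1_{\{\sigma^2/(n+z)\ge t\}}$ in the statement records, so that only genuinely finite products survive. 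Substituting back into the outer $\mu$-integral yields the claimed formula.

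The main obstacle is the bookkeeping in the last two steps: correctly identifying which distances must exceed $t$ (including the deterministic endpoint $m=n$, excluding the spurious route to $M_n=0$, and establishing a.s.-uniqueness of the argmin), and pinning down precisely where the infinite products collapse so that the indicator functions in the statement are the exact cut-offs and the $M(t),\widetilde M(t)$ are the exact truncation points. Everything else --- the distributional inputs and the conditional independence --- is immediate from Lemma~\ref{lemma:con_distr_W2_gaussian}, and the remaining manipulations are routine properties of independent non-central $\chi^2$ variables.
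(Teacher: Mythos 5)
Your proposal is correct and follows essentially the same route as the paper's proof: condition on $\mu$ drawn from $\pi_n^c$, invoke the conditional independence and shifted non-central $\chi^2_1$ laws of Lemma \ref{LEM:DIST_DIST}, rewrite $\{M_n=v\}$ as the event that the relevant distance is the overall minimum, integrate over its value $t$ against $h_{v+n,\mu}$ (or $\tilde h_{|v|+n,\mu}$), and truncate the infinite survival-function products at $M(t)$ and $\widetilde{M}(t)$ using the deterministic lower bounds $c_m^2$ and $\tilde c_m^2$, with the indicator $1_{\{\sigma^2/(n+z)\geq t\}}$ recording where the untruncated product vanishes. The extra checks you add --- almost-sure uniqueness of the argmin from absolute continuity, and ruling out the $S_n=-1$ route to $M_n=0$ via $W_2(n)=\widetilde{W}_2(n)$ --- are points the paper leaves implicit but do not alter the argument.
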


Figure \ref{fig:pss_dist} shows two examples of the {\it conditional} posterior distribution of the OPESS given $\bar{y}_n$ {\it and} $\mu$, where $\bar{y}_n=\mu=0$ in the left panel and $\bar{y}_n=\mu=0.45=2\sigma/\sqrt{n}$ in the right panel. We plot the conditional posterior distribution to gain intuition about how the particular draw of $\mu$ from $\pi^c$ impacts the conditional distribution of the OPESS. This is important because in reality the value of $\mu$ is fixed but unknown, and it is therefore valuable to understand how the distribution of the OPESS changes when we simulate the future samples based on different fixed choices of $\mu$. In Figure \ref{fig:pss_dist}, the line-dot density is a close Monte Carlo approximation to the theoretical conditional density of the OPESS given $\bar{y}_n$ and $\mu$, and was obtained by simulating from the theoretical conditional distributions of $W_2(m)$ and $\widetilde{W}_2(m)$ and averaging the resulting values of the integrand given in Theorem \ref{THM:EPSS_DIST} (except in the case $P(M_n=0)$ for which no Monte Carlo approximation is needed). The red crosses show the empirical distribution of the OPESS obtained by directly applying the first two steps of Algorithm 1, except with the modification that $\mu$ is fixed in Step 2(a). Figure \ref{fig:pss_dist} illustrates that for $\bar{y}_n$ (and $\mu$) farther from the prior mean $\mu_0=0$ (right panel) the  conditional OPESS distribution has larger mean and is  more right-skewed. This corroborates the numerical results seen in Figure \ref{fig:prior0_cases}. For some values of $\bar{y}_n$ and $\mu$ the conditional posterior distribution of the OPESS is bi-modal, with one mode at positive values and one at negative values (not shown). For other  $\bar{y}_n$ and $\mu$, there is a mode at $M_n=0$, which is relevant to the case where the MOPESS is less than the nominal EPSS, a scenario that is discussed further in Section \ref{subsec:beta-binomial} and Appendix \ref{app:small_epss}.

\section{Further numerical examples}
\label{subsec:nonnormal_numerical}
In this section we present numerical simulation studies that mimic the
conditions in Section \ref{subsec:gaussian_numerical} but for the
Beta-Binomial  and simple linear regression models.

\subsection{Beta-Binomial model}
\label{subsec:beta-binomial}
Suppose 
$\{y_i, 1 \leq i \leq n \}$
are independent observations taking values in the set $\{0, 1\}$. The unknown parameter $\theta$ is the probability that $y_i = 1$.
We set the informative prior to be $\pi^c(\theta) \equiv \text{Beta}(\alpha, \beta)$, where $\alpha, \beta$ are known hyperparameters, and the baseline prior to be
$\pi^f(\theta)\equiv \text{Beta}(1, 1)$. As in Section $\ref{sec:gaussian}$,
$x_i^{(m)} = y_i$ for $i =1, \dots, n$, 
and $x_i^{(m)} | \, \theta \stackrel{\text{i.i.d.}}{\sim} \text{Bernoulli}(\theta)$ for $i=n+1, \dots, m$ (but since $\theta$ is unknown it is drawn from its posterior distribution when computing the MOPESS, see Algorithm \ref{algo:OPESS_compute} Step 2(a)).
Let $\pi_n^A$ and $\pi_m^A$
denote the posterior distribution using the original data $\boldsymbol{y} = (y_1,\dots,y_n)^T$ and the expanded dataset $\boldsymbol{x}^{(m)}$, respectively,  under prior $A$. Also define $(F_n^A)^{-1}$ and $(F_m^A)^{-1}$ to be the
quantile functions associated with these posterior densities. Then it can be shown that the $2$-Wasserstein distance
between $\pi_n^A$ and $\pi_m^A$ is $\left(\int_0^1 ((F_n^A)^{-1}(u) - (F_m^A)^{-1}(u))^2 du\right)^{1/2}$, see Theorem 2 in \citet{cambanis1976inequalities}. Unfortunately, this distance
cannot be expressed in closed form in the case of Beta distributions, but it can be approximated to high precision using numerical integration, which is the approach we take.

In our simulations we set $\alpha = \beta = 5$, which corresponds to a nominal EPSS of $\alpha + \beta - 2 = 8$.
The subtraction of $2$ highlights that the standard nominal EPSS is relative to the prior sample size of the flat prior Beta$(1,1)$, which is also our baseline prior $\pi^f$. We sample
$1,000$ datasets of size $n=20$ with replacement from the sex ratio dataset presented in Section 2.4 of \cite{gelman2013bayesian}. The dataset consists of the biological sexes of 980 babies born to mothers with placenta previa: 437 of the babies are female, a proportion of  0.446
of the total.

The top left panel of Figure \ref{fig:example-beta-binom} shows the MOPESS values obtained across the $1,000$ simulations. The $\overline{M}_n$ estimates have a similar  pattern as in the Gaussian conjugate model of Section \ref{subsec:gaussian_numerical},
except that $\overline{M}_n$ is less than the nominal EPSS for datasets with $\bar{y}_n = 0.5$. The top right panel of Figure \ref{fig:example-beta-binom} shows that the median of the
posterior distribution of $M_n$ is similar to the mean. It also shows that the posterior distribution is much wider for datasets with means less than the
prior mean of $0.5$. 
 This is due to the fact that the posterior distribution of $\theta$ is right skewed when $\bar{y}_n$ is less than
the prior mean, and this skewness is consequently reflected in the future observations $\boldsymbol{x}_{\text{\tiny L}}^{\text{all}}$, which are simulated conditional on a draw of $\theta$. Hypothetical future datasets 
with means that are far greater than $\bar{y}_n$ will result in positive OPESS, while future datasets with means near or smaller than $\bar{y}_n$
will result in negative OPESS. 
That the posterior skewness is reflected in the resulting MOPESS distribution is a strength of our approach, and in particular of integrating over the posterior uncertainty of $\theta$ when calculating our measure
of prior impact. 
Indeed, large spread in the posterior distribution of $M_n$ simply reflects genuine uncertainty about the number of extra samples that need to be collected and combined with the baseline posterior in order to match the posterior  under $\pi^c$. 


\begin{figure}[t]
\includegraphics[width=0.49\textwidth,trim=0mm 10mm 0mm 0mm,clip]{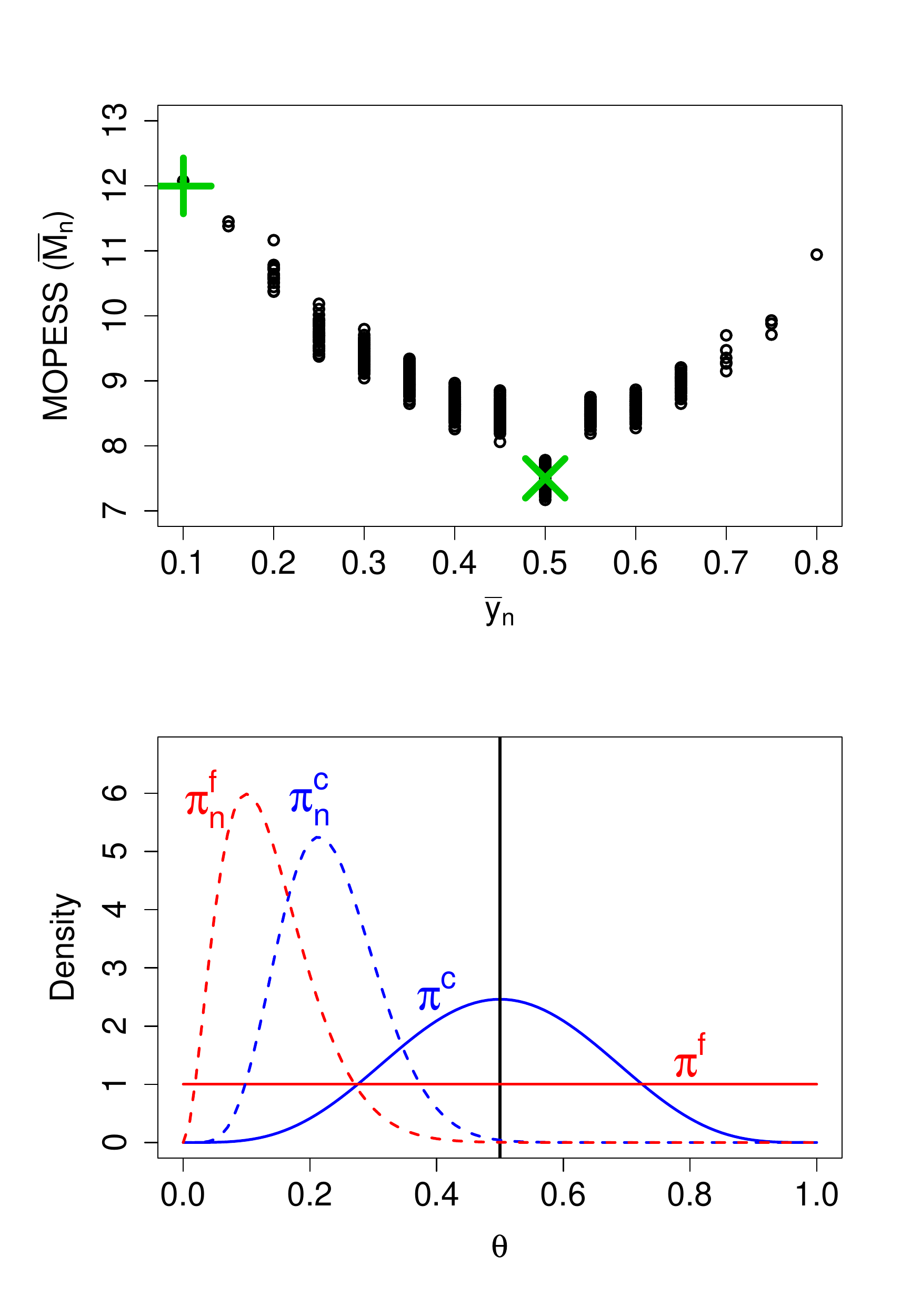}
\includegraphics[width=0.49\textwidth,trim=0mm 10mm 0mm 0mm,clip]{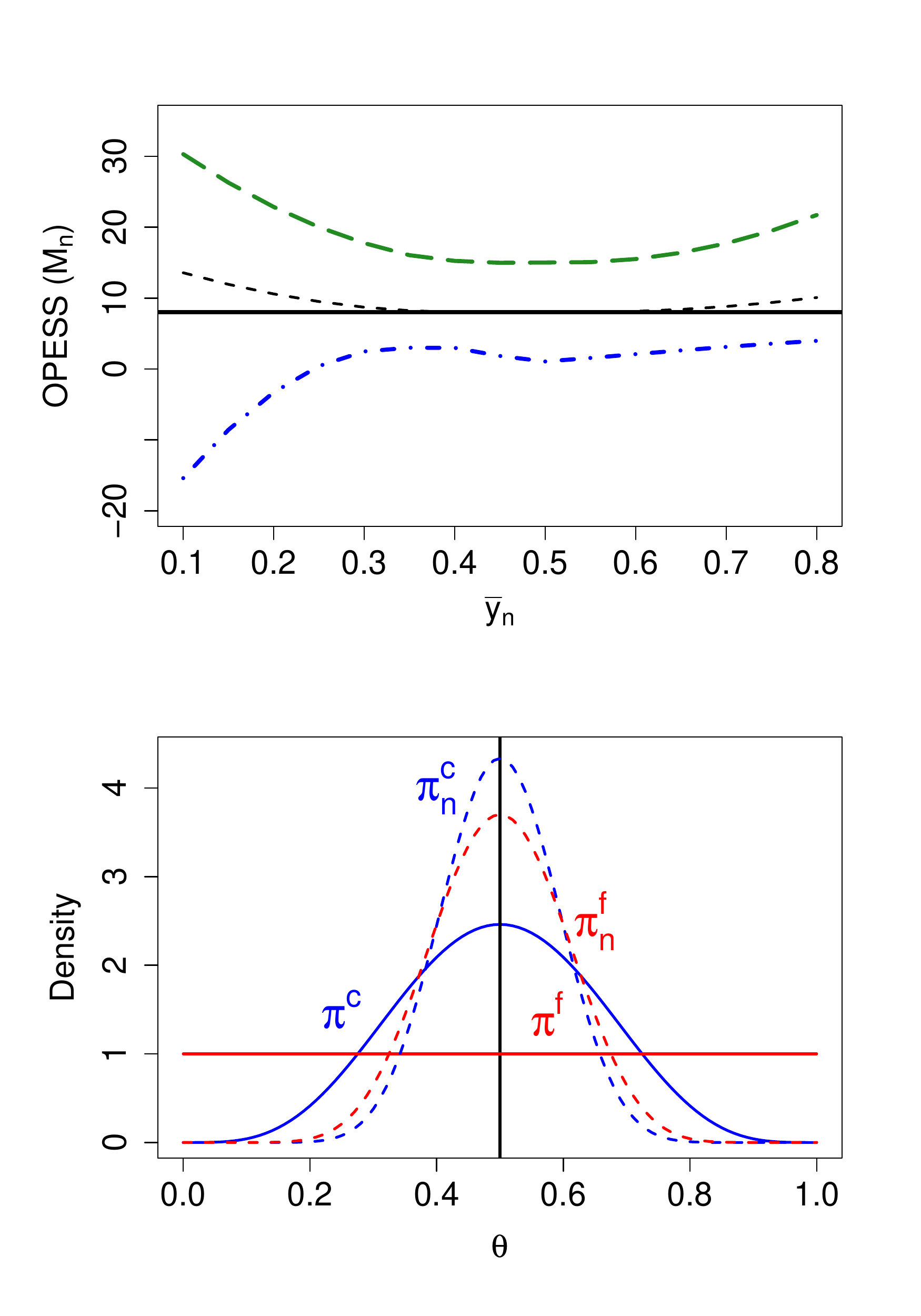}
\caption{(Top left) MOPESS ($\overline{M}_n$) as a function of observed data mean.  Each point corresponds to one of the $1,000$ simulated datasets. The 
green ``+'' and green cross symbols indicate datasets which had large and small discrepancies with the prior mean, respectively. (Top right) Quantiles of the posterior distribution of $M_n$ as a function of observed data mean including the median (dashed curve in black), 95\% quantile (long-dashed curve in green), and 5\% quantile (dash-dot curve in blue). The horizontal solid line shows the nominal EPSS of $8$. 
(Bottom left) Comparison of posteriors $\pi_n^c$ and $\pi_n^f$ with respective priors when the observed data mean is 0.1, indicated in the top left graph 
with a green ``+''. (Bottom right) Comparison of posteriors $\pi_n^c$ and $\pi_n^f$ with respective priors when the observed data mean is 0.5, indicated in the top left graph as a 
green cross.}
\label{fig:example-beta-binom}
\end{figure}

Next, we examine the relationship between the MOPESS and the nominal EPSS in two cases, namely, those where the prior mean and $\bar{y}_n$ are highly discrepant and  perfectly aligned, respectively. 
The top left panel of Figure \ref{fig:example-beta-binom} indicates a simulated dataset for which $\bar{y}_n = 0.1$ (green ``$+$''), and the bottom left panel shows the corresponding initial posterior distributions $\pi_n^c$ and $\pi_n^f$ (as well as the prior distributions). 
In this case, the MOPESS value is high (around $12$) because the initial posteriors are very different. The bottom right panel of Figure \ref{fig:example-beta-binom} show analogous plots for a dataset with $\bar{y}_n = 0.5$, as labeled with green cross on the top left panel. In this case,  the initial posteriors are very similar, which is why the MOPESS value is low (approximately $7.5$, the variation about which is Monte Carlo error). In particular, the MOPESS value is {\it less} than the nominal EPSS of $8$, a phenomenon that did not occur in the Gaussian conjugate model example of Section  \ref{subsec:gaussian_numerical} for any of the simulated datasets. The low MOPESS value occurs here due to specific circumstances that, in this case, arise due to the discreteness of the data and the future data, as we now explain. The data mean $\bar{y}_n$ exactly matches the mean of the prior $\pi^c$, which in turn makes the means of $\pi_n^c$ and $\pi_n^f$ exactly equal. Thus, $\pi_n^c$ and $\pi_n^f$ are very similar to begin with, and it is unclear whether adding more samples to one of these posteriors will further reduce the distance between them. Adding more samples to the baseline posterior $\pi_n^f$  could reduce the width of the distribution, and therefore may lead to greater agreement with $\pi_n^c$. However, the discrete nature of the data means that one additional sample with value one or zero will necessarily move the posterior mean away from $0.5$, therefore potentially increasing the 2-Wasserstein distance between the two posteriors. Of course, if we draw an even number of extra samples then their average may be close to $0.5$, so a reduction in the width of the baseline posterior $\pi_n^f$  may be achieved without any substantial change in the mean. However, based on the nominal EPSS value, the approximate number of extra samples needed for matching the posterior widths is $8$, but the probability of achieving an average of $0.5$ (or very close to this) when drawing around $8$ samples is not sufficiently high, and consequently the distance $W(n)$ is often smaller than $W_2(m)$ and $\widetilde{W}_2(m)$ for all $m > n$. Thus, for many simulations of $\boldsymbol{x}_{\text{\tiny L}}^{\text{all}}$, we have $M_n=0$, meaning that the MOPESS $\overline{M}_n$ is shrunk towards zero. 

In summary, we may expect the MOPESS to be less than the nominal EPSS when the means of the initial posteriors $\pi_n^c$ and $\pi_n^f$ are very similar relative to the size of $z$.
Furthermore, adding extra samples to one posterior  may not reduce the 2-Wasserstein distance between the two posteriors because: (i) if few extra samples are added then the variability in their mean can introduce discrepancies between the posterior means, and (ii) adding many extra samples will introduce discrepancies in the spreads since the initial discrepancy will be over-corrected. Thus, often the smallest distance between the posteriors is achieved when $M_n=0$, and consequently $\overline{M}_n$ is small. In Appendix \ref{app:small_epss} we demonstrate that this phenomenon {\it can} occur in the Gaussian conjugate model example if $n \gg z$ (whereas in Section \ref{subsec:gaussian_numerical} we set $n=2z$).  \citet{reimherr2014} discussed a related phenomenon. 

\subsection{Simple linear regression model}
\label{subsec:linearregression}

We now consider the setting of a simple linear regression model: 
\begin{align} \label{eqn:reg-model}
\begin{split}
   Y_i | \boldsymbol{\beta}, X_i = x_i & \sim \mathcal{N}(\beta_1 + \beta_2 x_i, \sigma^2),\quad X_i \sim \mathcal{N}(0, 1), 
   \end{split}
\end{align}
for $i=1,\ldots, n$, where $\sigma^2$ is known, and $\boldsymbol{\beta} = (\beta_1, \beta_2)'$ are the unknown model parameters. We note that in simple linear regression models, distributional assumptions on covariates are typically not made. We assume that $X_i \sim \mathcal{N}(0, 1)$ is known for simplicity of statements of the algorithm for generating hypothetical samples. Let our informative prior $\pi^c(\boldsymbol{\beta})$ be: 
\begin{align*}
   \pi^c(\boldsymbol{\beta}) & = \mathcal{N}\lp\boldsymbol{\eta}_0,\Sigma_0 \rp, 
\quad \text{where} \quad  \Sigma_0  = 
   \begin{bmatrix}
   \tau_1^2 & 0 \\
   0 & \tau_2^2
   \end{bmatrix},
\end{align*}
and $\boldsymbol{\eta}_0 = (\mu_0, \gamma_0)'$ and $\tau_1, \tau_2$ are known hyperparameters. Thus, the nominal EPSS for $\beta_i$ is given by $\sigma^2 / \tau_i^2 = z_i$, for $i \in [1,2]$. 
We set the baseline prior to be $\pi^f(\boldsymbol{\beta}) \propto 1$. Define the $m^{\rm th}$ set of hypothetical samples as $\{(y_i^{(m)}, x_i^{(m)}), i \in \{1,\dots,m\}\}$
with $\{(y_i^{(m)}, x_i^{(m)}) = (y_i, x_i), i \in \{1,\dots,n\}\}$ for all $m$. 
For $i > n$, the hypothetical samples $(y_i^{(m)}, x_i^{(m)})$ are generated from \eqref{eqn:reg-model} conditional on a draw of $\boldsymbol{\beta}$ from the posterior
distribution $\pi_n^c$. Given that the models that we consider here are all Gaussian conjugates,  the posteriors  $\{\pi_u^c(\boldsymbol{\beta}), \pi_u^f(\boldsymbol{\beta}), n\leq u\leq L\}$ are also Gaussian. Closed expressions for the posterior distributions and corresponding Wasserstein distances are given in Appendix~\ref{appendix:expressions_linear_regression}. Thus, it is straightforward to apply Algorithm~\ref{algo:OPESS_compute} to compute the MOPESS.

Our linear regression model simulation study is similar in design to that for the Beta-Binomial model in Section \ref{subsec:beta-binomial}. We observe $n=20$ samples from the model \eqref{eqn:reg-model},
with $\sigma^2 = 1$, and $\beta_1 = \beta_2 = 0$. We set $z_1 = z_2 = 10$, so the nominal EPSS of $\pi^c$ is 10.
\begin{figure}[t]
\includegraphics[width=0.49\textwidth,trim=0mm 0mm 0mm 0mm,clip]{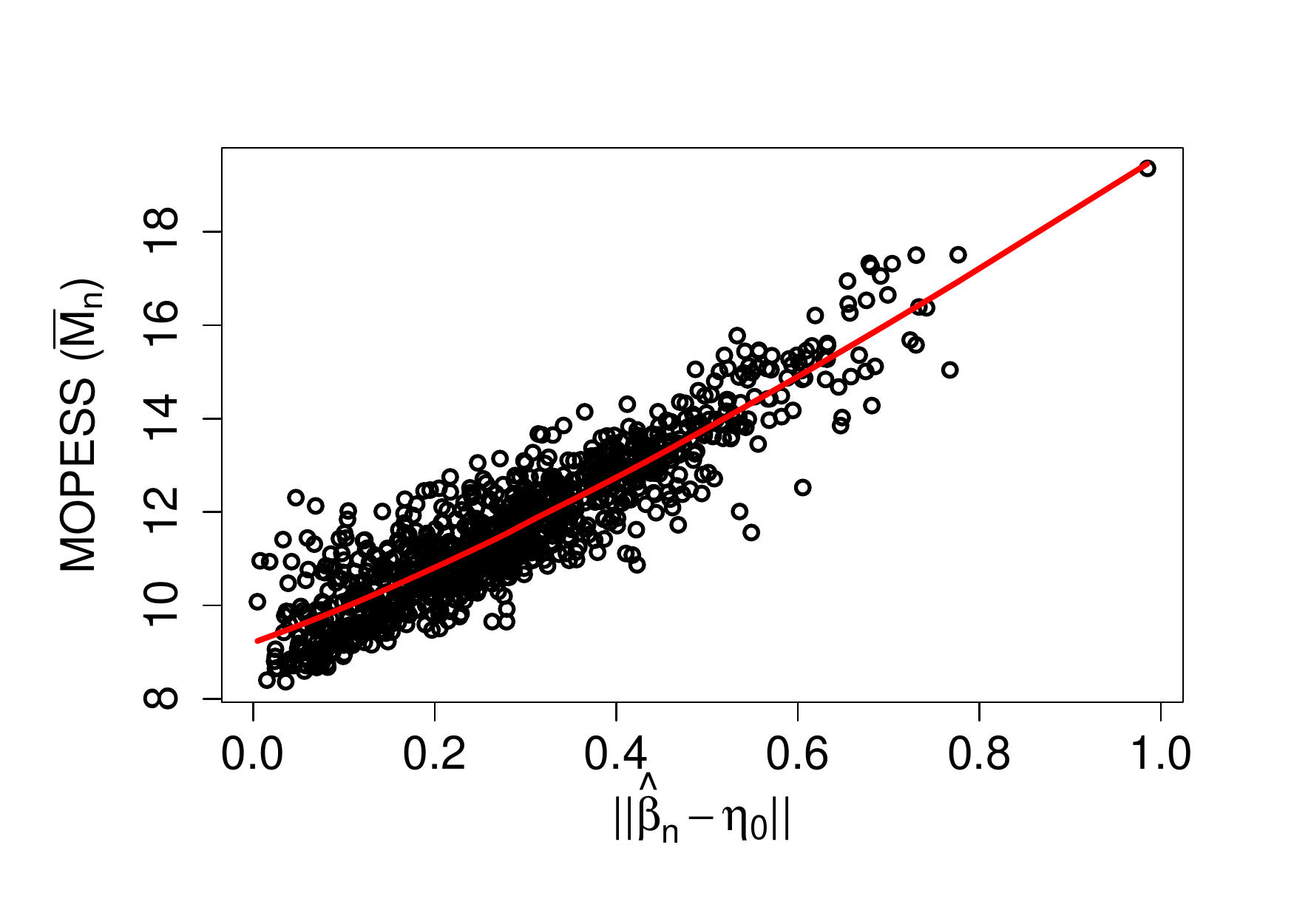}
\includegraphics[width=0.49\textwidth,trim=0mm 0mm 0mm 0mm,clip]{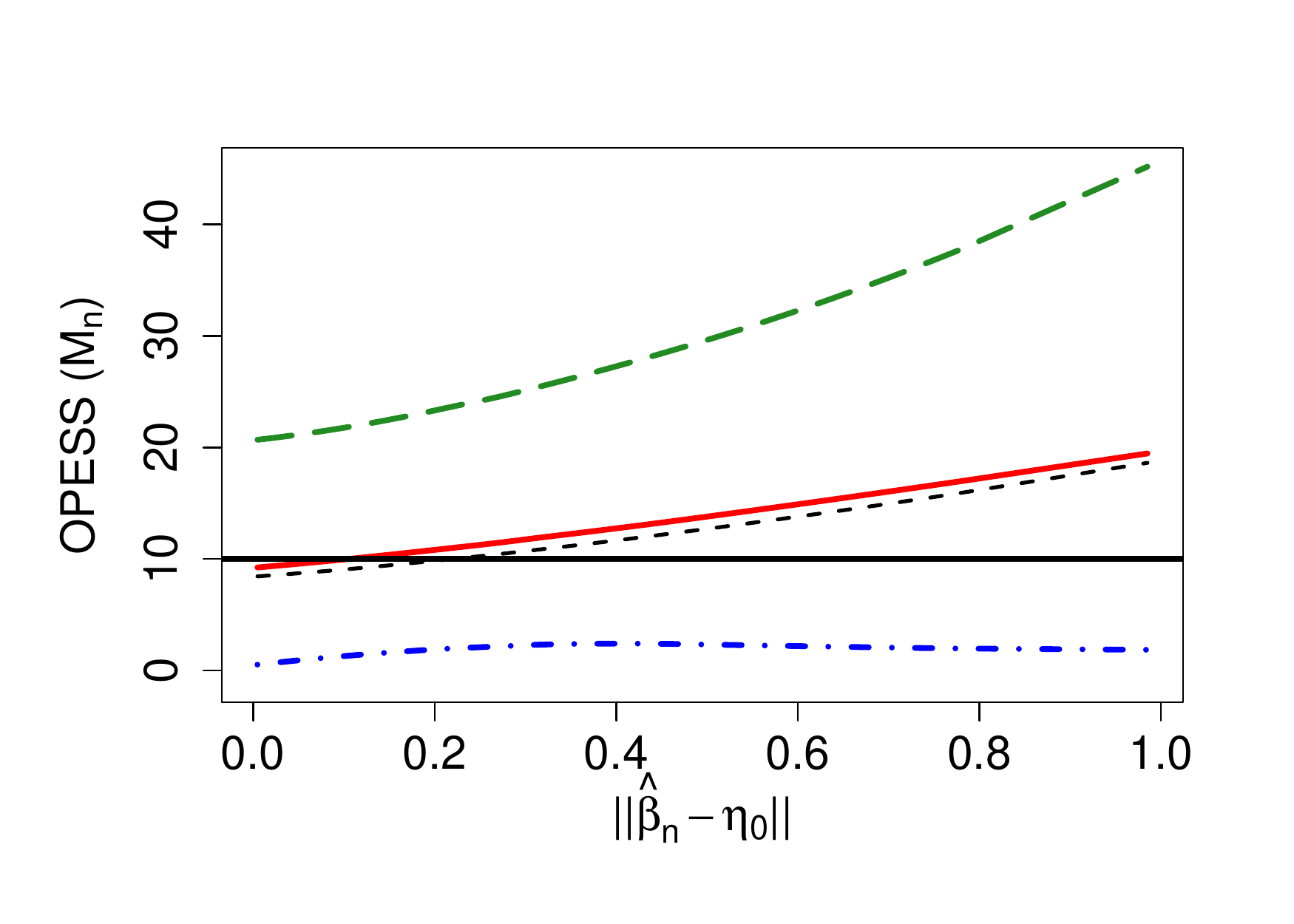}
\caption{(Left) MOPESS ($\overline{M}_n$) as a function of the $L_2$-norm of $\hat{\boldsymbol{\beta}}_n - \boldsymbol{\eta}_0$ (in this case $\boldsymbol{\eta}_0 = 0$).  Each point corresponds to one of the $1,000$ simulated datasets. (Right) Quantiles of the posterior distribution of $M_n$ as a function of the $L_2$-norm including the median (dashed curve), 95\% quantile (long-dashed curve), and 5\% quantile (dash-dot curve). The solid red line is the same as the red line in the left panel and the horizontal line shows the nominal EPSS of $10$.  }
\label{fig:regression-results}
\end{figure}
\begin{figure}[t]
\includegraphics[width=0.49\textwidth,trim=0mm 0mm 0mm 0mm,clip]{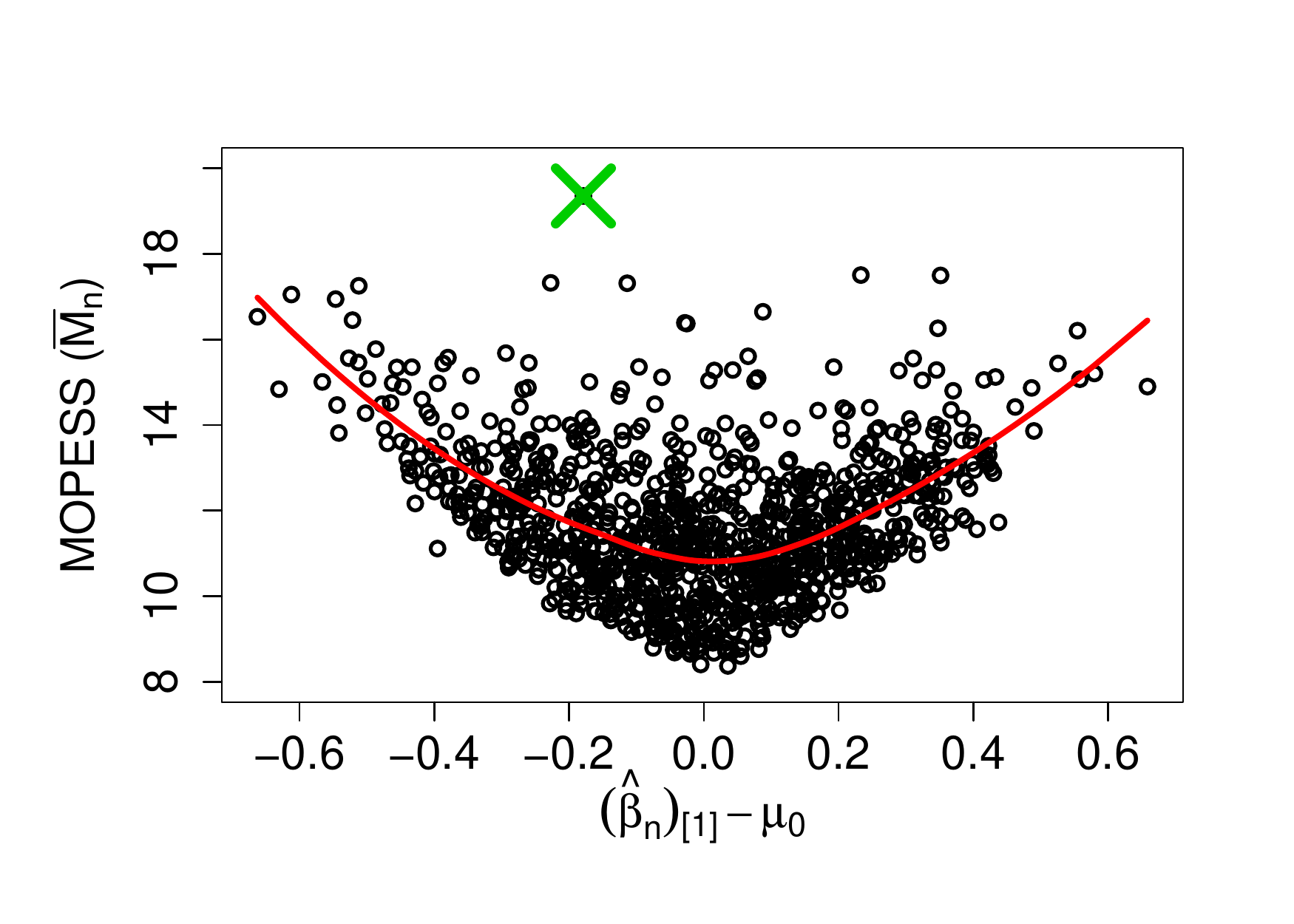}
\includegraphics[width=0.49\textwidth,trim=0mm 0mm 0mm 0mm,clip]{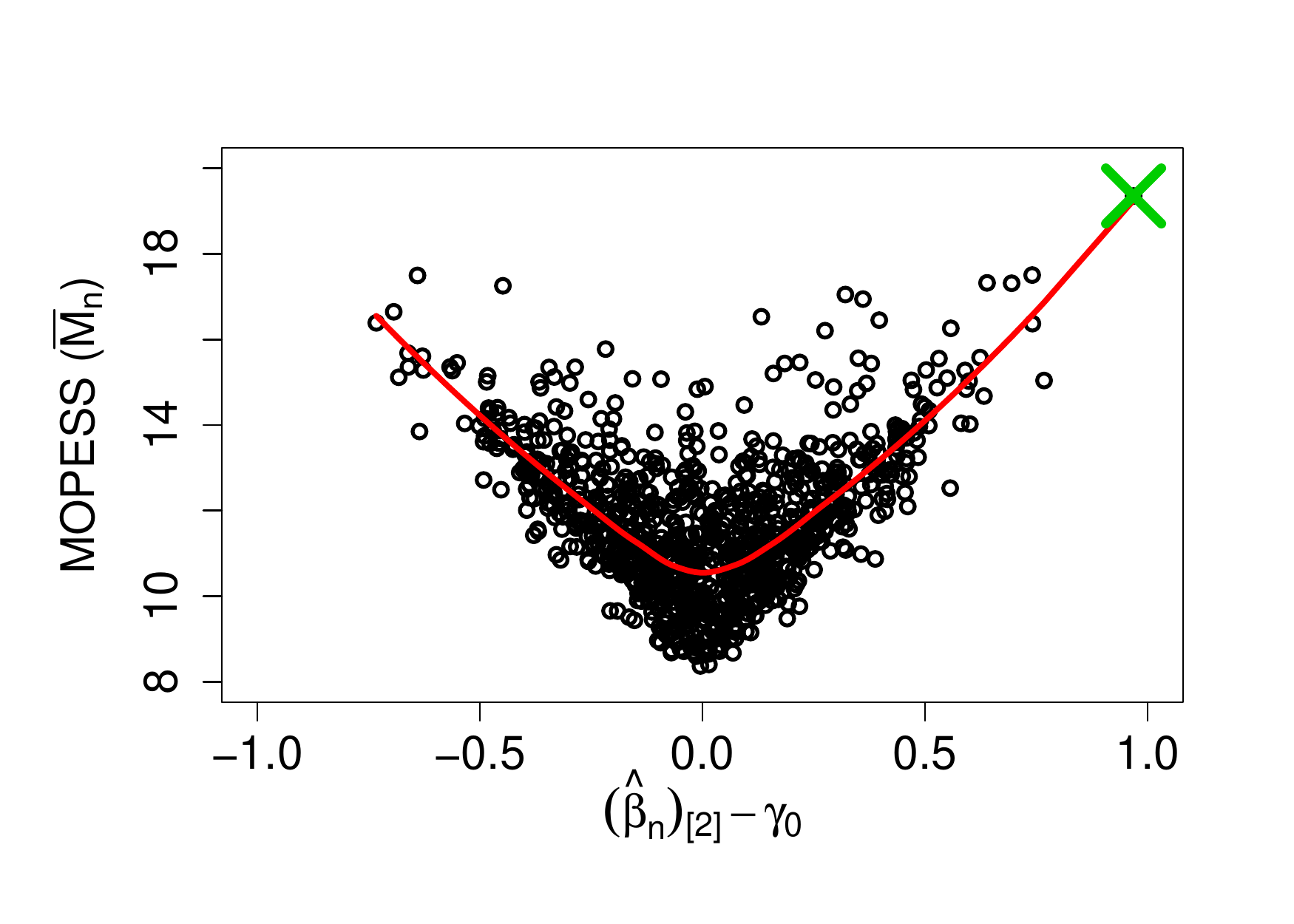}
\caption{(Left) MOPESS ($\overline{M}_n$) as a function of the of $(\hat{\boldsymbol{\beta}}_n)_{[1]} - \mu_0$ (in this case $\mu_0 = 0$). (Right) MOPESS ($\overline{M}_n$) as a function of the of $(\hat{\boldsymbol{\beta}}_n)_{[2]} - \gamma_0$ (in this case $\gamma_0 = 0$).  Each point in both graphs corresponds to one of the $1,000$ simulated datasets. The cross 
symbol indicates the maximum observed MOPESS across all $1,000$ simulation studies, $\approx 19$.}
\label{fig:marginal-regression}
\end{figure}
As can be seen in Figure \ref{fig:regression-results}, the MOPESS increases with $\norm{\hat{\boldsymbol{\beta}}_n - \boldsymbol{\eta}_0}_2$, i.e., the $L_2$-norm of the ordinary least squares estimator less the prior mean. 
We use a one-dimesional summary of the two-dimensional measure $\hat{\boldsymbol{\beta}}_n - \boldsymbol{\eta}_0$ to ease visualization, and
to account for the fact that the joint prior can be influential even if only one element of $\hat{\boldsymbol{\beta}}_n$ disagrees with the corresponding marginal prior. 

Indeed, Figure \ref{fig:marginal-regression} shows how the MOPESS can vary even when conditioning on a small interval of $(\hat{\boldsymbol{\beta}}_n)_{[1]} - \mu_0$, or
$(\hat{\boldsymbol{\beta}}_n)_{[2]} - \gamma_0$. In the left panel, we see that for values of $(\hat{\boldsymbol{\beta}}_n)_{[1]} - \mu_0$ that are
near zero, there is still quite a range of MOPESS values. Thus the MOPESS is influenced not only by $(\hat{\boldsymbol{\beta}}_n)_{[1]} - \mu_0$, but also the
disagreement between $\gamma_0$ and $(\hat{\boldsymbol{\beta}}_n)_{[2]}$. For example, the maximum MOPESS value in the left panel, indicated by the 
cross, occurs at a value of $(\hat{\boldsymbol{\beta}}_n)_{[1]} - \mu_0$ that is far from extreme. But turning to the right panel in Figure \ref{fig:marginal-regression},
which shows the MOPESS versus $(\hat{\boldsymbol{\beta}}_n)_{[2]} - \gamma_0$, we see that the maximum MOPESS occurs at the maximum observed value of 
$(\hat{\boldsymbol{\beta}}_n)_{[2]} - \gamma_0$. In summary, the MOPESS generalizes to two dimensions as we would expect it to, with joint dependence on 
$\hat{\boldsymbol{\beta}}_n - \eta_0$.

\section{Discussion and future work}\label{sec:discussion}

\subsection{Minimum distance as a measure of information relevance}

We highlight a potentially key phenomenon: for different realizations of $\boldsymbol{x}_{\text{\tiny L}}^{\text{all}}$, the minimum distance achieved is different. In other words, in some realizations of the OPESS $M_n$ and the future samples $\boldsymbol{x}_{\text{\tiny L}}^{\text{all}}$, the posteriors $\pi_n=q_{\pi}(\cdot|\bf{y})$ and $\pi_{M_n+n}^b=q_{\pi^b}(\cdot|\boldsymbol{x}^{(M_n+n)})$ (or  $\pi_{|M_n|+n}$ and $\pi_n^b$) are more similar than in other realizations. In practice, we have found that the smallest minimum distance between the posteriors is typically achieved for realizations in which $M_n$ equals the nominal EPSS. This is illustrated by Figure \ref{fig:pss_against_dist} which shows values of $M_n$ plotted against minimum distance for the Gaussian conjugate example discussed in Section \ref{sec:illustration}. Realizations that lead to large minimum distance tend to more strongly favor small and large values of $M_n$ over intermediate values.

The value of the minimum distance represents how closely it is possible to achieve the goal of matching inference under the baseline prior $\pi^b$ to our actual inference. Thus, when we report $\overline{M}_n$ we should also report the average minimum distance (or some other summary of the minimum distance distribution) because this summarizes the {\it quality} of the posterior matches and therefore represents the relevance of the reported MOPESS. In future work we will further investigate this notion of information {\it relevance} or {\it reliability}. 

\begin{figure}[t]
\centering
\includegraphics[width=0.6\textwidth,trim=0mm 10mm 0mm 17mm,clip]{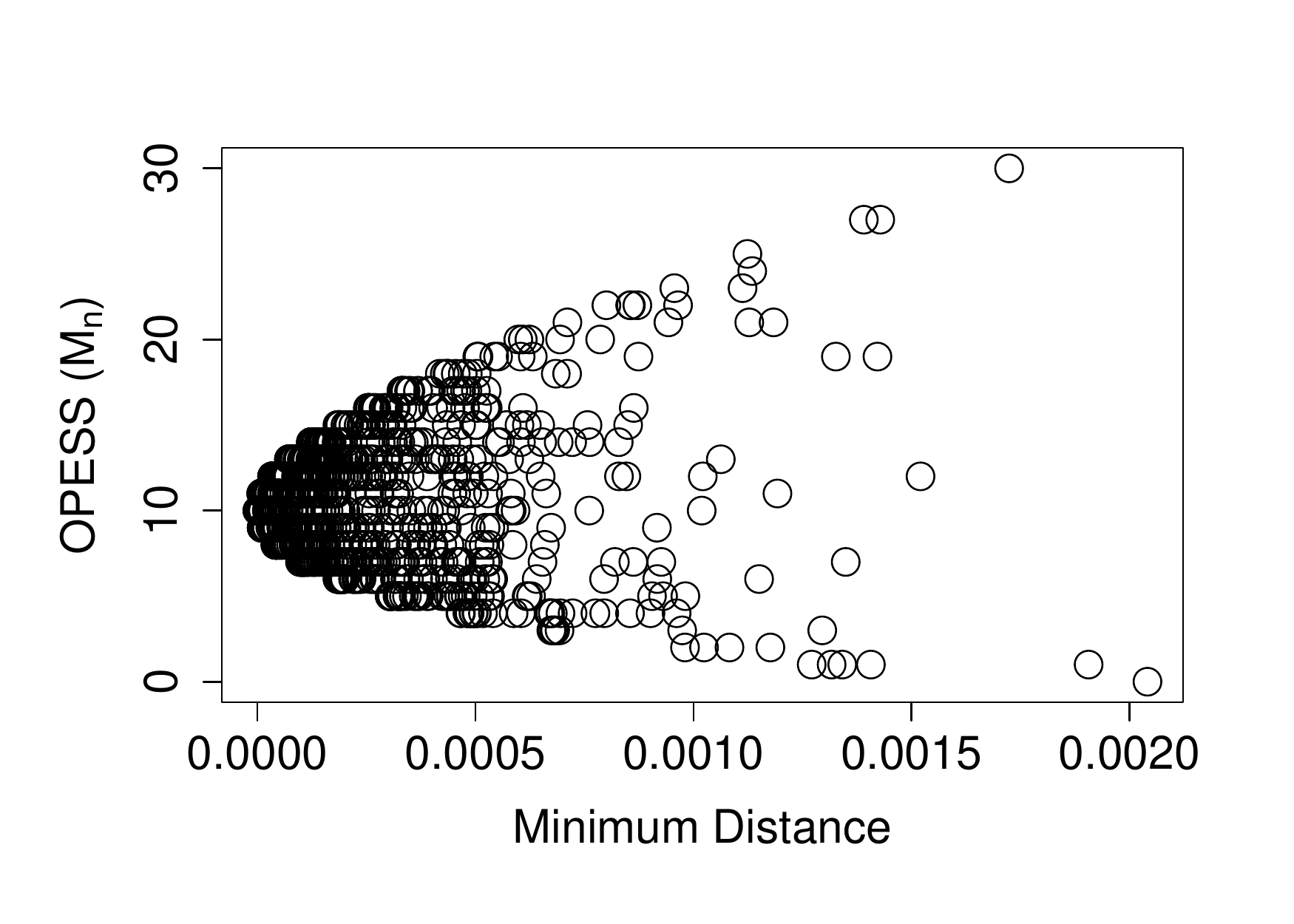}
\caption{OPESS ($M_n$) against $\text{min}(W,\widetilde{W})$, for a fixed value of $\bar{y}_n=0.057$ in the Gaussian conjugate example of Section \ref{sec:illustration}. The nominal EPSS is $10$ and $\overline{M}_n=10.47$. \label{fig:pss_against_dist}}
\end{figure}

\subsection{Low-impact priors}

Our approach allows negative MOPESS values, like that of \citet{reimherr2014}. For instance, if we set $\pi^b=\mathcal{N}(\mu_0,0.5\lambda_0^2)$ in the Gaussian example of  Section \ref{subsec:gaussian_numerical}  then $\overline{M}_n$ would usually be negative. Thus, in more complex situations our framework can be used to assess if a prior is a good choice for a low-impact prior, i.e., if it is less informative than a standard baseline prior.

\subsection{Generalizations to Non-conjugate Models}

Although the examples that we study in this paper are all conjugate models, our framework (Algorithm~\ref{algo:OPESS_compute}) is generalizable to non-conjugate models. In contrast to conjugate models for which the posterior distribution is derived analytically, for most non-conjugate models, the posterior distribution needs to be represented by posterior samples based on Bayesian computational techniques such as Markov chain Monte Carlo (MCMC) algorithms. However, the number of expensive MCMC runs needed is small because the posterior distributions considered in Algorithm~\ref{algo:OPESS_compute} (with different $m$ values) are similar, meaning that importance sampling can provide fast and reliable samples once we have samples from $q_{\pi}(\cdot|\boldsymbol{y})$ and $q_{\pi^b}(\cdot|\boldsymbol{y})$. If $q_{\pi^b}(\cdot|\boldsymbol{y})$ is also reasonably similar to $q_{\pi}(\cdot|\boldsymbol{y})$ then MCMC is only needed for sampling from the latter, and this computation would typically have already been preformed in the original analysis.

\subsection{Future Work and Applications}

Besides the generalizations to non-conjugate models as described above, we plan to extend our current framework and develop efficient computational algorithms for more complex models such as hierarchical models and models with nuisance parameters. Hierarchical models have already been discussed in the introduction and in related literature (see Section~\ref{sec:review}). The latter extension can be explained using the regression example given in Section~\ref{subsec:linearregression}. In a regression model, typically the slope is the parameter of interest and the intercept is the nuisance parameter. 
In Section~\ref{subsec:linearregression} we used our measure to quantify prior impact on the {\it joint} inference for the parameters. 
However, it is also of interest to  quantify the (joint) prior impact on inference for the slope parameter alone, i.e., ignoring any impact the prior has on inference for the intercept parameter that does not impact inference for the slope parameter. In future work we will investigate this problem, especially in cases where the parameter of interest and the nuisance parameter are correlated  either a priori or due to the data. This might offer important insights for problems where Bayesian inference, which is supposed to give inferences that form a compromise between the prior and the likelihood, gives counter-intuitive results~\citep{xie2013incorporating,chen2020geometric}. Furthermore, we plan to apply our methods to evaluate prior impacts in the astronomical meta-analysis  problem~\citep{chen2019calibration} discussed in  Section~\ref{subsec:linearregression}, in which nuisance parameters play a crucial role, and better understanding the impact they have on final inferences could prove valuable in many scientific analyses and for the implementation of multi-instrument astronomical observation.

\bibliographystyle{plainnat}
\bibliography{priorsamplesize}

\section*{Acknowledgement}

This work is supported by NSF DMS-1811083 (PI: Yang Chen, 2018 - 2021). The authors thank Prof. Xiao-Li Meng from Harvard University for helpful discussions and Dr. Vinay Kashyap from the Harvard-Smithsonian Center for Astrophysics (CfA) for collaborating on the astronomical instrument calibration problem.

\appendix

\section{Proof of Proposition \ref{THM:SAM_DIST}}\label{app:proof_prop}

In  part (a), $\gamma=\mu_n$ and from (\ref{eq:illustrationw2a}) we have
\begin{align}
    \Wm = \left(\frac{n}{m}\right)^2\left(\mu_n - \bar{y}_n\right)^2 + \left(\frac{\sigma}{\sqrt{n+z}}-\frac{\sigma}{\sqrt{m}}\right)^2.\label{eqn:app_w2postsampling}
\end{align}
Let $m^*$ denote the minimizer of $W_2(m)$ and suppose that $m^* < n+z$. For $m=n+z$ the second term of (\ref{eqn:app_w2postsampling}) is zero and the first term is smaller than for $m^*$, thus $W_2(n+z) < W_2(m^*)$, a contradiction. Therefore, $m^*\geq n+z$. (Aside: note that $m^*=n+z$ holds if $\mu_n=\bar{y}_n$, i.e., if $\mu_0=\bar{y}_n$.) 
Furthermore, referring to (\ref{eq:illustrationw2b}), for $m > n$, we have
\begin{align}
\Wmflip
=  \left(\bar{y}_n-\mu_n\right)^2 +  \left(\frac{\sigma}{\sqrt{n}} - \frac{\sigma}{\sqrt{m+z}} \right)^2 > \Wm.\label{eqn:w2flippostsampling}
\end{align}
To see this note the following. If  $m=n$ then the second term on the right-hand side of (\ref{eqn:app_w2postsampling}) is equal to the corresponding standard deviation term in (\ref{eqn:w2flippostsampling}). For $n<m\leq n+z$, the standard deviation term in (\ref{eqn:app_w2postsampling}) decreases whereas that in (\ref{eqn:w2flippostsampling}) increases.  Regarding the case $m>n+z$, $\sigma/\sqrt{n}$ is larger than $\sigma/\sqrt{n+z}$ and the value being subtracted from these terms  is smaller in (\ref{eqn:w2flippostsampling}) than in (\ref{eqn:app_w2postsampling}) ($\sigma/\sqrt{m+z}$ versus $\sigma/\sqrt{m}$), and thus again the standard deviation term is larger in (\ref{eqn:w2flippostsampling}) . Thus, for all $m > n$, the standard deviation term in (\ref{eqn:w2flippostsampling}) is larger than that in  (\ref{eqn:app_w2postsampling}). This verifies the inequality in (\ref{eqn:w2flippostsampling})
because the first term (\ref{eqn:w2flippostsampling}) is necessarily  larger than that of (\ref{eqn:app_w2postsampling}) for $m>n$. Thus, $M_n\geq z$.

  In part (b), $\gamma=\bar{y}_n$ and from (\ref{eq:illustrationw2a}) and (\ref{eq:illustrationw2b}) we have
\begin{align}
\Wm
&= \left(\mu_n -\bar{y}_n\right)^2 +  \left(\frac{\sigma}{\sqrt{m}} - \frac{\sigma}{\sqrt{n+z}} \right)^2,\label{eq:illustration_ybar1}\\
\Wmflip
&= \left(\frac{n+z}{m+z}\right)^2\left(\bar{y}_n-\mu_n\right)^2 +  \left(\frac{\sigma}{\sqrt{n}} - \frac{\sigma}{\sqrt{m+z}} \right)^2.\label{eq:illustration_ybar2}
\end{align}
Clearly (\ref{eq:illustration_ybar1}) is minimized at $m=n+z$, because in that case the second term on the right hand side is zero (and the first does not depend on $m$). The second term on the right-hand side of (\ref{eq:illustration_ybar2}) is monotonically increasing for $m\geq n$. Thus, setting $\epsilon_s$ to be any value such that $\epsilon_s^2 < \left((\sigma/\sqrt{n}) - (\sigma/\sqrt{n+z}) \right)^2$ yields the first part of result (b). The first term on the right-hand side of (\ref{eq:illustration_ybar2}) converges to zero as $m$ increases, and the second term is bounded above by $\sigma^2/n$. Thus, choosing $\epsilon_l^2>\sigma^2/n$, the second part of result (b) follows.   

In part (c), $\gamma=\mu_0$ and we have
\begin{align}
\Wm
&= \left(\mu_n -\mu_{n,m}\right)^2 +  \left(\frac{\sigma}{\sqrt{m}} - \frac{\sigma}{\sqrt{n+z}} \right)^2,\\
\Wmflip
&= \left(\bar{y}-\mu_{n,m+z}\right)^2 +  \left(\frac{\sigma}{\sqrt{n}} - \frac{\sigma}{\sqrt{m+z}} \right)^2,
\end{align}
where $\mu_{n,m}=w_{n,m}\bar{y}_n+(1-w_{n,m})\mu_0$, and $w_{n,m}=n/m$. Since $m=n+z$ gives $\Wm=0$, and $\Wmflip >0$ for all $m\geq n$, we have $M_n=z$.

\section{Proof of Lemma \ref{LEM:DIST_DIST}}\label{app:proof_con_dist_dist}

We denote the additional samples collected by $s_1,\dots,s_r$, i.e., $\{x_1,\dots,x_m\}= \{y_1,\dots,$\\ $y_n,s_1,\dots,s_r\}$, and  write $\bar{s}_r$ to denote $\frac{1}{r}\sum_{i=1}^r s_i$. Thus, we have 
\begin{align}
\bar{s}_r|\bar{y}_n,\mu\sim N\left(\mu,\frac{\sigma^2}{r}\right).\label{eqn:sr_app}
\end{align} 
From (\ref{eq:illustrationw2a}) we have
\begin{align}
\Wm
&= \left(\mu_n -\frac{n}{m}\overline{y}_n-\frac{r}{m}\overline{s}_r\right)^2 +  c_m^2\\
&= \left(w_n\bar{y}_n+(1-w_n)\mu_0 -\frac{n}{m}\overline{y}_n-\frac{r}{m}\overline{s}_r\right)^2 +  c_m^2\\
&= \left(\left(\frac{n}{n+z}-1+1-\frac{n}{m}\right)\bar{y}_n+(1-w_n)\mu_0 -\frac{r}{m}\overline{s}_r\right)^2 +  c_m^2\\
&=\left(\left(\frac{r}{m}-\frac{z}{n+z}\right)\bar{y}_n+(1-w_n)\mu_0 -\frac{r}{m}\overline{s}_r\right)^2 +  c_m^2\label{eqn:app_w2}
\end{align}
The conditional distribution of $W_2(m)$ given $\mu$ and $\bar{y}_n$ stated in Lemma \ref{LEM:DIST_DIST} then follows from the distribution of $\bar{s}_r$ (the only random quantity in (\ref{eqn:app_w2})). The proof for the conditional distribution of $\Wmflip$ is similar and is omitted. The proof of Theorem \ref{thm:dist_dist} in Appendix \ref{app:dist_dist}, which gives the distance distributions conditional on only $\bar{y}_n$, relies on analogous arguments  and is also omitted.

\section{Proof of Theorem \ref{THM:EPSS_DIST}}\label{app:proof_pss_dist}

Firstly, suppose that $v > 0$. Let $W={\rm argmin}_{m \geq n}\Wm$ and $\widetilde{W}={\rm argmin}_{m \geq n}\widetilde{W}_2(m)$. Then $P(M_n=v|\bar{y}_n) = P(W\geq W_2(v+n), \widetilde{W}\geq W_2(v+n)|\bar{y}_n)$ can be expressed as 
\begin{align}
  &\int_\mathbb{R} \int_{T_v} 1_{\{\sigma^2/(n+z)\geq t\}}\prod_{\substack{m=n\\m\neq v+n}}^{\infty} (1 - F_{m,\mu}(t_m))\prod_{m=n}^{\infty} (1 - \widetilde{F}_{m,\mu}(\tilde{t}_m)) h_{v+n,\mu}(t)dt  \pi(\mu|\bar{y})d\mu \label{eqn:infsum}\\
    =& \int_\mathbb{R} \int_{T_v}1_{\{\sigma^2/(n+z)\geq t\}} \prod_{\substack{m=n\\m\neq v+n}}^{M(t)} (1 - F_{m,\mu}(t_m))\prod_{m=n}^{\widetilde{M}(t)} (1 - \widetilde{F}_{m,\mu}(\tilde{t}_m)) h_{v+n,\mu}(t)dt  \pi(\mu|\bar{y})d\mu,\label{eqn:finsum}
    \end{align}
where, for $\sigma^2/(n+z)\geq t$,   $M(t)$ denotes the minimum value of $m \in \mathbb{Z}_{\geq n+z}$ such that
  \begin{align}\label{eqn:min_m_t}
\left(\frac{\sigma}{\sqrt{m}} - \frac{\sigma}{\sqrt{n+z}}\right)^2 > t
   \end{align}
   and  $\widetilde{M}(t)$ denotes the minimum value of $m \in \mathbb{Z}_{\geq n+1}$ such that
     \begin{align}\label{eqn:min_tilde_m_t}
\left(\frac{\sigma}{\sqrt{m+z}} - \frac{\sigma}{\sqrt{n}}\right)^2 > t.
   \end{align}
In particular,  for all $ t \leq \sigma^2/(n+z)$, $M(t)$ and $\widetilde{M}(t)$ are both finite, and for  $m>M(t)$ we have $P(W_2(m)>t|\bar{y},\mu)=0$, and similarly for $m>\widetilde{M}(t)$ we have $P(\widetilde{W}_2(m)>t|\bar{y},\mu)=0$.  This demonstrates the equality of (\ref{eqn:infsum}) and (\ref{eqn:finsum}). 
The indicator $1_{\{\sigma^2/(n+z)\geq t\}}$ is needed because  if  $\sigma^2/(n+z)< t$, then the above arguments do not hold and there exists $m^* > v+n$  such that $W_2(m^*) < t$ or $\widetilde{W}_2(m^*) < t$ with probability $1$, meaning that $P(W\geq t, \widetilde{W}\geq t|\bar{y}_n,\mu)=0$. For completeness we set $M(t)=\widetilde{M}(t)=\infty$ if  $\sigma^2/(n+z)< t$. Next,  $P(W_2(n)>t|\bar{y},\mu), P(\widetilde{W}_2(n)>t|\bar{y},\mu)\in\{0,1\}$, meaning that (\ref{eqn:finsum}) can be written as
    \begin{align}
\int_\mathbb{R} \int_{T_v} 1_{\{W_2(n),\widetilde{W}_2(n),\sigma^2/(n+z)\geq t\}} g(t,\mu,v,M(t),\widetilde{M}(t)) dt \pi(\mu|\bar{y})d\mu\label{eqn:pmf}
\end{align}
as in Theorem \ref{THM:EPSS_DIST}. The proof in the case $v < 0$ is analogous. 

Lastly, if $v=0$, then  essentially the same derivation holds except that the integral over $t$ is no longer required and $P(M_n=0|\bar{y}_n)$ simplifies to 
\begin{align}
1_{\{W_2(n) \leq \sigma^2/(n+z)\}}\int_\mathbb{R} \prod_{m=n+1}^{M(W_2(n))} (1 - F_{m,\mu}(s_m))\prod_{m=n+1}^{\widetilde{M}(W_2(n))} (1 - \widetilde{F}_{m,\mu}(\tilde{s}_m))\pi(\mu|\bar{y})d\mu.
\end{align}

\section{Conditional Distribution of Distances}\label{app:dist_dist}

\begin{thm}{Distribution of Distances.}\label{thm:dist_dist}
Using the same notations as in Lemma~\ref{lemma:gaussianposterior}, assume that $x_i \stackrel{\rm i.i.d.}{\sim}\mathcal{N}(\mu,\sigma^2)$, $i=n+1,\ldots, m$, where $\mu$ is a random sample from the posterior distribution $\pi_n^c$; then we have
\begin{align*}
    \left[W_2\left(\pi_m^f,\pi_n^c\right)\bigg|\boldsymbol{y}\right]& \sim \tau_m \chi_1^2 \left(\frac{\lambda}{m^2\tau_m}\right) + \left(\frac{\sigma}{\sqrt{n+z}} - \frac{\sigma}{\sqrt{m}}\right)^2\\
    &= \tau_m \left( \frac{\sqrt{\lambda}}{m\sqrt{\tau_m}} + \mathcal{Z} \right)^2 + \left(\frac{\sigma}{\sqrt{n+z}} - \frac{\sigma}{\sqrt{m}}\right)^2,
\end{align*}
where
\begin{align*}
\tau_m& = \frac{r^2}{m^2}\left(\frac{w_n}{n}+\frac{1}{r}\right)\sigma^2, \quad \lambda=\left(n(1-w_n)(\bar{y}_n-\mu_0)\right)^2;
\end{align*}
and
\begin{equation*}
    \left[W_2\left( \pi_m^c,\pi_n^f\right)\bigg|\boldsymbol{y}\right] \sim \kappa_m \chi_1^2\left(\frac{\delta}{\kappa_m}\right)+ \left(\frac{\sigma}{\sqrt{m+z}} - \frac{\sigma}{\sqrt{n}}\right)^2,
\end{equation*}
where
\begin{align*}
\kappa_m& = w_m^2\tau_m, \quad \delta=\frac{\lambda}{n^2}.
\end{align*}
\label{theorem:distr_W2_gaussian}
\end{thm}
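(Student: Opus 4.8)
The plan is to reduce each squared $2$-Wasserstein distance to the square of a single, conditionally Gaussian ``location'' term plus a deterministic ``scale'' term, and then invoke the elementary fact that the square of a normal variable is a scaled noncentral $\chi^2_1$. Concretely, write $r=m-n$ and $\bar s_r=\frac1r\sum_{i=n+1}^{m}x_i^{(m)}$ as in Appendix~\ref{app:proof_con_dist_dist}. Equations (\ref{eq:illustrationw2a})--(\ref{eq:illustrationw2b}) then give $W_2(m)=L_m^{2}+c_m^{2}$ and $\widetilde W_2(m)=\widetilde L_m^{2}+\tilde c_m^{2}$, where $c_m^{2}=\big(\sigma/\sqrt m-\sigma/\sqrt{n+z}\big)^{2}$ and $\tilde c_m^{2}=\big(\sigma/\sqrt n-\sigma/\sqrt{m+z}\big)^{2}$ are fixed given $\bar y_n$, while $L_m=\mu_n-\frac nm\bar y_n-\frac rm\bar s_r$ and $\widetilde L_m=\bar y_n-\frac{n+z}{m+z}\mu_n-\frac r{m+z}\bar s_r$ are affine functions of $\bar s_r$. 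So the whole statement reduces to the conditional laws of $L_m$ and $\widetilde L_m$ given $\bar y_n$.

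The second step is the conditional law of $\bar s_r$. Since $\mu$ is drawn from $\pi_n^{c}=\mathcal N(\mu_n,\sigma^{2}/(n+z))$ and then $x_i^{(m)}\mid\mu\overset{\mathrm{iid}}{\sim}\mathcal N(\mu,\sigma^{2})$, we have $\bar s_r\mid\mu\sim\mathcal N(\mu,\sigma^{2}/r)$, and marginalizing over $\mu$ gives $\bar s_r\mid\bar y_n\sim\mathcal N\!\big(\mu_n,(\tfrac1{n+z}+\tfrac1r)\sigma^{2}\big)$, i.e.\ (\ref{eqn:sr}). This is exactly where the present statement differs from Lemma~\ref{LEM:DIST_DIST}: there one conditions additionally on $\mu$ and uses $\bar s_r\mid\bar y_n,\mu\sim\mathcal N(\mu,\sigma^{2}/r)$, whereas here the extra averaging over $\mu\sim\pi_n^{c}$ is what contributes the additional $\tfrac1{n+z}$ term to the variance.

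The third step is to compute the conditional mean and variance of the location terms and assemble. Feeding $E[\bar s_r\mid\bar y_n]=\mu_n$ and $\var(\bar s_r\mid\bar y_n)=(\tfrac1{n+z}+\tfrac1r)\sigma^{2}$ through the affine maps, and using $\mu_n-\bar y_n=-(1-w_n)(\bar y_n-\mu_0)$ together with the identities $w_n/n=1/(n+z)$ and $n+z+r=m+z$, one gets (after a short calculation) $E[L_m\mid\bar y_n]=-\tfrac nm(1-w_n)(\bar y_n-\mu_0)$ with $\var(L_m\mid\bar y_n)=\tfrac{r^{2}}{m^{2}}(\tfrac1{n+z}+\tfrac1r)\sigma^{2}=\tau_m$, and $E[\widetilde L_m\mid\bar y_n]=\bar y_n-\mu_n$ with $\var(\widetilde L_m\mid\bar y_n)=\tfrac{r^{2}}{(m+z)^{2}}(\tfrac1{n+z}+\tfrac1r)\sigma^{2}=w_m^{2}\tau_m=\kappa_m$. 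Since $L_m\mid\bar y_n$ is normal, $L_m^{2}/\tau_m\mid\bar y_n\sim\chi_1^{2}\big((E[L_m\mid\bar y_n])^{2}/\tau_m\big)$, and because $(E[L_m\mid\bar y_n])^{2}=\lambda/m^{2}$ with $\lambda=(n(1-w_n)(\bar y_n-\mu_0))^{2}$ this yields $W_2(m)\mid\bar y_n\sim\tau_m\chi_1^{2}(\lambda/(m^{2}\tau_m))+c_m^{2}$; the alternative form in the statement then follows from $\chi_1^{2}(\delta')\overset{d}{=}(\sqrt{\delta'}+\mathcal Z)^{2}$. The identical argument applied to $\widetilde L_m$, now with $(E[\widetilde L_m\mid\bar y_n])^{2}=\lambda/n^{2}=\delta$, gives $\widetilde W_2(m)\mid\bar y_n\sim\kappa_m\chi_1^{2}(\delta/\kappa_m)+\tilde c_m^{2}$.

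I do not expect any genuine analytic obstacle here: the content is entirely the algebraic bookkeeping of the third step — in particular spotting the cancellation $n+z+r=m+z$, which collapses $E[\widetilde L_m\mid\bar y_n]$ to the clean form $\bar y_n-\mu_n$, and the identity $w_n/n=1/(n+z)$, which is what matches the stated $\tau_m$ — and being careful to marginalize over $\mu\sim\pi_n^{c}$ rather than conditioning on it, so that the variance of $\bar s_r$ (and hence $\tau_m$, $\kappa_m$) carries the extra $\tfrac1{n+z}$ relative to Lemma~\ref{LEM:DIST_DIST}.
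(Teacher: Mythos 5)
Your proof is correct and follows exactly the route the paper intends: the paper omits the proof of this theorem, noting only that it is "analogous" to the argument for Lemma~\ref{LEM:DIST_DIST} in Appendix~\ref{app:proof_con_dist_dist}, and your argument is precisely that argument with the conditional law $\bar s_r\mid\bar y_n,\mu\sim\mathcal N(\mu,\sigma^2/r)$ replaced by the marginal law (\ref{eqn:sr}), $\bar s_r\mid\bar y_n\sim\mathcal N\bigl(\mu_n,(\tfrac1r+\tfrac1{n+z})\sigma^2\bigr)$. The algebra (the cancellation $n+z+r=m+z$, the identity $w_n/n=1/(n+z)$, and the match of $(E[L_m\mid\bar y_n])^2$ and $(E[\widetilde L_m\mid\bar y_n])^2$ with $\lambda/m^2$ and $\delta=\lambda/n^2$) all checks out.
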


\section{Small MOPESS scenario in the conjugate Gaussian model example}\label{app:small_epss}


Consider the Gaussian conjugate model of Section \ref{subsec:gaussian_numerical}. Figure \ref{fig:small-epss}
shows a scenario in which we the  MOPESS is less than the nominal EPSS for a number of simulated datasets. The left-hand side of Figure \ref{fig:small-epss} shows
$\overline{M}_n$ versus $\bar{y}_n$ for a small numerical simulation study with nearly the same settings as 
in section \ref{subsec:gaussian_numerical}, but with $z=4$ instead of $z=10$. We can see that for $\left|\bar{y}_n\right| \leq 0.2$ the MOPESS is estimated to be less than
4, modulo Monte Carlo approximation error. 
A histogram of the OPESS is shown in the right-hand plot of \ref{fig:small-epss} for a single simulated dataset with $\bar{y}_n = 1.8\times10^{-5}$ that appears as a single point
on the left-hand graph. A notable feature of the right-hand plot is that there are no negative OPESS realizations,
and there is a preponderance of OPESS realizations of zero. These zeros pull the MOPESS below the nominal EPSS. Excluding posterior draws of $M_n = 0$
leads to a MOPESS of near 4 for the smallest $\left|\bar{y}_n\right|$. Thus we investigate the probability of the event that $M_n = 0$,
when there is a small discrepancy between $\mu_n$ and $\bar{y}_n$.

Suppose that $\bar{y}_n = \mu_n + \varepsilon$ where $\varepsilon$ is small. The event that $[M_n = 0]$ is the event that 
$[W_2(n+r) \geq W_2(n)]$ and $[\tilde{W}_2(n+r) \geq W_2(n)]$ for all $r > 0$. If $P(W_2(n+r) > W_2(n)\, |\, \bar{y}_n) > 0$ or
$P(W_2(n+r) \geq W_2(n)\,|\,\bar{y}_n) > 0$ for all $r$, the probability of $M_n = 0$ would be zero, but as we can see from 
Appendix \ref{app:proof_pss_dist} Equations \eqref{eqn:min_m_t} and \eqref{eqn:min_tilde_m_t} show that as $r$ (equivalently, $m$)
increases, there exists an $\hat{r}$ such that $P(W_2(n+r) \geq W_2(n)\,|\,\bar{y}_n) = 0$ for all $r$ greater than
$\hat{r}$ and an $r^\prime$ such that $P(W_2(n+r) \geq W_2(n)\,|\,\bar{y}_n) = 0$ for all $r > r^\prime$.

Let us first consider the probability $p(r) = P(W_2(n+r) < W_2(n)\,|\,\bar{y}_n)$, 
where $r=m-n$, as before. We have:
\begin{align}
 \label{eqn:small_pss1}
\begin{split}
 p(r) = P\left( \left((\mu_n - \bar{s}_r) \left(\tfrac{r}{n + r}\right) - \varepsilon \tfrac{n}{n+r}\right)^2 < \varepsilon^2 + c_n^2 - c_{n+r}^2 \;\middle|\; \bar{y}_n\right)
\end{split}
\end{align}
where $\bar{s}_r$ is the mean value of the additional hypothetical samples, defined in Section \ref{sec:sampling_dist}, and 
$c_{n+r}^2 = \left(\tfrac{1}{\sqrt{n + z}} - \tfrac{1}{\sqrt{n + r}}\right)^2$, defined in Section \ref{sec:distance}. 

The right hand side of the inequality in \eqref{eqn:small_pss1}:
\begin{align*}
   \varepsilon^2 + \left(\tfrac{1}{\sqrt{n + z}} - \tfrac{1}{\sqrt{n}}\right)^2 - \left(\tfrac{1}{\sqrt{n + z}} - \tfrac{1}{\sqrt{n + r}}\right)^2
\end{align*}
must be greater than zero for $p(r) > 0$. The greatest value for $r$ such that the inequality is greater than zero is $\hat{r}$.
Given that $P(M_n = 0\,|\,\bar{y}_n) = \prod_{r=1}^\infty (1 - p(r))$, fewer terms less than 1 in the product will increase the probability that
$M_n = 0$. This gives some intuition as to why $n \gg z$ leads to MOPESS less than nominal EPSS when $\varepsilon$ is small:
$\varepsilon^2 + c_n^2 - c_{n+r}^2 \approx 0$, leading to $p(r) \approx 0$.

The expression \eqref{eqn:small_pss1} also shows that when $z$ is moderately sized compared to $n$, we can also estimate
MOPESS to be smaller than EPSS. The key aspect of \eqref{eqn:small_pss1} is that the right hand side of the inequality depends quadratically on $\varepsilon$, while the left-hand side of the inequality depends on both a quadratic and linear function of $\varepsilon$. The consequence is that the event $[W_2(n+r) < W_2(n)]$ only occurs if $r$ is not too large {\it and} $\bar{s}_r$ happens to be sufficiently close to $\mu_n$.

As for $P(\widetilde{W}_2(n+r) < W_2(n)\,|\,\bar{y}_n)$, or $\tilde{p}(r)$, we can see immediately that $\tilde{p}(r) = 0$ for 
all $r$ for small $\varepsilon$:
\begin{align}
\tilde{p}(r) = P\left(\left(\tfrac{r}{m+z}(\mu_n - \bar{s}_r) + \varepsilon\right)^2 < c_n^2 - \tilde{c}_{n+r}^2 + \varepsilon^2  \;\middle|\; \bar{y}_n \right)
\end{align}
The expression $c_n^2 - \tilde{c}_{n+r}^2 + \varepsilon^2$ is 
$\left(\tfrac{1}{\sqrt{n}} - \tfrac{1}{\sqrt{n + r}}\right)^2 - \left(\tfrac{1}{\sqrt{n}} - \tfrac{1}{\sqrt{n + r + z}}\right)^2 + \varepsilon^2$. 
So in order for $\tilde{p}(r) > 0$, $\tilde{c}_{n+r}^2 - c_n^2 < \varepsilon^2$.
In our simulation study $n=20, z=4$, $\varepsilon^2 > 1\times10^{-4}$ in order for $\tilde{p}(1) > 0$. For the 
example shown in the histogram in the right-hand panel of Figure \ref{fig:small-epss} $\varepsilon^2 \approx 1\times10^{-10}$,
so $\tilde{p}(r) = 0 \, \forall \, r > 1$ which agrees with the paucity of negative $M_n$ realizations shown in the empirical histogram.

\begin{figure}[t]
\includegraphics[width=0.49\textwidth,trim=0mm 0mm 0mm 0mm,clip]{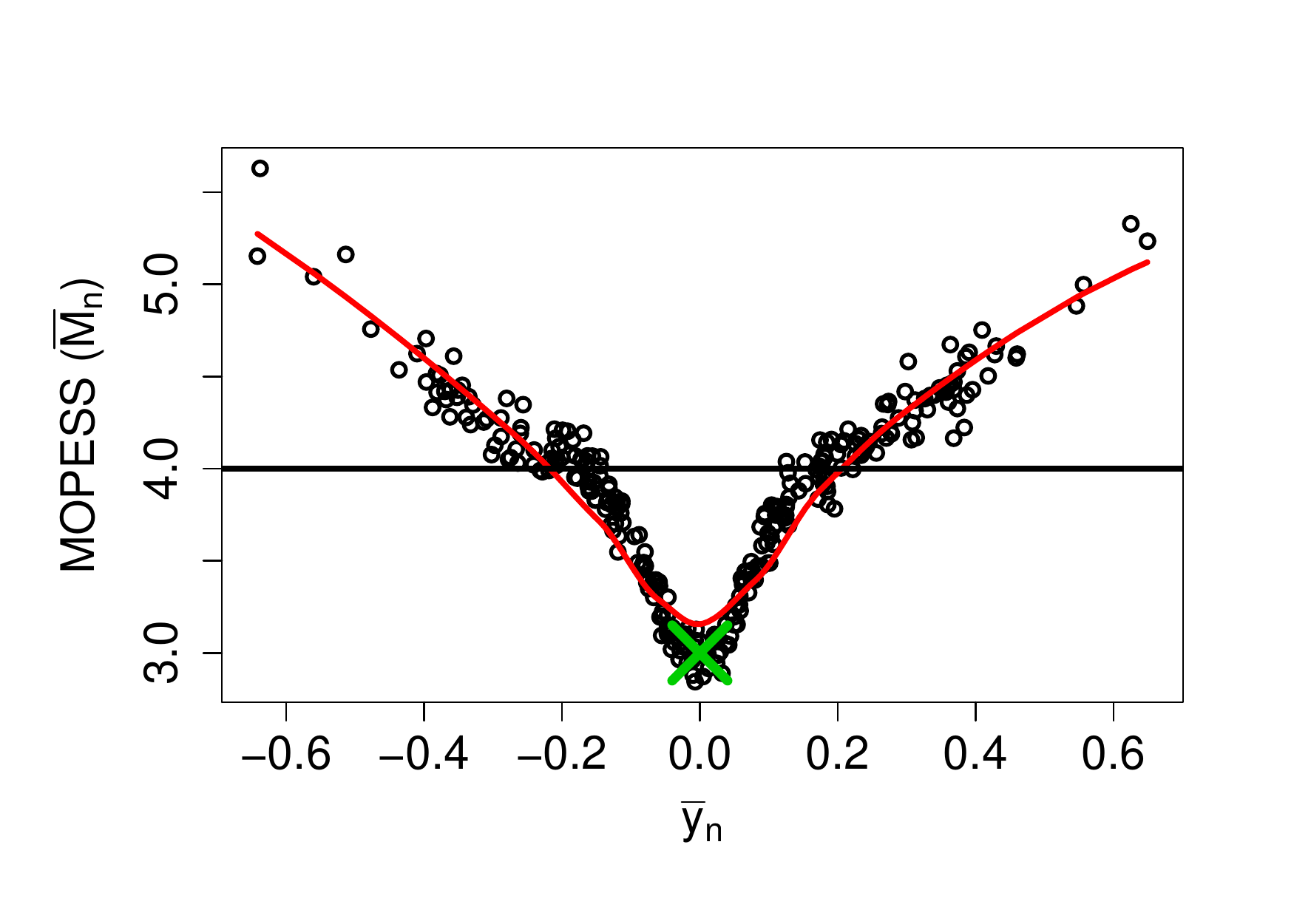}
\includegraphics[width=0.49\textwidth,trim=0mm 0mm 0mm 0mm,clip]{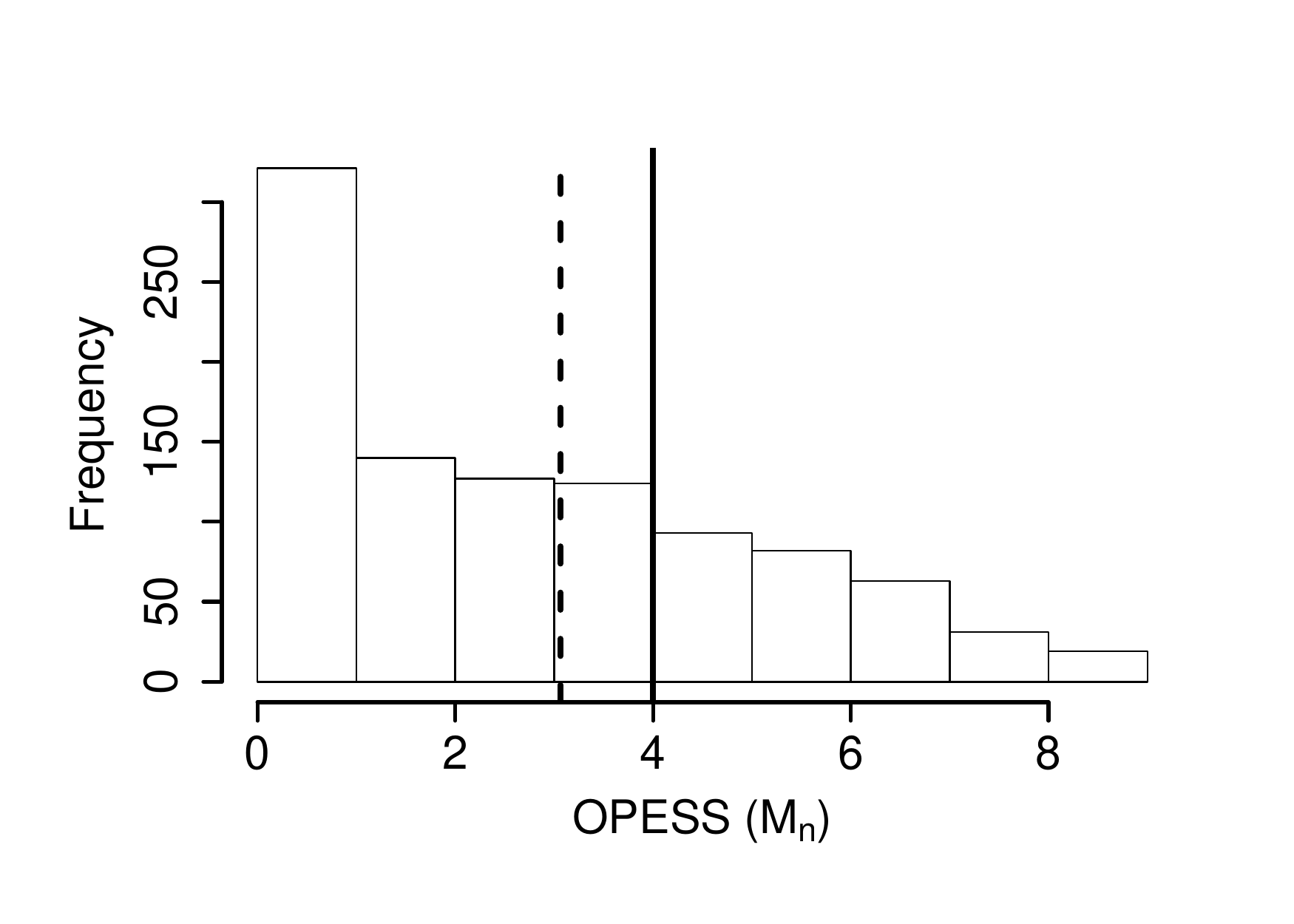}
\caption{(Left) MOPESS ($\overline{M}_n$) as a function of $\bar{y}_n$ for $n=20$ and $z=4$ for 300 bootstrap samples for the Gaussian conjugate example.  Each point corresponds to one of the 300 simulated datasets. The solid horizontal line at $y = 4$ indicates the nominal EPSS, while the red line is the loess smoothed relationship between $\overline{M}_n$ and $\bar{y}_n$. The green cross indicates a single dataset with the minimum observed $\bar{y}_n$ across all 300 simulations, whose posterior over $M_n$ is shown in the right-hand graph (Right) 
$1,000$ posterior realizations of $M_n$ for a the given simulation with $\bar{y}_n = 1.8\times10^{-5}$. The black solid line indicates the nominal EPSS of $z=4$, while the 
dashed black line indicates MOPESS of $\approx 3$.}
\label{fig:small-epss}
\end{figure}

\section{Choice of Discrepancy Measure}\label{sec:app_distance}

In this Section, we derive the relationship between the Kullback-Leibler (KL) divergence, an alternative discrepancy measure, and the Wasserstein distance for the Gaussian conjugate model. We thereby demonstrate the generality of our proposed framework in Section~\ref{sec:general_form} and point out the connections between the Wasserstein distance and the KL divergence. 

Let $\mu$ and $\nu$ be probability measures on a metric space $\mathcal{M}$. The KL divergence from $\nu$ to $\mu$, defined as ${\rm KL}(\mu, \nu) = \int \mu \log\frac{\mu}{\nu}$, quantifies the information gain if $\nu$ substitutes $\mu$. If $\nu$ is a prior distribution and $\mu$ is a posterior distribution, ${\rm KL}(\mu, \nu)$ describes the change in belief due to the data (likelihood). Consider the Gaussian conjugate model in Lemma~\ref{lemma:gaussianposterior}. The KL divergence from $\prior_m^f$ to $\prior_n^c$ is given by
\begin{equation}
{\rm KL}\left(\prior_n^c, \prior_m^f\right) = \frac{1}{2} \left\{ \frac{m}{n+z} + \frac{m}{\sigma^2} D_{m, n} - 1 + \log \frac{n+z}{m}\right\},\label{eq:KLgaussian_app}
\end{equation}
where $D_{m,n}=\left[w_n\overline{y}_n + (1-w_n) \mu_0 - \overline{x}_m\right]^2$. The $p^{\rm th}$ Wasserstein distance between $\pi_m^f$ and $\pi_n^c$ with $p=2$ is
\begin{equation}
W_2\left( \pi_m^f,\pi_n^c \right)
= D_{m, n} +  \left(\frac{\sigma}{\sqrt{m}} - \frac{\sigma}{\sqrt{n+z}} \right)^2.
\end{equation}

\subsection{Generality of Proposed Framework}
We use this Gaussian example to illustrate the generality of the framework we propose in Section~\ref{sec:general_form} for defining the prior influence by changing the discrepancy measure and the way of generating hypothetical samples. Theorem~\ref{theorem:empiricalmeasure} gives the minimizer of the expected KL divergence when the hypothetical samples are obtained by sampling the observations with replacement independently in the Gaussian conjugate model. The expectation is taken with respect to the sampling of hypothetical observations conditioning on the actual observations. The result reveals that with the discrepancy measure being the KL divergence and the generation of hypothetical samples being the simple bootstrap, minimizing the expected KL divergence can also give valid characterizations of the prior impact, which takes into account of the prior-likelihood (mis)alignment. We note that in this example, the minimization is computed after taking the expectation, which is different from the previous sections. This is only for ease of derivations of analytical forms. And the interpretation based on the analytical solution, which is not the same as the ideal solution under our rigorous definition, is intuitive.
\begin{thm}
\label{theorem:empiricalmeasure}
Assume that $x_1,\ldots, x_m$ are independently sampled with replacement from $\{y_1,\ldots, y_n\}$. Then $\mathbb{E} [{\rm KL}\left(\prior_n^c, \prior_m^f\right)]$ is minimized at 
\begin{align*}
m^* &= \left\lfloor (n+z)\left[1 + \frac{z^2(\bar{y}_n-\mu_0)^2}{(n+z)\sigma^2}\right]^{-1} \right\rceil,
\end{align*}
where $\lfloor\rceil$ denotes the nearest integer of a real number. Note that $m^*$ is less than $n$ if $(\bar{y}_n-\mu_0)^2 > \sigma^2 \left(n^{-1}+z^{-1}\right)$ and greater than $n$ otherwise. This $m^*$ is also the minimizer of $\mathbb{E} [{\rm KL}\left(\prior_n^c, \prior_m^f\right)]$ when $x_i\stackrel{\rm i.i. d.}{\sim} p_{\prior}(\cdot|\boldsymbol{y}_{1:n})$, where $\pi$ is reference prior; and when $\pi$ is conjugate prior, $m^*=n+z$. 
\end{thm}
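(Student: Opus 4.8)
The plan is to begin from the closed form \eqref{eq:KLgaussian_app} for ${\rm KL}(\pi_n^c,\pi_m^f)$ and minimise $\mathbb{E}[{\rm KL}(\pi_n^c,\pi_m^f)]$ over $m$ by taking the expectation term by term. The only term that is random under the resampling is $D_{m,n}=(\mu_n-\bar{x}_m)^2$, where $\mu_n=w_n\bar{y}_n+(1-w_n)\mu_0$ is the posterior mean of $\pi_n^c$, which is fixed once $\boldsymbol{y}$ is fixed, and $\bar{x}_m=\frac1m\sum_{i=1}^m x_i$ carries all the randomness. A bias--variance split gives $\mathbb{E}[D_{m,n}]=(\mu_n-\mathbb{E}\bar{x}_m)^2+\var(\bar{x}_m)$. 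For the simple bootstrap, $\mathbb{E}\bar{x}_m=\bar{y}_n$ and $\var(\bar{x}_m)=s^2/m$ with $s^2=\tfrac1n\sum_{j=1}^n(y_j-\bar{y}_n)^2$, and since $\mu_n-\bar{y}_n=(1-w_n)(\mu_0-\bar{y}_n)=\tfrac{z}{n+z}(\mu_0-\bar{y}_n)$,
\[
\mathbb{E}[D_{m,n}]=\frac{z^2(\bar{y}_n-\mu_0)^2}{(n+z)^2}+\frac{s^2}{m}.
\]

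The key observation is that the variance term is $O(1/m)$, so that after the multiplication by the factor $m/\sigma^2$ in \eqref{eq:KLgaussian_app} it becomes the additive constant $s^2/\sigma^2$ and plays no role in the minimisation. Dropping all $m$-independent terms,
\[
2\,\mathbb{E}[{\rm KL}(\pi_n^c,\pi_m^f)]=Am-\log m+\text{const.},\qquad A=\frac{1}{n+z}\left[1+\frac{z^2(\bar{y}_n-\mu_0)^2}{(n+z)\sigma^2}\right].
\]
Since $m\mapsto Am-\log m$ is strictly convex on $(0,\infty)$ with unique minimiser $1/A$, the integer minimiser is obtained by rounding $1/A$ to the nearest integer (the only point requiring a line of justification is this last step, comparing the objective at $\lfloor 1/A\rfloor$ and $\lceil 1/A\rceil$), which is exactly the claimed $m^*$. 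The sign statement is then a one-line computation: $m^*<n\iff 1/A<n\iff A>1/n\iff(\bar{y}_n-\mu_0)^2>\sigma^2(n^{-1}+z^{-1})$, and the reverse inequality yields $m^*>n$.

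For the remaining two assertions I would rerun the same computation, noting that changing the law of the hypothetical samples alters only $\mathbb{E}\bar{x}_m$ and $\var(\bar{x}_m)$. If $x_i\stackrel{\rm iid}{\sim}p_{\pi^f}(\cdot\mid\boldsymbol{y}_{1:n})$, the reference-prior posterior predictive has mean $\bar{y}_n$ (the same as the bootstrap) and finite variance $\sigma^2(1+1/n)$, so $\var(\bar{x}_m)$ is again $O(1/m)$ and the constant $A$, hence $m^*$, is unchanged. If instead $x_i\stackrel{\rm iid}{\sim}p_{\pi^c}(\cdot\mid\boldsymbol{y}_{1:n})$, the conjugate posterior predictive has mean exactly $\mu_n$, so the squared-bias term vanishes, $2\,\mathbb{E}[{\rm KL}]$ reduces to $\tfrac{m}{n+z}-\log m+\text{const.}$, and the minimiser is $m^*=n+z$.

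The whole argument is routine once \eqref{eq:KLgaussian_app} is in hand; the one genuinely informative step, which I would emphasise, is the cancellation noted above: because $\var(\bar{x}_m)$ always scales like $1/m$, the $\tfrac{m}{\sigma^2}D_{m,n}$ contribution to the KL effectively sees only the squared bias $(\mu_n-\mathbb{E}\bar{x}_m)^2$, which is why the bootstrap and the reference-prior posterior-predictive schemes return the same optimum while the conjugate scheme returns the nominal $n+z$. The minor technical wrinkle is the passage from the real-valued minimiser $1/A$ to the nearest-integer minimiser, which a short convexity argument handles.
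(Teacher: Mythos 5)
Your proposal is correct and follows essentially the same route as the paper's proof: starting from the closed form \eqref{eq:KLgaussian_app}, applying the bias--variance decomposition $\mathbb{E}[D_{m,n}]=(\mu_n-\mathbb{E}\bar{x}_m)^2+\mathbb{V}(\bar{x}_m)$, computing the first two moments of $\bar{x}_m$ under each of the three sampling schemes, and minimising the resulting function of $m$ (the paper differentiates in $m$ where you invoke convexity of $Am-\log m$, which is the same calculation). Your explicit remark that the $O(1/m)$ variance term cancels against the factor $m/\sigma^2$ and hence never affects the optimum is a nice way of phrasing what the paper does implicitly, but it is not a different argument.
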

From this theorem, we know that when the prior and the likelihood roughly aligns, the prior contributes a ``positive'' sample size, and vice versa.

\begin{proof}
For the Gaussian conjugate model, from the definition of $D_{m, n}$, the expected value of $D_{m, n}$ (thus the KL divergence) only depends on the first two moments of $\overline{x}_m$ through
\begin{equation*}
\mathbb{E} (D_{m, n}) =  \left[w_n\overline{y}_n + (1-w_n) \mu_0 - \mathbb{E}(\overline{x}_m)\right]^2 + \mathbb{V}(\overline{x}_m).
\end{equation*}
Now we derive the minimizer of the expected KL divergence under different sampling schemes of $\boldsymbol{x}_{1:m}$.
\begin{itemize}
\item Sampling with replacement. $x_i \stackrel{\rm i.i.d.}{\sim} \hat{F}_n(\cdot)$. \label{create_hypo_samples_bootstrap}
\begin{align*}
&\mathbb{E}(\bar{x}_m) = \bar{y}_n, \mathbb{V}(\bar{x}_m) = \frac{1}{m} S_n^2;\mathbb{E}[D_{m, n}] = (1-w_n)^2 (\bar{y}_n-\mu_0)^2+\frac{S_n^2}{m}.\\
&\frac{\partial \mathbb{E}\left[{\rm KL}\left(\prior_n^c, \prior_m^f\right)\right]}{\partial m} = \frac{1}{2}\left\{\frac{1}{n+z} + (1-w_n)^2\frac{(\bar{y}_n-\mu_0)^2}{\sigma^2} -\frac{1}{m}\right\}. 
\end{align*}
Therefore, $\mathbb{E}\left[{\rm KL}\left(\prior_n^c, \prior_m^f\right)\right]$ is minimized at $m = \left[\frac{1}{n+z}+(1-w_n)^2 \frac{(\bar{y}_n-\mu_0)^2}{\sigma^2}\right]^{-1}$.
\item Posterior predictive (i). $x_i\stackrel{\rm i.i. d.}{\sim} p_{\prior}(\cdot|\boldsymbol{y}_{1:n})$, where $\pi$ is conjugate prior.\label{create_hypo_sample_pp1}
\begin{align*}
&\mathbb{E}(\overline{x}_m) =  \mathbb{E}(\mu(\theta) | \boldsymbol{y}_{1:n}) =w_n\overline{y}_n + (1-w_n) \mu_0;\\
&\mathbb{V}(\overline{x}_m) =\frac{1}{m} \left[\mathbb{E}(\sigma^2(\theta)|\boldsymbol{y}_{1:n}) + \mathbb{V}(\mu(\theta)|\boldsymbol{y}_{1:n})\right]=\frac{\sigma^2}{m}+ \frac{1}{m} \frac{\sigma^2}{n+z}.\\
&\frac{\partial \mathbb{E}\left[{\rm KL}\left(\prior_n^c, \prior_m^f\right)\right]}{\partial m} = \frac{1}{2}\left\{\frac{1}{n+z}  -\frac{1}{m}\right\}. 
\end{align*}
Therefore, $\mathbb{E}\left[{\rm KL}\left(\prior_n^c, \prior_m^f\right)\right]$ is minimized at $m = n+z$.
\item Posterior predictive (ii). $x_i\stackrel{\rm i.i. d.}{\sim} p_{\prior}(\cdot|\boldsymbol{y}_{1:n})$, where $\pi$ is reference prior.\label{create_hypo_sample_pp2}
\begin{align*}
&\mathbb{E}(\overline{x}_m) =  \mathbb{E}(\mu(\theta) | \boldsymbol{y}_{1:n}) =\overline{y}_n;\\
&\mathbb{V}(\overline{x}_m) =\frac{1}{m} \left[\mathbb{E}(\sigma^2(\theta)|\boldsymbol{y}_{1:n}) + \mathbb{V}(\mu(\theta)|\boldsymbol{y}_{1:n})\right]=\frac{\sigma^2}{m}+ \frac{1}{m} \frac{\sigma^2}{n}.\\
&\frac{\partial \mathbb{E}\left[{\rm KL}\left(\prior_n^c, \prior_m^f\right)\right]}{\partial m} = \frac{1}{2}\left\{\frac{1}{n+z} +(1-w_n)^2\frac{(\bar{y}_n-\mu_0)^2}{\sigma^2} -\frac{1}{m}\right\}.
\end{align*}
Therefore, $\mathbb{E}\left[{\rm KL}\left(\prior_n^c, \prior_m^f\right)\right]$ is minimized at $m = \left[\frac{1}{n+z}+(1-w_n)^2 \frac{(\bar{y}_n-\mu_0)^2}{\sigma^2}\right]^{-1}$.
\end{itemize}
\end{proof}

\subsection{Comparison of Wasserstein and KL Distances}
Next, we show with the conjugate Gaussian example the relationship between the two discrepancy measures for the two posterior distributions: the KL divergence and the Wasserstein distance, in Proposition~\ref{proposition:KLandFisherinfor_Gaussian}.

\begin{proposition} From Equations~\eqref{eq:KLgaussian_app} and~\eqref{eq:W2gaussian1} in the Gaussian conjugate model, we have
\begin{equation}
\frac{{\rm KL}\left(\prior_n^c, \prior_m^f\right)}{W_2\left( \pi_m^f,\pi_n^c \right) } = \frac{1}{2}\ \frac{m}{\sigma^2}\left\{1 + \frac{\sigma^2}{m}\frac{2\frac{\sqrt{m}}{\sqrt{n+z}} - 2 + \log \frac{n+z}{m}}{W_2\left( \pi_m^f,\pi_n^c \right) }\right\} \geq \frac{I_m}{2},
\label{eqn:KL_l2_gaussian}
\end{equation}
where $I_m = \frac{m}{\sigma^2}$ is the Fisher information. (i) The equality in~\eqref{eqn:KL_l2_gaussian} holds if and only if $m=n+z$. (ii) For any fixed/finite $n$, as $m\rightarrow\infty$, the equality in~\eqref{eqn:KL_l2_gaussian} holds asymptotically. 
\label{proposition:KLandFisherinfor_Gaussian}
\end{proposition}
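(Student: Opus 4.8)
The plan is to reduce the claim to elementary algebra together with one convexity inequality. First I would use the identity $D_{m,n} = W - c_m^2$, where $W := W_2(\pi_m^f,\pi_n^c)$ and $c_m^2 := \left(\tfrac{\sigma}{\sqrt{n+z}}-\tfrac{\sigma}{\sqrt{m}}\right)^2$, which is just \eqref{eq:W2gaussian1}, to eliminate $D_{m,n}$ from the KL expression \eqref{eq:KLgaussian_app}. Expanding $\tfrac{m}{\sigma^2}c_m^2 = 1 - 2\tfrac{\sqrt m}{\sqrt{n+z}} + \tfrac{m}{n+z}$ makes the two $\tfrac{m}{n+z}$ terms cancel, leaving
\[
{\rm KL}\left(\pi_n^c,\pi_m^f\right) = \tfrac12\left\{\tfrac{m}{\sigma^2}W + 2\tfrac{\sqrt m}{\sqrt{n+z}} - 2 + \log\tfrac{n+z}{m}\right\}.
\]
Dividing by $W$ and pulling $\tfrac{m}{2\sigma^2}$ out front gives exactly the identity asserted in \eqref{eqn:KL_l2_gaussian}. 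I expect no difficulty here beyond careful bookkeeping.

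For the inequality and the equality case (i), note that $W>0$ except on the null event $\{m=n+z,\ \bar x_m=\mu_n\}$, so it suffices to show that the correction term $2\tfrac{\sqrt m}{\sqrt{n+z}} - 2 + \log\tfrac{n+z}{m}$ is nonnegative. Setting $x=\sqrt{m/(n+z)}>0$, this term equals $2(x-1-\log x)$, which is $\ge 0$ by the standard inequality $\log x \le x-1$, with equality iff $x=1$, i.e.\ iff $m=n+z$. Since this correction is multiplied by the strictly positive factor $\tfrac{\sigma^2}{m}\big/W$, the inequality in \eqref{eqn:KL_l2_gaussian} is an equality precisely when $m=n+z$ and is strict otherwise; this also identifies $I_m = m/\sigma^2$ as the exact lower bound attained there.

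For the asymptotic statement (ii), fix $n$ and let $m\to\infty$; it suffices to show the braces in \eqref{eqn:KL_l2_gaussian} converge to $1$. Because $c_m^2\to \sigma^2/(n+z)>0$, we have $W \ge c_m^2$ bounded below by a positive constant for all large $m$, while the correction $2\tfrac{\sqrt m}{\sqrt{n+z}} - 2 + \log\tfrac{n+z}{m} = O(\sqrt m)$; hence $\tfrac{\sigma^2}{m}\cdot(\text{correction})/W = O(m^{-1/2})\to 0$, so ${\rm KL}/W$ is asymptotic to $I_m/2$. The only point needing care is making precise what ``holds asymptotically'' means (the ratio of the two sides of \eqref{eqn:KL_l2_gaussian} tends to $1$) and ensuring $W$ does not degenerate, which the bound $W\ge c_m^2$ handles regardless of the behaviour of $\bar x_m$; the rest is routine.
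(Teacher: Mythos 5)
Your proposal is correct and follows essentially the same route as the paper: the identity comes from substituting $D_{m,n}=W_2-c_m^2$ into the KL formula and cancelling, the inequality and equality case reduce to $2(x-1-\log x)\geq 0$ with $x=\sqrt{m/(n+z)}$, and the asymptotic claim uses $W_2\geq c_m^2\to\sigma^2/(n+z)$ together with the $O(\sqrt{m})/m\to 0$ bound on the correction term. Your write-up is in fact slightly more explicit than the paper's about the algebra and the non-degeneracy of $W_2$.
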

\begin{proof}
The proof of inequality~\eqref{eqn:KL_l2_gaussian} is due to the fact that $2\frac{\sqrt{m}}{\sqrt{n+z}} - 2 + \log \frac{n+z}{m}\geq 0$ where the equation holds if and only if $m=n+z$, which concludes the proof for (i). Now we give the proof for (ii). Since as $m\rightarrow\infty$, we have $W_2(\pi_m^f,\pi_n^c)\geq \left(\frac{\sigma}{\sqrt{m}}-\frac{\sigma}{\sqrt{n+z}}\right)^2 \rightarrow \frac{\sigma^2}{n+z}$, and that $\left|\frac{1}{m} \left[2\frac{\sqrt{m}}{\sqrt{n+z}} - 2 + \log \frac{n+z}{m}\right]\right|=\frac{1}{m}O(\sqrt{m})\rightarrow 0$. Thus the middle part expression in \eqref{eqn:KL_l2_gaussian} is asymptotically equivalent to $\frac{m}{2\sigma^2}$.
\end{proof}
\begin{remark}
Let $\theta_1 = (w_n\overline{y}_n + (1-w_n) \mu_0, \frac{\sigma}{\sqrt{n+z}})$ and $\theta_2 = (\overline{x}_m, \frac{\sigma}{\sqrt{m}})$ be the parameters for the two Gaussian posteriors given in~\eqref{eqn:gaussianposteriors}, then $W_2 (\pi_m^f,\pi_n^c) = ||\theta_1-\theta_2||_2^2$, where $||\cdot||_2^2$ is the $L^2$ norm.
The statements given in proposition~\ref{proposition:KLandFisherinfor_Gaussian} corresponds to the general result from information theory: for a parametric family $\{p_\theta(x)\}$, we have
\begin{equation*}
\lim_{\theta'\rightarrow\theta}\frac{1}{(\theta-\theta')^2} {\rm KL} (p_\theta, p_{\theta'}) = \frac{1}{\ln 4} J(\theta),
\end{equation*}
where $J(\theta)$ is the Fisher information. See Exercise 7 on page 334 of~\citet{cover2012elements} for details.
\end{remark}

\section{Details for Regression Model in Section~\lowercase{\ref{subsec:linearregression}}}
\label{appendix:expressions_linear_regression}

Here we give the expressions for the posterior distributions $\{\pi_u^f(\boldsymbol{\beta}), \pi_u^c(\boldsymbol{\beta})\}$ and the Wasserstein distances $\{W_2(\pi_m^f, \pi_n^c), W_2(\pi_m^c),\pi_n^f\}$. 

For ease of notation, we define the following quantities, for $u \geq n$:
\begin{align*}
   \overline{x}_u & = \frac{1}{u} \sum_{i=1}^u x_i^{(u)},  \quad
   \overline{y}_u = \frac{1}{u} \sum_{i=1}^u y_i^{(u)},  \\
   \overline{x^2}_u & = \frac{1}{u} \sum_{i=1}^u ((x_i^\prime)^{(u)})^2,  \quad
   \overline{xy}_u = \frac{1}{u} \sum_{i=1}^u (x_i^\prime)^{(u)} y_i^{(u)}, \\
 \hat{\beta}_u & =
 \begin{bmatrix}
 \overline{y}_u \\
 \overline{xy}_u / \overline{x^2}_u
 \end{bmatrix}.
\end{align*}
Define $\boldsymbol{\eta}_u$ and $\Sigma_u$ as
\begin{align*}
   \boldsymbol{\eta}_u & = \begin{bmatrix}
   u + \frac{1}{\tau^2_1} & u\overline{x}_u \\
   u\overline{x}_u & u\overline{x^2}_u + \frac{1}{\tau^2_2}
   \end{bmatrix} ^ {-1}
   \begin{bmatrix}
   u\overline{y}_u + \frac{\mu_0}{\tau_1^2} \\ u\overline{x^2}_u \tfrac{\overline{xy}_u}{\overline{x^2}_u} + \frac{\gamma_0}{\tau_2^2}
   \end{bmatrix}, \\
   \Sigma_u & = 
   \sigma^{2}\lp\begin{bmatrix}
   u + \frac{1}{\tau_1^2} & u \overline{x}_u \\
   u \overline{x}_u &  u\overline{x^2}_u + \frac{1}{\tau_2^2} 
   \end{bmatrix}\rp^{-1}.
\end{align*}
Then $\pi_u^c(\boldsymbol{\beta}) = \mathcal{N}\lp\boldsymbol{\eta}_u, \Sigma_u \rp$
while $\pi_u^f(\boldsymbol{\beta}) = \mathcal{N}
       \lp \hat{\boldsymbol{\beta}}_u, \hat{\Sigma}_u\rp$, where
\begin{align*}
 \hat{\boldsymbol{\beta}}_u & = \begin{bmatrix}
   \overline{y}_u \\ \tfrac{\overline{xy}_u}{\overline{x^2}_u}
   \end{bmatrix}, \quad 
   \hat{\Sigma}_u = \sigma^{2}\begin{bmatrix}
   u^{-1} & u\overline{x}_u \\
   u\overline{x}_u &  \frac{1}{u \overline{x^2}_u}
   \end{bmatrix}.
\end{align*}
Since for multivariate Gaussians $\nu_A$, $\nu_B$ with mean vectors $\boldsymbol{\mu}_A, \boldsymbol{\mu}_B \in \mathbb{R}^d$ and covariance matrices
$\Sigma_A, \Sigma_B \in \mathbb{R}^{d\times d}$, $W_2(\nu_A, \nu_B)$ is
\begin{align*}
   \lp\vectornorm{\boldsymbol{\mu}_A - \boldsymbol{\mu}_B}_2^2 + \text{tr}\lp \Sigma_A + \Sigma_B - 2(\Sigma_B^{\frac{1}{2}} \Sigma_A \Sigma_B^{\frac{1}{2}})^{\frac{1}{2}}\rp\rp^{\frac{1}{2}},
\end{align*}
we can write $W_2(\pi_m^f, \pi_n^c)$ and $W_2(\pi_m^c, \pi_n^f)$ in terms of posterior quantities $\hat{\boldsymbol{\beta}}_u, \boldsymbol{\eta}_u, \hat{\Sigma}_u$, and $\Sigma_u$ as follows:
\begin{align*}
    W_2(\pi_m^f, \pi_n^c) & = \lp\vectornorm{\hat{\boldsymbol{\beta}}_m - \boldsymbol{\eta}_n}_2^2 + \text{tr}\lp \hat{\Sigma}_m + \Sigma_n - 2(\Sigma_n^{\frac{1}{2}} \hat{\Sigma}_m \Sigma_n^{\frac{1}{2}})^{\frac{1}{2}}\rp\rp^{\frac{1}{2}},
\end{align*}
\begin{align*}
    W_2(\pi_m^c, \pi_n^f) & = \lp\vectornorm{\boldsymbol{\eta}_m - \hat{\boldsymbol{\beta}}_n}_2^2 + \text{tr}\lp \Sigma_m + \hat{\Sigma}_n - 2(\hat{\Sigma}_n^{\frac{1}{2}} \Sigma_m \hat{\Sigma}_n^{\frac{1}{2}})^{\frac{1}{2}}\rp\rp^{\frac{1}{2}}.
\end{align*}

\end{document}